\theoremstyle{plain}
\newtheorem{theorem}{Theorem}[section]
\newtheorem{lemma}[theorem]{Lemma}
\newtheorem{proposition}[theorem]{Proposition}
\newtheorem{definition}[theorem]{Definition}
\newtheorem{assumption}[theorem]{Assumption}
\theoremstyle{remark}
\newtheorem{remark}[theorem]{Remark}
\newtheorem{example}[theorem]{Example}
\numberwithin{equation}{section}
\newcommand{\ind}{1\!\kern-1pt \mathrm{I}}
\newcommand{\rsto}{]\!\kern-1.8pt ]}
\newcommand{\lsto}{[\!\kern-1.7pt [}
\numberwithin{equation}{section}
\newcommand{\la}{\lambda}
\renewcommand{\rho}{\varrho}
\newcommand\ep{\varepsilon}
\DeclareMathOperator{\Slim}{\mathbb{S}-lim}
\begin{document}
\title[FTAP for large financial markets]{A new perspective on the fundamental theorem of asset pricing for large financial markets}
\begin{abstract}
In the context of large financial markets we formulate the notion of \emph{no asymptotic free lunch with vanishing risk} (NAFLVR),
under which we can prove a version of the fundamental theorem of asset pricing (FTAP) in markets with an (even uncountably) infinite number of assets, as it is for instance the case in bond markets. We work in the general setting of admissible portfolio wealth processes as laid down by Y.~Kabanov~\cite{kab:97} under a substantially relaxed concatenation property and adapt the FTAP proof variant obtained in~\cite{CT:14} for the classical small market situation to large financial markets. In the case of countably many assets, our setting includes the large financial market model considered by M.~De Donno et al.~\cite{DGP:05} and its abstract integration theory.

The notion of (NAFLVR) turns out to be an economically meaningful ``no arbitrage'' condition (in particular not involving weak-$*$-closures), and, (NAFLVR) is equivalent to the existence of a separating measure. Furthermore we show -- by means of a counterexample -- that the existence of an equivalent separating measure does not lead to an equivalent $\sigma$-martingale measure, even in a countable large financial market situation.

\end{abstract}
\thanks{The authors gratefully acknowledges the support from ETH-foundation.}
\keywords{Fundamental theorem of asset pricing, Large financial markets, Emery topology, (NFLVR) condition, (P-UT) property }
\subjclass[2000]{60G48, 91B70, 91G99  }

\author{Christa Cuchiero, Irene Klein and Josef Teichmann}
\address{Vienna University, Oskar-Morgenstern-Platz 1, A-1090 Vienna and ETH Z\"urich, R\"amistrasse 101, CH-8092 Z\"urich}
\maketitle

\section{Introduction}

In mathematical finance the classical market model consists of an $\mathbb{R}^d$-valued semimartingale on some filtered probability space which describes the
discounted price process of $d$ financial assets. In the terminology of the present paper, such a model is referred to as \emph{small} financial market, as opposed to
a \emph{large} financial market which corresponds to a sequence of small markets. This concept was introduced by Y.~Kabanov and D.~Kramkov~\cite{KK:94}, who consider a sequence, indexed by $n$, of $\mathbb{R}^{d(n)}$-valued semimartingales on possibly different probability spaces. For such a large financial market model
several notions and characterizations of \emph{asymptotic arbitrage} were developed in~\cite{KK:94,KS:96,K:00,KK:98}.
In particular, a version of the fundamental theorem of asset pricing  was proved in~\cite{K:00}
by adapting the notion of \emph{no free lunch} (NFL) of D.~Kreps~\cite{K:81} to the setting of large financial markets.
In a simplified framework of one fixed probability space it corresponds to (an abstract version of) the Kreps-Yan theorem~\cite{K:81,Y:80} stating the equivalence
of \emph{no asymptotic free lunch} (NAFL) and the existence of an equivalent separating measure.

The goal of the present paper is to prove a similar result, however by replacing the rather difficult to interpret notion of (NAFL)
(involving closures in the weak-$*$-topology on $L^{\infty}$, see Section~\ref{sec:NAFL}
for the precise definition) by an economically more convincing concept, which we call \emph{no asymptotic free lunch with vanishing risk} (NAFLVR) and
which is perfectly in line with the classical \emph{no free lunch with vanishing risk} (NFLVR) condition for small markets first introduced in~\cite{DS:94}.

In order to describe this concept let us consider the example of a market consisting of countably many (discounted) assets
modeled by a sequence of $\mathbb{R}$-valued semimartingales $(S^n_t)_{t \in [0,1]}$
defined on a filtered probability space $(\Omega, \mathcal{F},(\mathcal{F}_t)_{t \in [0,1]}, P)$ for the time interval $[0,1]$.
Similar as in the work by M.~De Donno et al.~\cite{DGP:05}  we define \emph{admissible generalized portfolios} in the large financial market as limits in the
Emery topology of sequences of (uniformly) admissible portfolios
in the small markets, i.e., sequences of portfolios built by trading in finitely many assets such that they are uniformly (in $\omega, t$ and $n$) bounded
from below by some constant $-\lambda$. In the terminology of~\cite{DGP:05, DP:06} these portfolios
correspond to portfolios constructed by using so-called \emph{($\lambda$)-admissible generalized strategies} (compare~\cite[Definition 2.5]{DGP:05})
and formalize the idea that each asset can contribute, possibly with an infinitesimal weight.
More precisely, when considering generalized strategies, one includes portfolios $Z$ which have the property that
for any given $\varepsilon >0$, there is a portfolio
in a small market of which all increments have a distance less than $\varepsilon$ to the increments of $Z$.
This is simply implied by the definition of the Emery metric (see~\eqref{eq:Emetric}), in which
two semimartingales (portfolios) are close if all their increments are close or in more financial terms if all investments (with bounded simple strategies) in the difference of two portfolios are small. The Emery topology can thus be seen as the natural topology on the set of portfolio wealth processes. In particular, it has a clear economic meaning to include Emery limits of portfolios in small markets in no arbitrage requirements in the large financial market, as outlined subsequently.

In perfect analogy to the notion of (NFLVR) in the classical setting of small financial markets, we define (NAFLVR) by
\[
 \overline{C} \cap L^{\infty}_{\geq 0} =\{0\},
\]
where $C$ denotes the convex cone of bounded claims superreplicable (with admissible generalized strategies) at price $0$,
i.e., $C=(K_0-L^0_{\geq 0})\cap L^{\infty}$, and $K_0$ stands for the set of terminal values of admissible generalized portfolios at time $1$.
In words, (NAFLVR) means that there should be no sequence of terminal payoffs of admissible generalized strategies such that their negative parts
tend to $0$ in $L^{\infty}$, while they converge almost surely to a nonnegative random variable
which is strictly positive with positive probability.
Also completely analogous to (NFLVR), (NAFLVR) is equivalent to two conditions, namely \emph{no unbounded profit with bounded risk} (NUPBR)
and \emph{no arbitrage for the large market} (NA).
The (NUPBR) condition means that terminal values of portfolios constructed from $1$-admissible generalized strategies are bounded in probability,
while (NA) requires that almost surely nonnegative terminal values of admissible generalized portfolios are almost surely equal to zero, i.e., $K_0 \cap L^{0}_{\geq 0} =\{0\}$.
With respect to the existing literature on large financial markets, (NUPBR) is equivalent to
\emph{no asymptotic arbitrage of the first kind} (NAA1), which describes the impossibility of getting arbitrarily rich with positive probability by taking an arbitrarily small (vanishing) risk
(see, e.g.,~\cite[Definition 1]{KK:94,KK:98} or~\cite[Definition 1.1]{KS:96}). In particular, the (NAFLVR) condition differs  from previous asymptotic arbitrage notions through the stronger (NA) requirement for the large market.
Indeed, so far (NA) was only required for each small market, but not for the portfolios obtained via generalized strategies. This strengthening allows us to obtain a version
of the fundamental theorem of asset pricing for large financial markets which was so far only available under the above described (NAFL) condition.
In the context of the exemplary  market consisting of countably many assets it reads as follows and is obtained from Theorem~\ref{th:FTAPLFM}:
\begin{theorem}
Let  $(S^n_t)_{t \in [0,1]}$  be a sequence of semimartingales on $(\Omega, \mathcal{F},(\mathcal{F}_t), P)$.
Then there exists an equivalent separating measure $Q \sim P$, i.e., a measure $Q$ such that $E_{Q}[X_1]\leq 0$ for all $X_1 \in K_0$ if and only if
(NAFLVR) is satisfied.
\end{theorem}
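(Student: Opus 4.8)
The plan is to recast (NAFLVR) as the combination of the two structural conditions (NUPBR) and (NA) for the large market, exactly as in the classical small-market proof of (NFLVR), and then deploy a Kreps--Yan-type Hahn--Banach separation argument on the convex cone $C$ of superreplicable bounded claims. First I would record that $C = (K_0 - L^0_{\geq 0}) \cap L^\infty$ is a convex cone containing $-L^\infty_{\geq 0}$; the content of the theorem is that its $L^\infty$-closure $\overline{C}$ fails to meet $L^\infty_{\geq 0}\setminus\{0\}$ precisely when a strictly positive separating functional exists. The easy direction is that existence of an equivalent separating measure $Q\sim P$ forces $\overline{C}\cap L^\infty_{\geq 0}=\{0\}$: the map $X\mapsto E_Q[X]$ is $\|\cdot\|_{L^\infty}$-continuous and nonpositive on $K_0$ (hence on $C$), so it is nonpositive on $\overline{C}$, while it is strictly positive on any nonzero element of $L^\infty_{\geq 0}$; this forces such an element to lie outside $\overline C$.

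For the hard direction I would proceed in two stages. Stage one: show that (NAFLVR) implies $\overline{C}$ is \emph{weak-$*$-closed} in $L^\infty$. This is where the theory of admissible generalized portfolios developed earlier in the paper does the real work: one must show that the set $K_0$ of terminal values arising as Emery-limits of uniformly admissible small-market portfolios, intersected appropriately, produces a cone whose closure is weak-$*$-closed. The classical route is via a Krein--\v{S}mulian argument: it suffices to show $\overline C \cap \{\|X\|_{L^\infty}\le 1\}$ is closed in probability (equivalently $L^0$), and for that one uses that (NUPBR) — equivalently boundedness in probability of terminal values of $1$-admissible generalized portfolios, i.e.\ (NAA1) — yields the compactness needed to extract, from an approximating sequence of admissible portfolio payoffs, a limit that is again dominated by an admissible generalized portfolio payoff (this is an Emery-topology / $(\mathrm{P\text{-}UT})$ closedness statement, presumably the main technical theorem proved earlier in the paper under the relaxed concatenation property). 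Here the relaxation of Kabanov's concatenation property matters, since one needs to paste the limiting strategy together without leaving the admissible class. Stage two: once $\overline C$ is weak-$*$-closed and $\overline C \cap L^\infty_{\geq 0}=\{0\}$, apply the Kreps--Yan / Hahn--Banach separation theorem: for each nonzero $f\in L^\infty_{\geq 0}$ separate $\{f\}$ from $\overline C$ by an element of $(L^\infty)^*$ which, because $\overline C\supseteq -L^\infty_{\geq 0}$, must be represented by a nonnegative $g\in L^1$; then an exhaustion argument over a countable family of such $f$'s (using that $L^1$ is a countably generated as needed, or the standard Halmos--Savage / exhaustion lemma) produces a single $g\in L^1$ with $g>0$ $P$-a.s.\ and $E[gX]\le 0$ for all $X\in C$. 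Normalizing, $dQ = g\,dP$ defines the desired equivalent separating measure, and $E_Q[X_1]\le 0$ for all $X_1\in K_0$ follows since $K_0\cap L^\infty \subseteq C$ and a monotone/truncation argument extends the inequality from bounded to general $X_1\in K_0$ (using admissibility, i.e.\ the uniform lower bound $-\lambda$, to justify the limit via Fatou).

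The main obstacle I anticipate is \emph{Stage one}: proving that $\overline C$ (equivalently the cone generated by $K_0$) is closed in the right sense. Unlike the classical small-market case, $K_0$ is defined through Emery-limits across an infinite sequence of small markets, so one cannot simply invoke the Delbaen--Schachermayer closedness result for the set of admissible payoffs of a single semimartingale. One needs a large-market analogue: that (NUPBR)/(NAA1) is exactly the condition ensuring that the relevant sets of terminal wealths, built from $1$-admissible generalized strategies, are uniformly tight and that the Emery-closure of admissible generalized portfolios is again an admissible generalized portfolio — i.e.\ that the limit strategy stays in the admissible class with the same bound $\lambda$. This closedness, together with a Fatou-type lemma adapted to the generalized integration theory (in the spirit of De Donno et al.\ and of the $\mathbb{S}$-limit / $(\mathrm{P\text{-}UT})$ machinery), is the crux; once it is in hand, the separation and exhaustion steps are routine adaptations of the classical (NFLVR)$\Leftrightarrow$ESM$\,$proof and of the Kreps--Yan theorem, and the equivalence (NAFLVR)$\Leftrightarrow$(NA)$\,\wedge\,$(NUPBR) is the standard decomposition argument as in the small-market setting.
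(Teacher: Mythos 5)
Your overall architecture matches the paper's: the easy direction via continuity of $X\mapsto E_Q[X]$, the decomposition (NAFLVR) $\Leftrightarrow$ (NA)$+$(NUPBR) (Proposition~\ref{prop:NAFLVR=NUPBRNA}), the reduction of weak-$*$-closedness of $C$ to closedness in probability of its unit ball (equivalently Fatou-closedness of $C_0$), the Kreps--Yan/exhaustion step, and the Fatou extension of $E_Q[\cdot]\leq 0$ from bounded claims to all of $K_0$ (Lemma~\ref{th:compESM}). The problem is that your Stage one, which you yourself flag as the crux, is not an argument: it is a placeholder (``presumably the main technical theorem proved earlier in the paper''), and the sketch you give of what that theorem should say is not quite the right statement. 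You ask for ``the Emery-closure of admissible generalized portfolios to be again an admissible generalized portfolio with the same bound $\lambda$''; but Emery-closedness of $\mathcal{X}_1$ is true by definition (it is defined as a closure), and the uniform bound is handled by (NA) through Lemma~\ref{Y_in_X1} (whose proof is itself nontrivial here, since the concatenation property only holds on the Emery-dense subset $\bigcup_{n\geq 1}\mathcal{X}^n_1$, see Remark~\ref{rem:concatenation}). What is genuinely at stake is different: given $-1\leq f_n\in C_0$ with $f_n\to f$ a.s., one obtains (via (NA) and (NUPBR)) a sequence $X^n\in\bigcup_{n\geq 1}\mathcal{X}^n_1$ whose terminal values converge a.s.\ to a \emph{maximal} element $h_0\geq f$ of $\widehat{K}^1_0$ and which converges \emph{uniformly in probability} to some c\`adl\`ag $X$; the missing step is to upgrade this ucp convergence to convergence in the Emery topology, because only then does Emery-closedness of $\mathcal{X}_1$ yield $h_0=X_1\in K^1_0$ and hence $f\in C_0$.

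That upgrade is the heart of the paper's proof and does not follow from (P-UT) boundedness alone: (NUPBR) gives (P-UT) (Proposition~\ref{prop:NUPBR-PUT}), and Memin--S\l{}ominski (Proposition~\ref{prop:Memin}) then gives Emery convergence of the local-martingale and large-jump parts in the decomposition~\eqref{eq:decomp}, but the predictable finite-variation parts $B^{n,C}$ a priori only converge ucp. It is the \emph{maximality} of $h_0$ that forces their Emery convergence (Proposition~\ref{prop:kabanovmax}): if $(B^{n,C})$ were not Emery-Cauchy one could, by switching between two portfolios on the predictable set where one drift dominates the other, manufacture elements $\widetilde{X}^k\in(1+\alpha_k)\bigcup_{n\geq 1}\mathcal{X}^n_1$ whose terminal values (after forward convex combinations) converge to $h_0+\eta$ with $\eta\geq 0$, $\eta\neq 0$, contradicting maximality. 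This pasting is exactly where the weakened concatenation property is used and is why it must be performed on the dense subset (Lemma~\ref{lem:concatenation} together with the stopping argument of Lemma~\ref{lem:admissibility}), not on the closure. Without this maximal-element/drift-switching mechanism (or a substitute for it), your Stage one remains an assertion rather than a proof, so the proposal as written has a genuine gap precisely at the step on which the whole hard direction rests.
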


Therefore (NAFLVR) can be seen as an economically meaningful ``no arbitrage'' condition
which allows to conclude the existence of a separating measure whose existence was for instance assumed and crucial in the
work on super-replication and utility maximization in large financial markets by M.~De Donno et al.
~\cite{DGP:05}. We do also believe that (NA) is a very reasonable requirement for large financial markets.

In order to allow for a unified treatment of different financial markets, involving for instance a continuum of assets such as in the case of bond markets, we formulate our results in
an extended version of the abstract portfolio wealth process setting introduced in~\cite{kab:97}. The main modification consists in weakening the so-called concatenation property
which in the present situation only holds on a dense set (with respect to the Emery topology) of admissible generalized portfolios in the large financial market.
Despite this weakening most parts of the original proof of the classical fundamental theorem of asset pricing by F.~Delbaen and W.~Schachermayer~\cite{DS:94}
can be transferred to this new setting. We adapt here the proof variant of~\cite{CT:14}, which replaces a series of tricky lemmas in the original proof
via a certain boundedness property in the
Emery topology, called \emph{predictable uniform tightness} (P-UT) property in the literature.

The remainder of the paper is organized as follows: in Section~\ref{sec:setting_kabanov} we introduce a large financial market setting which allows to treat both countably and uncountably
many assets. Section~\ref{sec:DeDonno} is dedicated to explain the precise relation to the large financial
market model of~\cite{DGP:05} as outlined above. The precise definition of (NAFLVR) and the main result, as well as the connection to (NAFL) are stated
in Section~\ref{sec:NAFLVRandFTAP}. Section~\ref{AA1etc} recalls further notions of ``no (asymptotic) arbitrage''
and analyzes their relations, while Section~\ref{sec:wrong}
shows why alternative portfolio wealth process sets in large financial
markets do not lead to the desired result. Section~\ref{sec:sigmamart} provides an example showing
that in large financial markets the existence of an equivalent separating measure does not necessarily yield the existence
of an equivalent $\sigma$-martingale measure. In Section~\ref{proof} we give the proof of the present version of FTAP and Appendix~\ref{sec:app}
concludes with some technical results.

\section{Setting}\label{sec:setting_kabanov}

We modify the setting of Y.~Kabanov introduced in~\cite{kab:97} in order to model large financial markets.
To be precise, let $(\Omega,\mathcal{F},(\mathcal{F}_t)_{t\in[0,1]},P)$ be a filtered probability space satisfying the usual conditions.
Let $\mathbb{S}$ be the space of all $\mathbb{R}$-valued semimartingales $X$ on this filtered probability space defined on $[0,1]$ and starting from zero. The space $\mathbb{S}$ is equipped with the Emery topology defined by the metric
\begin{align}~\label{eq:Emetric}
d_{\mathbb{S}}(X_1,X_2) := \sup_{K \in b \mathcal{E}, \, {\| K\|}_{\infty} \leq 1} E \big[ {|(K \bullet (X_1-X_2))|}^*_1 \wedge 1 \big] \ ,
\end{align}
where $|X|_1^*=\sup_{t\leq 1}|X_t|$,  $b \mathcal{E}$ denotes the set of simple predictable strategies, that is, $K$ is of the form
\[
K=\sum_{i=0}^nK_i1_{]\tau_{i},\tau_{i+1}]} \, ,
\]
with $n\in\mathbb{N}$, stopping times $0 = \tau_0 \leq \tau_1 \leq \dots \leq \tau_n \leq \tau_{n+1}=1$ and $K_i$ are $\mathcal{F}_{\tau_{i}}$-measurable random variables.
The space of semimartingales is a complete metric space with the Emery topology, which follows essentially from the Bichteler-Dellacherie Theorem, see \cite{eme:79}.
Notice that M.~Emery defines the metric via the supremum over all bounded predictable processes (see~\cite{eme:79}) and not only over all simple predictable processes. However, as shown in, e.g., \cite{mem:80}, this leads to equivalent metrics.

Beside convergence in the Emery topology we shall also deal with uniform convergence in probability, which is metrized by
\[
d(X,Y):=E[{|X-Y|}^*_1 \wedge 1 ]  \, ,
\]
and makes the space of c\`adl\`ag processes a complete metric space. Apparently uniform convergence in probability is a weaker topology than the Emery topology.

Let us now formulate a large financial market model.
Let $I\subseteq[0,\infty)$ be a parameter space which can be any subset, countable or uncountable.
Define, for each $n\geq 1$, a family $\mathcal{A}^n$ of subsets of $I$, which contain exactly $n$ elements:
\begin{equation}\label{eq:n-sets}
\mathcal{A}^n=\{\text{some/all subsets }A\subseteq I\text{, such that $|A|=n$}\},
\end{equation}
where $|A|$ denotes the cardinality of the set $A$. Moreover, we assume that if $A^1,A^2\in \bigcup_{n\geq 1}\mathcal{A}^n$, then $A^1\cup A^2\in \bigcup_{n\geq 1}\mathcal{A}^n$.

For each $A\in\bigcup_{n\geq 1}\mathcal{A}^n$ we define the following set of \emph{$1$-admissible
portfolio wealth processes} in the \emph{small financial market $A$}.
\begin{definition}\label{setting}
For each $A\in\bigcup_{n\geq 1}\mathcal{A}^n$ we consider a convex set $\mathcal{X}^A_1\subset \mathbb{S}$ of semimartingales
\begin{itemize}
\item starting at $0$,
\item bounded from below by $-1$,
\item satisfying the following \emph{concatenation property}: for all bounded, predictable strategies $ H,G \geq 0 $, $ X,Y \in \mathcal{X}^A_1 $ with $ HG=0 $ and\\ $ Z = (H \bullet X) + (G \bullet Y) \geq -1 $, it holds that $ Z \in \mathcal{X}^A_1 $.
\item For $A^1, A^2 \in\bigcup_{n\geq 1}\mathcal{A}^n$ with $A^1\subseteq A^2$ we have that $\mathcal{X}^{A^1}_1\subset \mathcal{X}^{A^2}_1$.
\end{itemize}
\end{definition}

Next we define the set
$\mathcal{X}_1^{n}$  of all $1$-admissible portfolio wealth processes with respect to  strategies that include at most $n$ assets (but all possible different choices of $n$ assets). Indeed, for each $n\geq 1$ we consider the following set $\mathcal{X}^n_1\subset \mathbb{S}$ of semimartingales
\begin{equation}\label{eq:X1-n-bonds}
\mathcal{X}_1^{n}=\bigcup_{A\in\mathcal{A}^n}\mathcal{X}_1^{A}.
\end{equation}
Note that the sets $\mathcal{X}_1^{n}$ are neither convex nor satisfy a concatenation property as in Definition~\ref{setting},
because in both cases  $2n$ assets could be involved in the combinations. Therefore the result would be in $\mathcal{X}_1^{2n}$ but not in $\mathcal{X}_1^{n}$.

The following two examples illustrate the main applications that we have in mind. When we deal with multi-dimensional semimartingales and the corresponding multi-dimensional strategies, we use bold letters in order to distinguish them from the one-dimensional analogs.

\begin{example}[{\it Application~1:~Large financial market on one probability space}]\label{LFM}

Similar as in the introduction or as for instance in~\cite{DGP:05}, let us consider a large financial market modeled by
a sequence of semimartingales $(S^n_t)_{t\in[0,1], n \in \mathbb{N}}$, based on a filtered probability space $(\Omega,\mathcal{F},(\mathcal{F}_t)_{t\in[0,1]},P)$.
Our setting covers this model. Indeed, choose $I=\mathbb{N}$ and define
$\mathcal{A}^n=\{A^n\}$ as the singleton containing only the set $A^n=\{1,\dots,n\}$.
The set $\mathcal{X}^n_1=\mathcal{X}^{A^n}_1$ then consists of all stochastic integrals $(\mathbf{H}^n\bullet \mathbf{S}^n)$ where $\mathbf{S}^n=(S^1,\dots, S^n)$
and $\mathbf{H}^n$ is $n$-dimensional, predictable,  $\mathbf{S}^n$-integrable and 1-admissible.
\end{example}

\begin{example}[{\it Application~2:~Markets based on a continuum of assets, such as bond markets}]\label{continuum}

Let us assume that  we are given a continuum of tradeable assets expressed in some num\'eraire. These assets are given as semimartingales
$(S^{\alpha}_t)_{0\leq t\leq 1}$, $\alpha\in I$, based on $(\Omega,\mathcal{F}, (\mathcal{F}_t)_{0\leq t\leq 1},P)$.
One could, for example, choose the parameter set $I=[0,T^*]$, $T^*\leq \infty$,  where $\alpha\in I$ could be thought of as the maturity of a bond. Here, we choose $\mathcal{A}^n$ as the family of all subsets $A\subseteq I$ such that $|A|=n$.

For $A\in\mathcal{A}^n$, where $A=\{\alpha_1,\dots,\alpha_n\}$, for $\alpha_1,\dots,\alpha_n\in I$, the set $\mathcal{X}_1^{A}$ is given by
\begin{equation}
\mathcal{X}_1^{A}=\{(\mathbf{H}^{A}\bullet\mathbf{S}^{A})_{0\leq t\leq T}\text{, $\mathbf{H}^{A}$ 1-admissible} \},
\end{equation}
where $\mathbf{S}^{A}=(S^{\alpha_1},\dots,S^{\alpha_n})$, $\mathbf{H}^{A}$ is an $n$-dimensional, predictable 1-admissible integrand for
$\mathbf{S}^{A}$.
Clearly the sets $\mathcal{X}_1^{A}$ satisfy the requirements of Definition~\ref{setting}.
The sets $\mathcal{X}_1^n$, $n\geq 1$, then express the fact that it is possible to trade in any finite number of assets
(but involving each finite choice of all the uncountably many assets $S^{\alpha},\alpha\in I$).
\end{example}

Continuing the general setting, we now define the sets $\mathcal{X}$ (and $\mathcal{X}_1$ respectively) which will replace the corresponding ones of~\cite{kab:97}
and are referred to as ($1$-) admissible generalized portfolio wealth processes in the large financial market.

\begin{definition}\label{largeX1}
\begin{enumerate}
\item Consider the set $\mathcal{X}_1=\overline{\bigcup_{n\geq 1}\mathcal{X}_1^n}^{\mathbb{S}}$, where $\overline{(\cdot) }^{\mathbb{S}}$ denotes the closure in the Emery-topology.
The elements of $\mathcal{X}_1$ are called  $1$-\emph{admissible generalized portfolio wealth processes} in the large financial market.
\item We denote by $ \mathcal{X}$ the set $\mathcal{X} := \cup_{\lambda >0} \lambda \mathcal{X}_1 $  and call
its elements \emph{admissible generalized portfolio wealth processes} in the large financial market.

\item We denote by $K_0$, respectively $K_0^1$ the evaluations of elements of $\mathcal{X}$, respectively $\mathcal{X}_1$, at terminal time $T=1$.
\end{enumerate}
\end{definition}

We have the following basic assumption:

\begin{assumption}\label{ass:basic_NA}
If $Y \in \mathcal{X} $ and $ Y \geq - \lambda $, for some $ \lambda > 0 $, then there is $ Z \in \mathcal{X}_1 $ such that $ Y = \lambda Z $.
\end{assumption}

In other words: if $ Y $ has a risk greater than or equal to $ - \lambda $, then there are approximating portfolios of the corresponding small financial markets which are all bounded from below by $ - \lambda $. It is not possible that a limiting portfolio with risk greater than or equal to $- \lambda $ of the large financial market can only be approximated by portfolios of the small markets with larger down risks. This is a basic coherence assumption for large financial markets. We could slightly generalize this condition, by replacing $ \mathcal{X}_1 $ by $ \mathcal{X}_{\lambda_0}$, where $\lambda_0$ is a \emph{uniform} bound greater than or equal to $1$ for the approximating portfolios, but we do not see a point for doing so.

\begin{remark}
In the realm of Example~\eqref{continuum}, admissible generalized portfolio wealth processes also include portfolios $Z$ in which the composition of bonds in the portfolio can be changed over time, as long as this change is not too ``wild''. For instance, this is the case if  $Z$ is a semimartingale (corresponding to such a portfolio) and if there
exists a sequence of stopping times $(\tau_k)_{k \in \mathbb{N}}$ converging a.s.~to $1$ such that, for each $k \in \mathbb{N}$, $Z^{\tau_k}$ can be written as
\[
Z_1^{\tau_k}=(\mathbf{H}^{A_k}\bullet\mathbf{S}^{A_k})_{1 \wedge \tau_k}
\]
for some $\lambda$-admissible strategy $\mathbf{H}^{A_k}$  which invests in the bonds corresponding to the index set $A_k$ up to time $\tau_k$.

For example, $Z$ could be built by a strategy that always invests in a finite number of assets, but at certain stopping times the composition of the portfolio can be changed to include a new finite number of assets.
\end{remark}

\begin{lemma}[Concatenation property]\label{lem:concatenation}
The Emery-dense subset $\bigcup_{n\geq 1}\mathcal{X}_1^n$ of $\mathcal{X}_1$ satisfies the following concatenation property: let
$X,Y \in \bigcup_{n\geq 1}\mathcal{X}^n_1$, then
there exists $A\in \bigcup_{n\geq 1}\mathcal{A}^n$ such that $X, Y\in
\mathcal{X}^A_1$. For all
bounded, predictable strategies $ H,G \geq 0 $, with $ HG=0 $ and $ Z = (H \bullet X) + (G \bullet Y) \geq -1 $, it holds that $ Z \in
\mathcal{X}^A_1$ and hence
$Z\in \bigcup_{n\geq 1}\mathcal{X}^n_1$.
\end{lemma}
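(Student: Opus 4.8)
The plan is to reduce everything to the structural assumptions on the families $\mathcal{A}^n$ and to the concatenation property already built into Definition~\ref{setting}. First I would take $X,Y\in\bigcup_{n\geq1}\mathcal{X}^n_1$ and, unwinding the definition~\eqref{eq:X1-n-bonds}, choose $A^1,A^2\in\bigcup_{n\geq1}\mathcal{A}^n$ with $X\in\mathcal{X}^{A^1}_1$ and $Y\in\mathcal{X}^{A^2}_1$. By the standing hypothesis that $\bigcup_{n\geq1}\mathcal{A}^n$ is stable under finite unions, the set $A:=A^1\cup A^2$ again belongs to $\mathcal{A}^m$ for some $m$ (with $m\leq|A^1|+|A^2|$). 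Since $A^1\subseteq A$ and $A^2\subseteq A$, the last bullet of Definition~\ref{setting} gives $\mathcal{X}^{A^1}_1\subset\mathcal{X}^A_1$ and $\mathcal{X}^{A^2}_1\subset\mathcal{X}^A_1$, so both $X$ and $Y$ lie in the single convex set $\mathcal{X}^A_1$. This is the step producing the common index set $A$ claimed in the statement.

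Then I would simply invoke the concatenation property of $\mathcal{X}^A_1$ (third bullet of Definition~\ref{setting}): for bounded predictable $H,G\geq0$ with $HG=0$ such that $Z=(H\bullet X)+(G\bullet Y)\geq-1$, one obtains $Z\in\mathcal{X}^A_1$. The point worth emphasising is that once $X$ and $Y$ have been lifted into the same small market $A$, forming $Z$ no longer enlarges the asset base — $Z$ is a portfolio traded only in the assets indexed by $A$ — whereas combining $X\in\mathcal{X}^{A^1}_1$ and $Y\in\mathcal{X}^{A^2}_1$ directly without passing to $A$ would in general require $|A^1|+|A^2|$ assets; this is precisely the obstruction noted right after~\eqref{eq:X1-n-bonds}, and passing to $A=A^1\cup A^2$ (at the price of landing in $\mathcal{X}^m_1$ rather than back at the original index level) is what circumvents it. Finally, since $A\in\mathcal{A}^m$, we have $\mathcal{X}^A_1\subset\mathcal{X}^m_1\subset\bigcup_{n\geq1}\mathcal{X}^n_1$ by~\eqref{eq:X1-n-bonds}, hence $Z\in\bigcup_{n\geq1}\mathcal{X}^n_1$, as asserted.

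Regarding the Emery-density of $\bigcup_{n\geq1}\mathcal{X}^n_1$ in $\mathcal{X}_1$ mentioned in the statement, there is nothing to prove: it is immediate from Definition~\ref{largeX1}(1), where $\mathcal{X}_1$ is defined as the Emery closure of $\bigcup_{n\geq1}\mathcal{X}^n_1$. I do not expect any genuine obstacle here — the lemma is a bookkeeping argument on top of the definitions. The only points demanding care are not to conflate the level $m$ of the common market $A=A^1\cup A^2$ with the levels of $A^1$ and $A^2$, and to record explicitly that $Z$ involves only the finitely many assets indexed by $A$, so that the concatenation axiom of Definition~\ref{setting} genuinely applies to $X,Y\in\mathcal{X}^A_1$.
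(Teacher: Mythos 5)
Your argument is correct and coincides with the paper's own proof: both pass to $A=A^1\cup A^2$, use the stability of $\bigcup_{n\geq1}\mathcal{A}^n$ under unions together with the monotonicity bullet of Definition~\ref{setting} to place $X$ and $Y$ in the common set $\mathcal{X}^A_1$, and then apply that set's concatenation property. Your write-up merely makes the monotonicity step more explicit than the paper does.
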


\begin{proof}
As $X,Y \in \bigcup_{n\geq 1}\mathcal{X}^n_1=
\bigcup_{n\geq 1}\bigcup_{A\in\mathcal{A}^n}\mathcal{X}^A_1$, there is $A^1,A^2\subseteq I$ with $|A^1|=n$ and $|A^2|=m$ such that $X\in\mathcal{X}^{A^1}_1$ and $Y\in\mathcal{X}^{A^2}_1$. Let $A=A^1\cup A^2$, then $A\in
\bigcup_{n\geq 1}\mathcal{A}^n$ and
$X, Y\in \mathcal{X}_1^{A}$, where $A$ has at most $n+m$ elements. Hence, by the concatenation property of the set  $\mathcal{X}_1^{A}$, see Definition~\ref{setting}, we get for $H$ and $G$ as in the statement of the lemma that,
for $Z = (H \bullet X) + (G \bullet Y)$,
$$ Z \in\mathcal{X}_1^{A}\subseteq \bigcup_{n\geq 1}\mathcal{X}^n_1. $$
\end{proof}

\begin{remark}\label{rem:concatenation}
Notice that we do have the concatenation property only on a dense subset of the set of all admissible portfolios, i.e.~we cannot apply the reasoning of \cite{kab:97} or \cite{CT:14} directly. There is no generic way to extend the concatenation property in this infinite dimensional setting to the Emery closure.  Mathematically speaking it is the goal of this work to show that this is still sufficient for all conclusions.
\end{remark}

\subsection{Relation to the literature on generalized trading strategies in large financial markets}\label{sec:DeDonno}

We focus here on the large financial market situation considered by M.~De Donno et al.~in~\cite{DGP:05}, which lies in the realm of Example~\ref{LFM}.
As already outlined there, the set $\mathcal{X}_1^n$ corresponding to $1$-admissible
portfolio wealth processes in the small financial market $n$ is here given by the following set of stochastic integrals with respect to the $n$-dimensional semimartingale
$\mathbf{S}^n=(S^1, \ldots, S^n)$
\begin{align}\label{eq:dedonnoXn1}
\mathcal{X}^n_1=\{(\mathbf{H} \bullet \mathbf{S}^n) \, |\, \mathbf{H} \in \mathcal{H}^n_1 \}
\end{align}
with
\[
\mathcal{H}^n_{\lambda}=\{ \mathbf{H} \, | \, \mathbb{R}^n\textrm{-valued, predictable,  $\mathbf{S}^n$-integrable and $\lambda$-admissible process}\}.
\]
As usual $\lambda$-admissibility means $(\mathbf{H} \bullet \mathbf{S}^n)_t \geq -\lambda$ for all $t \in [0,1]$.

In~\cite{DGP:05} the passage from a sequence of small financial markets to a large financial market model is achieved via a generalized stochastic integration theory
with respect to a sequence of semimartingales. As already mentioned in the introduction, the corresponding integrands are called \emph{generalized strategies}, a notion, which formalizes the idea of a portfolio in which each asset can contribute, possibly with an infinitesimal weight.
As we shall see in  Proposition~\ref{prop:deDonno}, the set $\mathcal{X}_1$ of Definition~\ref{largeX1} corresponds exactly to
stochastic integrals with \emph{$1$-admissible generalized strategies} defined below and initially considered in~\cite{DP:06}.
In the following definition $ \Slim$ denotes the limit in the Emery topology.

\begin{definition}
\begin{enumerate}
\item For each $n \in \mathbb{N}$, let $\mathbf{H}^n$ be an $\mathbb{R}^n$-valued, predictable, $\mathbf{S}^n$-integrable process. A sequence $(\mathbf{H}^n)_{n\in \mathbb{N}}$ is called \emph{generalized strategy} if $(\mathbf{H}^n \bullet \mathbf{S}^n)$ converges in the Emery topology to a semimartingale
\[
Z:=\Slim(\mathbf{H}^n \bullet \mathbf{S}^n),
\]
which is called generalized stochastic integral.
\item Let $\lambda >0$. A generalized strategy $(\mathbf{H}^n)$ is called $\lambda$-admissible if each element of the approximating sequence $\mathbf{H}^n$ is $\lambda$-admissible.
\end{enumerate}
\end{definition}

\begin{proposition}\label{prop:deDonno}
Consider the following set
\[
\mathcal{Z}_1=\{\Slim (\mathbf{H}^n\bullet  \mathbf{S}^n)\,|\, (\mathbf{H}^n) \textrm{ 1-admissible generalized strategy}\}.
\]
Then $\mathcal{Z}_1=\mathcal{X}_1$, where $\mathcal{X}_1$ is given in Definition~\ref{largeX1} with
$\mathcal{X}_1^n$ defined in~\eqref{eq:dedonnoXn1}.
\end{proposition}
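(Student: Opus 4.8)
The statement asks for the set equality $\mathcal{Z}_1=\mathcal{X}_1$, and the plan is to establish the two inclusions separately. For $\mathcal{Z}_1\subseteq\mathcal{X}_1$ essentially nothing is needed: if $Z=\Slim(\mathbf{H}^n\bullet\mathbf{S}^n)$ for a $1$-admissible generalized strategy $(\mathbf{H}^n)$, then by~\eqref{eq:dedonnoXn1} each term $\mathbf{H}^n\bullet\mathbf{S}^n$ already lies in $\mathcal{X}_1^n\subseteq\bigcup_{m\ge1}\mathcal{X}_1^m$, and since $Z$ is by definition the Emery limit of this sequence, $Z$ belongs to $\overline{\bigcup_{m\ge1}\mathcal{X}_1^m}^{\mathbb{S}}=\mathcal{X}_1$.

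The substance is in the reverse inclusion $\mathcal{X}_1\subseteq\mathcal{Z}_1$. Given $Z\in\mathcal{X}_1$, I would first use the metrizability of the Emery topology to pick a sequence $(Z_k)_{k\ge1}$ in $\bigcup_{m\ge1}\mathcal{X}_1^m$ with $d_{\mathbb{S}}(Z_k,Z)\le2^{-k}$ (thinning out a given approximating sequence if necessary), and write $Z_k=\mathbf{H}^{(k)}\bullet\mathbf{S}^{n_k}$ with $\mathbf{H}^{(k)}$ an $\mathbb{R}^{n_k}$-valued, predictable, $1$-admissible integrand for $\mathbf{S}^{n_k}$. The task is then to turn this into a single generalized strategy, i.e.\ to produce for every $n\in\mathbb{N}$ an $\mathbb{R}^n$-valued $1$-admissible integrand $\mathbf{G}^n$ so that $(\mathbf{G}^n\bullet\mathbf{S}^n)_n$ still Emery-converges to $Z$. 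Here I would set $m_k:=\max\{n_1,\dots,n_k,k\}$ (so that $(m_k)$ is nondecreasing, $m_k\to\infty$ and $m_k\ge n_k$), use the nesting property of Definition~\ref{setting} — which in the situation of Example~\ref{LFM} says that the family $(\mathcal{X}_1^m)_m$ is increasing, the inclusion $\mathcal{X}_1^{n_k}\subseteq\mathcal{X}_1^{m}$ for $m\ge n_k$ being realized precisely by appending zero components to $\mathbf{H}^{(k)}$ (which changes neither the integral nor the $1$-admissibility) — and finally define $\mathbf{G}^n:=0$ for $n<m_1$ and $\mathbf{G}^n:=$ the zero-padding of $\mathbf{H}^{(k)}$ to $\mathbb{R}^n$ whenever $m_k\le n<m_{k+1}$. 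Then $\mathbf{G}^n\bullet\mathbf{S}^n=Z_k$ on the block $m_k\le n<m_{k+1}$, and since the block index $k=k(n)$ tends to $\infty$ with $n$, one obtains $d_{\mathbb{S}}(\mathbf{G}^n\bullet\mathbf{S}^n,Z)\le2^{-k(n)}\to0$; hence $(\mathbf{G}^n)$ is a $1$-admissible generalized strategy with $\Slim(\mathbf{G}^n\bullet\mathbf{S}^n)=Z$, so $Z\in\mathcal{Z}_1$.

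I do not expect any genuine analytic difficulty here: the argument rests only on the triangle inequality and on unwinding the two definitions. The one point that needs care — and which is essentially the whole content of the proof — is this reindexing: an approximating sequence extracted from $\bigcup_m\mathcal{X}_1^m$ a priori lives in small markets of arbitrary, possibly non-monotone, sizes, whereas the notion of a generalized strategy is rigidly indexed by $n\in\mathbb{N}$ with the $n$-th integrand $\mathbb{R}^n$-valued. Reconciling the two is exactly what the combination of zero-padding and the monotonicity $\mathcal{X}_1^{A^1}\subseteq\mathcal{X}_1^{A^2}$ (for $A^1\subseteq A^2$) built into Definition~\ref{setting} accomplishes.
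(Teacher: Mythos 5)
Your proof is correct and follows essentially the same route as the paper: the inclusion $\mathcal{Z}_1\subseteq\mathcal{X}_1$ is immediate from the definition of a $1$-admissible generalized strategy, and the reverse inclusion is obtained by taking an Emery-approximating sequence in $\bigcup_{m\geq 1}\mathcal{X}_1^m$ and identifying its members as integrals of $1$-admissible integrands. The only difference is that you spell out the zero-padding and block reindexing needed to turn a sequence living in markets of sizes $n_k$ into a strategy rigidly indexed by $n$, a step the paper's proof leaves implicit (and which is indeed harmless, exactly for the reason you give).
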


\begin{proof}
The inclusion $\mathcal{Z}_1\subseteq \overline{\bigcup_{n\geq 1}\mathcal{X}_1^n}^{\mathbb{S}}=\mathcal{X}_1$ is clear by the definition of 1-admissible generalized strategies.

Conversely let $X \in \mathcal{X}_1$. Then there exists some sequence $X^k \in \bigcup_{n\geq 1}\mathcal{X}_1^n$ such that
$X^k \to X$ in the Emery topology. As each $X^k$ lies in some $\mathcal{X}_1^{n_k}$, it holds that $X^k=\mathbf{H}^{n_k} \bullet \mathbf{S}^{n_k}$ for some
$\mathbf{H}^{n_k} \in \mathcal{H}^{n_k}_1$, 
which proves $\mathcal{X}_1 \subseteq \mathcal{Z}_1 $.
\end{proof}

\begin{remark}\label{rem:motivation}
The one-to-one correspondence between $\mathcal{Z}_1$ and $\mathcal{X}_1$ in the setting of~\cite{DGP:05}, is one motivation for us to define $\mathcal{X}_1$ as $\overline{\bigcup_{n\geq 1}\mathcal{X}_1^n}^{\mathbb{S}}$.
\end{remark}

\section{Main concepts and results}\label{sec:NAFLVRandFTAP}

In order to introduce our notion of absence of asymptotic arbitrage, we need the following convex cones:
\begin{align}\label{eq:coneLFM}
C_0:=K_0 - L^0_{\geq 0}, \quad  C:=C_0\cap L^\infty.
\end{align}

In perfect analogy to \emph{no free lunch with vanishing risk} (NFLVR), we define:

\begin{description}
\item[(NAFLVR)] The set $\mathcal{X}$ is said to satisfy \emph{no asymptotic free lunch with vanishing risk} if
\[
\overline{C}\cap L_{\geq 0}^\infty=\{0\},
\]
where $\overline{C}$ denotes the norm closure in $L^{\infty}$.
\end{description}

\subsection{A fundamental theorem of asset pricing for large financial markets}

Before stating our main result, let us recall the notion of an equivalent separating measure.

\begin{definition}
The set $\mathcal{X}$ satisfies the \emph{(ESM) (equivalent separating measure) property} if there exists an equivalent measure $Q \sim P$ such that $E_{Q}[X_1] \leq 0$ for all $ X \in \mathcal{X}$.
\end{definition}

Note that under the condition
\begin{align}\label{eq:NFL}
\overline{C}^* \cap L_{\geq 0}^\infty=\{0\},\tag{NFL}
\end{align}
where $\overline{C}^*$ denotes the weak-$*$-closure in $L^{\infty}$,
the (ESM) property is a consequence of the Kreps-Yan Theorem~\cite{K:81, Y:80}, which in turn follows from
Hahn-Banach's Theorem. Condition~\ref{eq:NFL} is the classical \emph{no free lunch} (NFL) condition for the abstract set $C$.

It is clear that
$
\text{(NFL)} \Rightarrow \text{(NAFLVR)}\,.
$
The goal is to show the reverse implication,  that is:

\begin{theorem}\label{th:main}
Under (NAFLVR), $ C = \overline{C}^* $, i.e., the cone $C$ is already weak-$*$-closed and (NFL) holds.
\end{theorem}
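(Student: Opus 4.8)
The plan is to follow the structure of the Delbaen--Schachermayer proof of (NFLVR)$\Rightarrow$(NFL), in the variant of~\cite{CT:14}, adapting each ingredient to the present setting where the concatenation property of Definition~\ref{setting} is available only on the Emery-dense subset $\bigcup_{n\geq 1}\mathcal{X}_1^n$ (Lemma~\ref{lem:concatenation}). The key structural observation is that (NAFLVR) splits into (NUPBR) and (NA) for the large market, exactly as (NFLVR) splits into boundedness in probability of $K_0^1$ plus $K_0\cap L^0_{\geq 0}=\{0\}$. First I would show that under (NAFLVR) the set $K_0^1$ is bounded in $L^0$; the standard argument is that otherwise one can normalize elements of $K_0^1$ and, using the concatenation property on $\bigcup_n \mathcal{X}_1^n$ together with an Emery/a.s.\ approximation, produce a sequence in $C$ whose negative parts go to $0$ in $L^\infty$ while the positive parts converge to a nontrivial nonnegative limit, contradicting (NAFLVR). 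Here the (P-UT) property of~\cite{CT:14} enters: boundedness in the Emery topology of the relevant family of portfolio wealth processes is what lets one pass from convergence of terminal values to control of the whole path and hence stay within (a scaling of) $\mathcal{X}_1$ up to Emery limits.

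\textbf{Key steps.} (1) Establish that $K_0^1$ is bounded in $L^0$ under (NAFLVR). (2) Prove the crucial closedness statement: the cone $K_0^1 - L^0_{\geq 0}$, intersected with $L^\infty$ (or rather $C^1:=(K_0^1-L^0_{\geq0})\cap L^\infty$), is closed in the $L^\infty$-norm, and more importantly closed in probability on bounded sets, so that the \emph{Fatou-closure} of $C$ coincides with $C$. This is the heart of the matter and is where~\cite{CT:14} replaces the ``tricky lemmas'' of~\cite{DS:94}: given a sequence $(f_n)\subset C^1$ converging appropriately, one takes the underlying wealth processes $(X^n)\in \mathcal{X}_1$, uses boundedness in probability plus the (P-UT)/Emery-compactness machinery to extract a limit $X\in\mathcal{X}_1$ (this is where $\mathcal{X}_1$ being \emph{Emery-closed} by Definition~\ref{largeX1}(1) is used), and identifies $f = \Slim$-terminal-value $\leq X_1$, whence $f\in C$. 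One must be careful that the convex combinations / concatenations used to build the approximating sequence stay in $\bigcup_n\mathcal{X}_1^n$ before passing to the Emery closure — this is exactly the point flagged in Remark~\ref{rem:concatenation}. (3) Combine Fatou-closedness of $C$ with (NAFLVR) to upgrade norm-closedness to weak-$*$-closedness: a bounded, convex, Fatou-closed subset of $L^\infty$ that is norm-closed is weak-$*$-closed by the Krein--Šmulian theorem (one checks $C\cap B_{L^\infty}$ is weak-$*$-closed, using that $L^1$-convergence of a bounded sequence plus Fatou gives the limit in $C$). Hence $C=\overline{C}^*$ and (NFL) holds.

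\textbf{Main obstacle.} The principal difficulty is Step (2), specifically carrying out the Delbaen--Schachermayer/CT14 closedness argument \emph{without} a concatenation property on all of $\mathcal{X}_1$. The proof of closedness of the set of superreplicable claims classically uses concatenation (``pasting'') of admissible strategies to handle the convex combinations produced by an exhaustion/Komlós-type argument; here such pasting is only legitimate inside $\bigcup_n\mathcal{X}_1^n$. The resolution must be to perform all the pasting and convex-combination steps at the level of the dense subset $\bigcup_n\mathcal{X}_1^n$ — which does satisfy concatenation by Lemma~\ref{lem:concatenation} (at the cost of passing from $n$ to $2n$ assets, harmless since $\bigcup_n\mathcal{A}^n$ is closed under unions) — and only at the very end pass to the Emery limit, invoking that $\mathcal{X}_1$ is defined as the Emery closure. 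The (P-UT) property is what guarantees this final Emery limit exists and has the right terminal value; verifying (P-UT) for the relevant families (or importing it from the Appendix) is the technical crux. I expect the argument to also require checking that the splitting (NAFLVR) $\Leftrightarrow$ (NA) $+$ (NUPBR) holds in this dense-concatenation setting, which should follow from the same approximation philosophy.
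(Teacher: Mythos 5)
Your overall route is the paper's route (split (NAFLVR) into (NA)$+$(NUPBR), reduce weak-$*$-closedness of $C$ to Fatou-closedness of $C_0$ via the Grothendieck/Krein--\v{S}mulian argument, do all pasting inside the Emery-dense set $\bigcup_{n\geq 1}\mathcal{X}_1^n$, and invoke Emery-closedness of $\mathcal{X}_1$ together with (P-UT) at the end), but there is a genuine gap at the heart of your Step (2): you treat (P-UT) as an ``Emery-compactness'' device that lets you ``extract a limit $X\in\mathcal{X}_1$'' from a sequence of wealth processes whose terminal values converge. It does not. (P-UT) is a boundedness property; combined with uniform convergence in probability it yields (M\'emin--S\l{}ominski, Proposition~\ref{prop:Memin}) Emery convergence only of the local-martingale and large-jump parts $M^{n,C}$, $\check{X}^{n,C}$ in the decomposition~\eqref{eq:decomp}, while the predictable finite-variation parts $B^{n,C}$ a priori converge only uniformly in probability. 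The missing idea is the maximality argument: one must first replace $f$ by a \emph{maximal} element $h_0\geq f$ of the $L^0$-closure $\widehat{K}_0^1$ (existence by (NUPBR), as in~\cite[Lemma 4.3]{DS:94}), choose approximating processes in the dense subset with $X^n_1\to h_0$ a.s., upgrade to uniform convergence in probability (\cite[Lemma 3.2]{kab:97}, \cite[Lemma 4.5]{DS:94} --- a step your sketch also skips), and then prove Emery convergence of the drifts $B^{n,C}$ by contradiction: if they are not Emery--Cauchy, concatenating $X^{i_k}$ and $X^{j_k}$ along $\Gamma_k=\{r^{i_k}\geq r^{j_k}\}$ --- legitimate only inside $\bigcup_{n\geq 1}\mathcal{X}_1^n$, via Lemma~\ref{lem:concatenation} and Lemma~\ref{lem:admissibility} --- produces elements of $(1+\alpha_k)\mathcal{X}_1$ whose terminal values have forward convex combinations converging to $h_0+\eta$ with $\eta\geq 0$, $\eta\neq 0$, contradicting maximality of $h_0$ (Proposition~\ref{prop:kabanovmax}). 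Without singling out a maximal $h_0$ there is nothing to contradict, and the Emery limit you want to place in $\mathcal{X}_1$ need not exist.

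A second, smaller gap: the theorem concerns $C$ built from $K_0$ (all admissible generalized portfolios), yet you run the closedness argument for $C^1$ built from $K_0^1$. The reduction --- given $-1\leq f_n\in C_0$ dominated by $Y^n_1$ with $Y^n\in\mathcal{X}$, conclude $Y^n\in\mathcal{X}_1$ --- is exactly where (NA) enters, and in the large-market setting this is not the one-line small-market argument: it is Lemma~\ref{Y_in_X1}, whose proof requires approximating the stopped strategy by strategies acting on elements of the dense subset. Your sketch asserts the (NAFLVR) $\Leftrightarrow$ (NA)$+$(NUPBR) splitting (which is indeed Proposition~\ref{prop:NAFLVR=NUPBRNA}) but never identifies where (NA) is actually used in the closedness proof; that is the place.
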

This result is proved in Section~\ref{proof}, where we adapt the methods used in~\cite{CT:14} to the present setting of large financial markets. Similarly as in the classical setting, this then leads to Theorem~\ref{th:FTAPLFM} below, which is a version of the fundamental theorem of asset pricing for large financial markets.

\begin{theorem}[Fundamental Theorem of Asset Pricing]\label{th:FTAPLFM}
(NAFLVR) $\Leftrightarrow$ (ESM).
\end{theorem}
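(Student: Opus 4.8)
\textbf{Proof plan for Theorem~\ref{th:FTAPLFM}.}
The plan is to deduce the equivalence (NAFLVR) $\Leftrightarrow$ (ESM) from Theorem~\ref{th:main} together with the classical Kreps--Yan theorem. The easy direction is (ESM) $\Rightarrow$ (NAFLVR): if $Q \sim P$ satisfies $E_Q[X_1]\le 0$ for all $X\in\mathcal{X}$, then since $K_0$ is precisely the set of such terminal values and $C = (K_0 - L^0_{\ge 0})\cap L^\infty$, the linear functional $f\mapsto E_Q[f]$ is nonpositive on $C$; being a $\sigma(L^\infty,L^1)$-continuous functional, it is in particular norm-continuous, hence nonpositive on the norm closure $\overline{C}$. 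If $g\in\overline{C}\cap L^\infty_{\ge 0}$, then $E_Q[g]\le 0$ while $g\ge 0$ and $Q\sim P$ force $g = 0$. Thus $\overline{C}\cap L^\infty_{\ge 0}=\{0\}$, which is (NAFLVR).

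For the reverse direction (NAFLVR) $\Rightarrow$ (ESM), I would invoke Theorem~\ref{th:main}: under (NAFLVR) the cone $C$ is weak-$*$-closed and (NFL) holds, i.e.\ $\overline{C}^*\cap L^\infty_{\ge 0}=\{0\}$. Now $C$ is a convex cone in $L^\infty$ that is weak-$*$-closed, contains $-L^\infty_{\ge 0}$ (since $0\in K_0$ and hence $-L^0_{\ge 0}\cap L^\infty = -L^\infty_{\ge 0}\subseteq C$), and satisfies $C\cap L^\infty_{\ge 0}=\{0\}$. These are exactly the hypotheses of the Kreps--Yan theorem~\cite{K:81,Y:80}: there exists $Q\sim P$ with $E_Q[f]\le 0$ for all $f\in C$. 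To conclude (ESM) we must upgrade this from bounded claims in $C$ to all of $\mathcal{X}$. For $X\in\mathcal{X}_1$ and $c>0$, the bounded claim $(X_1\wedge c) - X_1 \le 0$ lies in $-L^\infty_{\ge 0}\subseteq C$ and $X_1\wedge c$ is bounded below by $-1$; one checks $X_1\wedge c\in C$ as well (it is dominated by $X_1\in K_0$, up to the truncation, so $X_1\wedge c \in (K_0 - L^0_{\ge 0})\cap L^\infty = C$). Hence $E_Q[X_1\wedge c]\le 0$ for all $c$, and since $X_1\ge -1$ is bounded below, monotone convergence (applied to $X_1\wedge c + 1 \uparrow X_1 + 1$) gives $E_Q[X_1]\le 0$. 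Scaling by $\lambda>0$ covers all of $\mathcal{X}=\bigcup_{\lambda>0}\lambda\mathcal{X}_1$, which is precisely the (ESM) property.

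The only genuinely substantial ingredient is Theorem~\ref{th:main}, whose proof (carried out in Section~\ref{proof}, adapting~\cite{CT:14}) is where the large-market machinery — the Emery-closure definition of $\mathcal{X}_1$, the concatenation property of Lemma~\ref{lem:concatenation} holding only on the dense subset $\bigcup_n\mathcal{X}_1^n$, and the (P-UT) boundedness argument — does all the work. Given Theorem~\ref{th:main}, the passage to (ESM) is the routine Kreps--Yan/separation argument above; the one point requiring a little care is the truncation-and-monotone-convergence step extending $E_Q[\cdot]\le 0$ from $C$ to unbounded elements of $\mathcal{X}$, which works cleanly because admissible wealth processes are uniformly bounded below. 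I would therefore state and prove Theorem~\ref{th:main} first, and then present Theorem~\ref{th:FTAPLFM} as a short corollary combining that result with Kreps--Yan and the truncation argument.
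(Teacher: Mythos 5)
Your proposal is correct and follows essentially the same route as the paper: the easy implication (ESM) $\Rightarrow$ (NAFLVR) is standard, and the substantial implication rests on Theorem~\ref{th:main} (proved in Section~\ref{proof}), after which Kreps--Yan plus the truncation/monotone-convergence step (using that admissible wealth processes are uniformly bounded below) yields (ESM). The only blemish is the passing claim that $(X_1\wedge c)-X_1$ is a \emph{bounded} claim (it need not be if $X_1$ is unbounded above), but this is immaterial since your actual argument correctly places $X_1\wedge c$ in $C$ via $X_1\wedge c = X_1-(X_1-X_1\wedge c)\in (K_0-L^0_{\geq 0})\cap L^\infty$.
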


Note that the corresponding result for a market consisting of finitely many assets was first proved by F.~Delbaen and W.~Schachermayer in~\cite{DS:94}. Y.~Kabanov~\cite{kab:97} found the most abstract setting of admissible portfolio wealth processes as of Definition~\ref{setting} to which the proof of~\cite{DS:94} can be transferred.

\subsection{Connection to \emph{no asymptotic free lunch} (NAFL) and the corresponding FTAP}\label{sec:NAFL}
In this section we discuss the connection of (NAFLVR) to the notion  \emph{no asymptotic free lunch} (NAFL), see~\cite{K:00}.
Here we are again in the situation of Example~\ref{LFM}.
Let
\[
\mathcal{K}_0=\{X_1,\text{ where }X\in \bigcup_{\la>0}\bigcup_{n\geq 1}\la\mathcal{X}_1^n=\bigcup_{n\geq 1} \mathcal{X}^n\},
\]
where $\mathcal{X}_1^n$ is given as in Definition~\ref{setting} (in particular in Example~\ref{LFM}), and $\mathcal{X}^n$ in Remark~\ref{rem:motivation}.
The set $\mathcal{K}_0$ consists of the terminal values of admissible portfolios in all small markets $n$, but without the closure in the Emery-topology. This corresponds to the usual setting of large financial markets under the additional assumption that all processes are based on the same probability space $(\Omega,\mathcal{F},P)$, see, for example~\cite{KK:94, KS:96, K:00}. Define
\begin{equation}\label{eq:coneC} \mathcal{C}=(\mathcal{K}_0-L^0_{\geq 0})\cap L^{\infty}.
\end{equation}
Then (NAFL) reads as follows:
\begin{description}
\item[(NAFL)] The set $\mathcal{C}$ is said to satisfy \emph{no asymptotic free lunch} if
\[
\overline{\mathcal{C}}^*\cap L_{\geq 0}^\infty=\{0\},
\]
where $\overline{\mathcal{C}}^*$ denotes the weak-$*$-closure in $L^{\infty}$.
\end{description}
In order to compare the FTAP of~\cite{K:00} with Theorem~\ref{th:FTAPLFM} above we have to define an appropriate (ESM)-condition.
\begin{definition}
The set $\bigcup_{\la>0} \bigcup_{n\geq 1}\la\mathcal{X}_1^n$ satisfies the (ESM') property if there exists an equivalent measure $Q \sim P$ such that $E_{Q}[X_1] \leq 0$ for all $ X \in \bigcup_{\la>0} \bigcup_{n\geq 1} \la\mathcal{X}_1^n$.
\end{definition}

In~\cite{K:00} the following FTAP for large financial markets was proved in the general setting of a sequence of possibly different probability spaces $\Omega^n$ (where the (NAFL) condition looks more technical, but corresponds to (NAFL) as above if all $\Omega^n$ coincide). In our case when all $\Omega^n$ coincide it can also be deduced from an (abstract version) of the Kreps-Yan~\cite{K:81, Y:80} theorem.

\begin{theorem}\label{th:FTAPNAFL}
(NAFL) $\Leftrightarrow$ (ESM').
\end{theorem}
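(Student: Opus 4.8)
\textbf{Proof plan for Theorem~\ref{th:FTAPNAFL}.}

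The plan is to derive this equivalence from the abstract Kreps--Yan theorem applied to the convex cone $\mathcal{C}\subset L^\infty$. First I would establish the easy implication (ESM')~$\Rightarrow$~(NAFL): if $Q\sim P$ satisfies $E_Q[X_1]\le 0$ for all $X\in\bigcup_{\la>0}\bigcup_{n\ge1}\la\mathcal{X}^n_1$, then the linear functional $f\mapsto E_Q[f]$ is nonpositive on $\mathcal{K}_0$, hence (by monotonicity and the fact that it is bounded, being given by a probability density in $L^1$, so weak-$*$-continuous on $L^\infty$) nonpositive on all of $\mathcal{C}=(\mathcal{K}_0-L^0_{\ge0})\cap L^\infty$ and therefore on the weak-$*$-closure $\overline{\mathcal{C}}^*$. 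Since $E_Q[f]>0$ for any $f\in L^\infty_{\ge0}\setminus\{0\}$ (as $Q\sim P$), we conclude $\overline{\mathcal{C}}^*\cap L^\infty_{\ge0}=\{0\}$.

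For the converse (NAFL)~$\Rightarrow$~(ESM'), I would verify that $\mathcal{C}$ is a convex cone in $L^\infty$ containing $-L^\infty_{\ge0}$ and that $\overline{\mathcal{C}}^*$ is weak-$*$-closed with $\overline{\mathcal{C}}^*\cap L^\infty_{\ge0}=\{0\}$ by hypothesis; these are exactly the standing assumptions of the Kreps--Yan theorem (in the form valid on $L^\infty(\Omega,\mathcal{F},P)$ for a countably generated, or in general $\sigma$-finite, probability space). Kreps--Yan then produces an equivalent probability measure $Q\sim P$ whose density lies in $L^1(P)$ and such that $E_Q[f]\le0$ for all $f\in\overline{\mathcal{C}}^*$, in particular for all $f\in\mathcal{C}$. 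It remains to upgrade this to the statement $E_Q[X_1]\le0$ for \emph{all} $X\in\bigcup_{\la>0}\bigcup_{n\ge1}\la\mathcal{X}^n_1$, i.e.\ for all $X_1\in\mathcal{K}_0$, without the boundedness restriction: given $X\in\la\mathcal{X}^n_1$, the terminal value $X_1\ge-\la$ is bounded below, so the truncations $X_1\wedge m\in\mathcal{C}$ satisfy $E_Q[X_1\wedge m]\le0$, and letting $m\to\infty$ together with monotone convergence (using $X_1+\la\ge0$) yields $E_Q[X_1]\le0$ (the expectation being well-defined in $(-\infty,+\infty]$). This gives (ESM').

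The main obstacle is the invocation of the Kreps--Yan theorem itself: one must ensure its hypotheses genuinely hold, in particular that $L^\infty(P)$ is the dual of $L^1(P)$ with the relevant separability so that the Hahn--Banach separation argument underlying Kreps--Yan applies, and that $\mathcal{C}$ being a cone (rather than merely convex) is correctly used to pass from ``separated at a point'' to ``$E_Q\le0$ on all of $\mathcal{C}$''. One also has to be slightly careful that $\mathcal{C}\supseteq-L^\infty_{\ge0}$ holds, which follows since $0\in\mathcal{K}_0$ and $-L^\infty_{\ge0}\subset(\{0\}-L^0_{\ge0})\cap L^\infty$. Apart from this, the proof is a routine application of a classical theorem; the only genuinely new content relative to the small-market FTAP is the verification that $\mathcal{K}_0$ as defined here (terminal values of uniformly admissible portfolios across all small markets $n$) is a suitable input, which is immediate from its definition as a union of cones.
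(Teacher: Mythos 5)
Your proposal is correct and follows essentially the same route the paper indicates: the paper does not spell out a proof but refers to~\cite{K:00} and notes that, when all probability spaces coincide, the result follows from an abstract Kreps--Yan theorem applied to the cone $\mathcal{C}$, which is precisely your argument (including the standard easy direction via weak-$*$-continuity of $f\mapsto E_Q[f]$ and the truncation/monotone-convergence step to pass from bounded claims in $\mathcal{C}$ to all of $\mathcal{K}_0$). The only point worth stating more carefully is that $\mathcal{C}$ is a convex cone not merely because it is a union of cones, but because the sets $\la\mathcal{X}^A_1$ are convex and directed under the union property of the index sets $A$.
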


We will now show  that Theorems~\ref{th:FTAPNAFL} and \ref{th:FTAPLFM} are equivalent
in the sense that we give a precise relation between the sets $C$ and $\mathcal{C}$. To this end we will use the polar sets of $C$ and $\mathcal{C}$. Define
$$\mathcal{M}=\{Q\ll P: \text{$E_Q[f]\leq 0$ for all $f\in C$} \}.$$
It is easy to see that $\mathcal{M}$ is the set of absolutely continuous separating measures for $\mathcal{X}$, that is
$$\mathcal{M}=\{Q\ll P: \text{$E_Q[X_1]\leq 0$ for all $X\in\mathcal{X}$} \}.$$
Moreover we will now see that the set $\mathcal{M}$ is identical with the set of absolutely continuous separating measures in the sense of (ESM').

\begin{lemma}\label{th:compESM}
$Q\in \mathcal{M}$ if and only if $Q$ is an absolutely continuous separating measure for $\bigcup_{\la>0}\bigcup_{n\geq 1}\la\mathcal{X}_1^n$.
\end{lemma}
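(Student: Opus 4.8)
The statement claims that $Q \in \mathcal{M}$ if and only if $E_Q[X_1] \le 0$ for all $X \in \bigcup_{\lambda>0}\bigcup_{n\ge 1}\lambda\mathcal{X}_1^n$. One direction is essentially trivial: since $\bigcup_{n\ge 1}\mathcal{X}_1^n \subseteq \mathcal{X}_1$ (because $\mathcal{X}_1$ is the Emery-closure of this union), any separating measure for $\mathcal{X}$ is a fortiori a separating measure for $\bigcup_{\lambda>0}\bigcup_{n\ge 1}\lambda\mathcal{X}_1^n$. So $Q \in \mathcal{M}$ implies $E_Q[X_1] \le 0$ for all $X$ in the smaller set. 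The content is the converse: a measure that separates the non-closed set $\bigcup_{\lambda>0}\bigcup_{n\ge 1}\lambda\mathcal{X}_1^n$ automatically separates its Emery-closure as well.

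For the converse, fix $Q \ll P$ with $E_Q[X_1] \le 0$ for all $X \in \bigcup_{\lambda>0}\bigcup_{n\ge 1}\lambda\mathcal{X}_1^n$, and let $X \in \mathcal{X}$, say $X \in \lambda\mathcal{X}_1$ for some $\lambda>0$; by scaling we may assume $\lambda = 1$, so $X \in \mathcal{X}_1 = \overline{\bigcup_{n\ge 1}\mathcal{X}_1^n}^{\mathbb{S}}$. Pick a sequence $X^k \in \bigcup_{n\ge 1}\mathcal{X}_1^n$ converging to $X$ in the Emery topology. Each $X^k$ is bounded below by $-1$, and so is $X$ (the lower bound $-1$ is preserved under Emery limits, since Emery convergence implies uniform convergence in probability along a subsequence, hence a.s.\ convergence of $X^k_t \to X_t$ along a further subsequence for each $t$, or more cleanly $u.c.p.$ convergence). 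The key point is that Emery convergence implies $\sup_{t\le 1}|X^k_t - X_t| \to 0$ in probability; in particular $X^k_1 \to X_1$ in probability. Since $X^k_1 \ge -1$ for all $k$, the family $\{(X^k_1)^-\}$ is uniformly bounded (by $1$), so by Fatou's lemma applied to $X^k_1 + 1 \ge 0$ we get
\[
E_Q[X_1 + 1] \le \liminf_{k\to\infty} E_Q[X^k_1 + 1],
\]
hence $E_Q[X_1] \le \liminf_k E_Q[X^k_1] \le 0$, using $E_Q[X^k_1] \le 0$. This gives $Q \in \mathcal{M}$.

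The only genuine obstacle is making sure that the passage to the limit in the expectation is legitimate, i.e.\ handling the integrability/convergence carefully: Emery convergence gives convergence in $u.c.p.$, hence $X^k_1 \to X_1$ in $P$-probability and therefore in $Q$-probability (as $Q \ll P$), and a subsequence converges $Q$-a.s.; combined with the uniform lower bound $-1$ this is exactly the hypothesis of Fatou. I would also remark that $X_1 \in L^1(Q)$ need not hold a priori, but that is irrelevant: the inequality $E_Q[X_1] \le 0$ is meaningful and correct for $X_1$ bounded below (the positive part may have infinite $Q$-expectation, in which case the inequality is simply false — so in fact one also concludes $X_1 \in L^1(Q)$ is not needed, but if one wants it one notes that a separating measure forces $E_Q[X_1^+] < \infty$ once $E_Q[X_1^-] < \infty$ is known; this is a standard remark and I would not belabor it). Thus the proof is short: triviality in one direction, and a one-line Fatou argument using Emery-continuity of terminal values and the uniform admissibility bound in the other.
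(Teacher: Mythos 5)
Your proof is correct and follows essentially the same route as the paper: the easy direction by the inclusion $\bigcup_{\la>0}\bigcup_{n\geq 1}\la\mathcal{X}_1^n\subseteq\mathcal{X}$, and the converse by taking an Emery-convergent approximating sequence in $\bigcup_{n\geq 1}\mathcal{X}_1^n$, using that Emery convergence gives $X^k_1\to X_1$ in probability, and applying Fatou's lemma with the uniform lower bound $-1$. Your extra remarks (passing from $P$- to $Q$-convergence via $Q\ll P$, extracting an a.s.\ convergent subsequence, and the irrelevance of a priori $L^1(Q)$-integrability of $X_1^+$) just make explicit details the paper leaves implicit.
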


\begin{proof}
Let $Q$ satisfy the defining condition of (ESM'). In particular, $E_Q[X_1]\leq 0$ for all $X\in
\bigcup_{n\geq 1}\mathcal{X}_1^n$. We have to show that $E_Q[Y_1]\leq 0$ for $Y\in\mathcal{X}_1$. By definition there exists a sequence of processes $X^k\in \bigcup_{n\geq 1}\mathcal{X}_1^n$ such that $X^k\to Y$ in the Emery topology. This implies $X^k_1\to Y_1$ in probability. By assumption $X^k_1\geq -1$, for all $k$, and by (ESM') $E_Q[X^k_1]\leq 0$, for all $k$. So Fatou's lemma implies that $E_Q[Y_1]\leq 0$.

The other direction is clear as
$$\bigcup_{\la>0}\bigcup_{n\geq 1}\la\mathcal{X}_1^n\subseteq
\bigcup_{\la>0}\la\mathcal{X}_1=\mathcal{X}.$$
\end{proof}

Consider the pairing $(L^{\infty}, L^1)$ with the bilinear form $E_P[fg]$ for $f\in L^{\infty}$ and $g\in L^1$.
Let $B$ be any subset of $L^{\infty}$.
Define the polar set $B^{\circ}$ of $B$ as usual
$$B^{\circ}=\{g\in L^1: |E_P[fg]|\leq 1, \text{ for all }f\in B\}.$$
If $B$ is a convex cone (as is the case for $C$ as well as $\mathcal{C}$) the polar is given by
$$B^{\circ}=\{g\in L^1: E_P[fg]\leq 0, \text{ for all }f\in B\}.$$

We will now give the relation of the sets $C$ and $\mathcal{C}$. It turns out, that under (NAFLVR) the set $C$ is exactly the weak-$*$-closure of the set $\mathcal{C}$. This means that the closure in the Emery-topology in the definition of the set $\mathcal{X}_1$ is equivalent to weak-$*$-closing the set $\mathcal{C}$ (which was given similarly as $C$ but without the Emery-closure). So, in fact, the conditions (NAFL) and (NAFLVR) coincide.

\begin{theorem}\label{th:compC}
Suppose that (NAFLVR) holds. Then $C=\overline{\mathcal{C}}^*$.
\end{theorem}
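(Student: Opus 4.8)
The claim is that under (NAFLVR), $C = \overline{\mathcal{C}}^*$. My plan is to prove this by a two-sided polarity argument, exploiting the bipolar theorem in the pairing $(L^\infty, L^1)$ together with the identification $\mathcal{M} = C^\circ = \mathcal{C}^\circ$ of polar sets.

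\emph{Step 1: Relate $C$ and $\mathcal{C}$ at the level of generating sets.} Recall $\mathcal{C} = (\mathcal{K}_0 - L^0_{\geq 0})\cap L^\infty$, where $\mathcal{K}_0$ consists of terminal values of admissible portfolios in the small markets \emph{before} taking any Emery closure, whereas $C = (K_0 - L^0_{\geq 0})\cap L^\infty$ uses $K_0$, the terminal values of the Emery-closed set $\mathcal{X}$. Since $\bigcup_{n}\mathcal{X}^n \subseteq \mathcal{X}$, we immediately get $\mathcal{K}_0 \subseteq K_0$, hence $\mathcal{C} \subseteq C$. Because $C$ is a convex cone, I would next argue $C$ is weak-$*$-closed — but this is precisely the content of Theorem~\ref{th:main}, which I may assume: under (NAFLVR), $C = \overline{C}^*$. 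Therefore $\overline{\mathcal{C}}^* \subseteq \overline{C}^* = C$, giving one inclusion for free.

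\emph{Step 2: The reverse inclusion via polars.} For $C \subseteq \overline{\mathcal{C}}^*$, by the bipolar theorem it suffices to show $\mathcal{C}^\circ \subseteq C^\circ$ (the opposite inclusion $C^\circ \subseteq \mathcal{C}^\circ$ being trivial from $\mathcal{C}\subseteq C$), since then $\overline{\mathcal{C}}^* = (\mathcal{C}^\circ)^\circ \supseteq (C^\circ)^\circ = \overline{C}^* = C$. Concretely, normalizing, $\mathcal{C}^\circ$ corresponds (up to scaling of densities) to the set of absolutely continuous measures $Q \ll P$ with $E_Q[f]\leq 0$ for all $f\in\mathcal{C}$, i.e.\ to absolutely continuous separating measures for $\bigcup_{\la>0}\bigcup_{n}\la\mathcal{X}_1^n$; similarly $C^\circ$ corresponds to $\mathcal{M}$. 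So the needed inclusion is exactly the statement that every absolutely continuous separating measure for the small-market portfolios is already a separating measure for the Emery-closed large-market portfolios — and this is precisely Lemma~\ref{th:compESM}, which I may assume. The Fatou-lemma argument there (Emery convergence implies convergence in probability of terminal values, the $-1$ lower bound gives the uniform integrability needed for Fatou) is the crux, and it is already done.

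\emph{Step 3: Assemble.} Combining Step 1 ($\overline{\mathcal{C}}^* \subseteq C$) and Step 2 ($C \subseteq \overline{\mathcal{C}}^*$) yields $C = \overline{\mathcal{C}}^*$. The only genuine input beyond bookkeeping is Theorem~\ref{th:main} (weak-$*$-closedness of $C$ under (NAFLVR)) and Lemma~\ref{th:compESM} (agreement of the polar cones); both are available. I expect the main obstacle — were it not already dispatched in the excerpt — to be Theorem~\ref{th:main} itself, i.e.\ showing the Emery-closure in the definition of $\mathcal{X}_1$ suffices to force weak-$*$-closedness of $C$ without a concatenation property on all of $\mathcal{X}_1$; granting that, the present theorem is essentially a repackaging via the bipolar theorem.
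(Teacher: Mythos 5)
Your proposal is correct and takes essentially the same route as the paper: both arguments rest on the bipolar theorem in the pairing $(L^\infty,L^1)$, on the weak-$*$-closedness of $C$ under (NAFLVR) from Theorem~\ref{th:main}, and on Lemma~\ref{th:compESM} identifying the polar cones with (scaled) absolutely continuous separating measures. The only cosmetic difference is that you get $\overline{\mathcal{C}}^*\subseteq C$ directly from $\mathcal{C}\subseteq C$ and the weak-$*$-closedness of $C$, whereas the paper establishes the full equality $C^{\circ}=\mathcal{C}^{\circ}=\bigcup_{\alpha\geq 0}(\alpha\mathcal{M})$ and then bipolarizes both sides.
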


\begin{proof}
We will show  that
$$C^{\circ}=\mathcal{C}^{\circ}=\bigcup_{\alpha\geq 0}(\alpha \mathcal{M}).$$
Indeed each element $g$ of the right hand side satisfies
$E_P[fg]\leq 0$, for each $f\in C$ or $\in\mathcal{C}$, see Lemma~\ref{th:compESM}. Therefore
$\bigcup_{\alpha\geq 0}(\alpha \mathcal{M})\subseteq C^{\circ}, \mathcal{C}^{\circ}$. On the other hand, let $g\in C^{\circ}$ or $\mathcal{C}^{\circ}$. In both cases, as
$$-L_{\geq 0}^\infty\subseteq C, \mathcal{C},$$
we immediately get that $g\geq 0$. Assume the non-trivial case that $P(g>0)>0$. Define $Q$ by $\frac{dQ}{dP}=\frac{g}{E_P[g]}$, then this is an absolutely continuous measure such that $E_Q[f]\leq 0$ for all $f\in C$, or $f \in\mathcal{C}$, respectively. This yields the above claim.

As by assumption (NAFLVR) holds we know that $C$ is a weak-$*$-closed subset of $L^{\infty}$. Moreover, $C$ is a convex cone.
By the Bipolar Theorem the bipolar $C^{\circ\circ}$ is the weak-$*$-closure of $C$, see~\cite{SW:99}. Hence
$$C^{\circ\circ}=C.$$ On the other hand
$\mathcal{C}$ is a convex cone as well. Therefore, again by the Bipolar Theorem,
$$\mathcal{C}^{\circ\circ}=\overline{\mathcal{C}}^*.$$
As by the above reasoning
$C^{\circ}=\mathcal{C}^{\circ}$, it follows that $C=\overline{\mathcal{C}}^*$.
\end{proof}

\subsection{Connection to bond markets as in Klein, Schmidt, Teichmann~\cite{KST:13}}

In the recent article~\cite{KST:13}, treating the particular setting of default-free bond markets with all maturities up to a finite time horizon $T^*$,  a condition is derived, which allows to conclude (NFL) for markets generated by self-financing investments into finitely many bonds (assuming the bond with maturity $T^*$ as the market's num\'eraire). This is also a large financial market with uncountably many assets. However, the condition only involves a countable, dense sub-(large-financial)-market, i.e.~a subset of bonds
(including the terminal one) whose maturities are dense in $[0,T^*]$: it is assumed that the classical (NAFL) condition holds for this
sub-market, and, more importantly, it is assumed that a uniform version of right continuity of the bond prices (w.r.t. to maturity) holds for the whole market. This allows -- in the locally bounded setting of~\cite{KST:13}
-- to conclude the existence of an equivalent local martingale measure for the whole discounted bond market, which in turn means (NAFLVR), or (NFL), in our large financial market setting. Hence the seemingly abstract condition (NAFLVR) can in many circumstances be derived from countable sub-markets under appropriate continuity assumptions.

\section{NA, NUPBR and NAA1}\label{AA1etc}

This section is devoted to recall the above notions of \emph{no (asymptotic) arbitrage} from the literature
and to analyze their relations, also with a view to the proof of Theorem~\ref{th:FTAPLFM} presented in Section~\ref{proof}, where the following equivalence stated in Proposition~\ref{prop:NAFLVR=NUPBRNA} below is needed:
\[
\textrm{(NAFLVR)} \Leftrightarrow \textrm{(NUPBR)+(NA)}.
\]

We start with \emph{no arbitrage} (NA), which is defined analogously to small markets.
In our context of large financial markets it means that almost surely nonnegative terminal values of admissible \emph{generalized} portfolios have to be almost surely equal to zero, which reads in formulas as follows:

\begin{description}
\item[ (NA) ] The set $\mathcal{X}_1$ is said to satisfy \emph{no arbitrage} if
$$K_0\cap L^0_{\geq0}=\{0\},$$
which is equivalent to $C_0\cap L^0_{\geq0}=\{0\}$ and
\begin{align}\label{eq:NAeq}
C \cap L^{\infty}_{\geq 0} =\{0\}.
\end{align}
\end{description}

It is well known that if, in a small financial market satisfying (NA), the terminal value of a portfolio is bounded from below by a constant, the whole portfolio wealth process is bounded from below by this constant.
The following lemma transfers this property to our setting of large financial markets and will be needed in the proof of Theorem~\ref{th:FTAPLFM}. The proof, however, is a bit more involved than in the small markets case.

\begin{lemma}\label{Y_in_X1}
Let $Y \in \mathcal{X}$. If (NA) holds and $Y_1 \geq -1$, then $Y \in \mathcal{X}_1$.
\end{lemma}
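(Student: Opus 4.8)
The plan is to establish the statement by first reducing to the dense subset $\bigcup_{n\geq 1}\mathcal{X}_1^n$ where we have genuine concatenation (Lemma~\ref{lem:concatenation}), proving the analogous lower-bound property there, and then passing to the Emery closure. More precisely, given $Y\in\mathcal{X}$ with $Y\in\lambda\mathcal{X}_1$ for some $\lambda>0$ and $Y_1\geq -1$, I would pick a sequence $X^k\in\bigcup_{n\geq 1}\mathcal{X}_1^n$ (scaled appropriately, say $X^k\in\lambda\bigcup_{n\geq 1}\mathcal{X}_1^n$) with $X^k\to Y$ in the Emery topology. The Emery convergence gives uniform convergence in probability, so in particular $X^k_1\to Y_1$ in probability. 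The key difficulty is that the $X^k$ are only $\lambda$-admissible, not $1$-admissible, and their terminal values need not be $\geq -1$; so one cannot directly conclude $Y\in\mathcal{X}_1$.

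The central step, and the main obstacle, is a \emph{stopping argument}. For the small-market processes $X^k$ (each living in some $\mathcal{X}_1^{A_k}$), I would stop $X^k$ at the first time $\sigma_k$ that $X^k$ drops below $-1-\varepsilon_k$ for a suitable $\varepsilon_k\downarrow 0$; the stopped process $(X^k)^{\sigma_k}$ is bounded below by roughly $-(1+\varepsilon_k)$ and, crucially, by the concatenation property of $\mathcal{X}_1^{A_k}$ (applying it with $H = 1_{[\![0,\sigma_k]\!]}$, $G=0$), the stopped process still lies in $\mathcal{X}_1^{A_k}$, hence in $\bigcup_n\mathcal{X}_1^n$. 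One then needs to show that $(X^k)^{\sigma_k}$ still converges to $Y$ in the Emery topology: this requires that $P(\sigma_k<1)\to 0$, which is where (NA) enters. Indeed, on $\{\sigma_k<1\}$ the process $X^k$ has gone below $-1-\varepsilon_k$; if this happened with non-vanishing probability along a subsequence, one could, using the concatenation property again to ``freeze'' the portfolio after it reaches a low level, manufacture a portfolio in $\bigcup_n\mathcal{X}_1^n$ (or a limit thereof in $\mathcal{X}_1$) whose terminal value is a nontrivial nonnegative random variable — contradicting (NA) in the form $K_0\cap L^0_{\geq 0}=\{0\}$, or rather $C\cap L^\infty_{\geq 0}=\{0\}$ after truncation.

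Let me describe the arbitrage-extraction more carefully, since it is the heart of the matter. Suppose, for contradiction, that along a subsequence $P(\sigma_k<1)\geq \delta>0$. Since $Y_1\geq -1$ and $X^k_1\to Y_1$ in probability, for large $k$ we have $X^k_1\geq -1-\eta$ with probability close to $1$ for any fixed small $\eta$. Now consider the strategy that follows $X^k$ but, at time $\sigma_k$ (when $X^k_{\sigma_k}\approx -1-\varepsilon_k$), stops — i.e.\ take $\wt X^k = (X^k)^{\sigma_k}\in\mathcal{X}_1^{A_k}$ up to a normalizing constant. Comparing $\wt X^k_1 = X^k_{\sigma_k}$ on $\{\sigma_k<1\}$ with the non-stopped value $X^k_1$: the difference $X^k_1 - \wt X^k_1 = (X^k_1 - X^k_{\sigma_k})1_{\{\sigma_k<1\}}$ is the gain of an admissible strategy (the ``tail'' of $X^k$ after $\sigma_k$, which again lies in $\bigcup_n\mathcal{X}_1^n$ up to scaling by the concatenation property with $H=1_{]\!]\sigma_k,1]\!]}$), and on $\{\sigma_k<1\}$ this tail starts from $\approx -1-\varepsilon_k$ and ends at $\geq -1-\eta$, i.e.\ it is bounded below and has a limiting terminal value that is $\geq -\eta + \varepsilon_k \to $ something nonnegative in the limit, strictly positive on a set of probability $\geq\delta$ — after passing to an a.s.-convergent subsequence and truncating from above, this produces an element of $C\cap L^\infty_{\geq 0}$ that is not $0$, contradicting~\eqref{eq:NAeq}. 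Hence $P(\sigma_k<1)\to 0$, so $(X^k)^{\sigma_k}\to Y$ in the Emery topology (the Emery distance between $X^k$ and $(X^k)^{\sigma_k}$ is controlled by $P(\sigma_k<1)$ plus the size of the overshoot, both vanishing), and rescaling $(X^k)^{\sigma_k}$ slightly to bring the lower bound back to exactly $-1$ (multiplying by $1/(1+\varepsilon_k)\to 1$, which does not affect Emery convergence) exhibits $Y$ as an Emery limit of elements of $\bigcup_n\mathcal{X}_1^n$ that are genuinely $1$-admissible. Therefore $Y\in\mathcal{X}_1$.

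I would expect the bookkeeping with the overshoot of $X^k$ below $-1-\varepsilon_k$ (which for general semimartingales can be large at a single jump) to require some care — one may need to choose $\varepsilon_k$ and the stopping level adaptively, or stop strictly before the level is crossed using predictability, or absorb the overshoot into the ``risk'' that is shown to vanish. An alternative, possibly cleaner route is to invoke the already-established equivalence (NAFLVR)~$\Leftrightarrow$~(NUPBR)+(NA) together with the (NUPBR) boundedness-in-probability of $K_0^1$ to get uniform control on the negative excursions of the approximating sequence; but since Lemma~\ref{Y_in_X1} is used \emph{in the proof} of that equivalence according to the surrounding text, one should keep the argument self-contained and rely only on (NA). The concatenation property on the dense set (Lemma~\ref{lem:concatenation}), used repeatedly to legitimize stopping and restarting the small-market portfolios, is the structural tool that makes all of this go through.
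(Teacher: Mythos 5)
Your overall strategy (work on the Emery--dense set where Lemma~\ref{lem:concatenation} applies, extract an arbitrage from the ``tail after a dip below $-1$'', pass to the limit) is in the right circle of ideas, but the central step does not go through as proposed. You need $P(\sigma_k<1)\to 0$, and you claim this follows from (NA) because otherwise the tail after $\sigma_k$ yields a nontrivial nonnegative payoff. With your \emph{vanishing} stopping levels $-1-\varepsilon_k$ this is not true: on $\{\sigma_k<1\}$ the tail payoff is $X^k_1-X^k_{\sigma_k}\geq (X^k_1+1)+\varepsilon_k$, and since $X^k_1\to Y_1\geq-1$ in probability, the guaranteed gain converges to $Y_1+1\geq 0$, which may be identically $0$ on the relevant event (think of paths that cross the level continuously, dip further, and return to exactly $-1$ at time $1$). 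So the limit of your tail payoffs can be $0$ a.s.; the assertion ``strictly positive on a set of probability $\geq\delta$'' is unsupported, no nonzero element of $K_0\cap L^0_{\geq 0}$ or of $C\cap L^\infty_{\geq 0}$ is produced, and the contradiction with (NA), hence the claim $P(\sigma_k<1)\to0$, is not established. The missing idea, which is exactly what the paper uses, is a \emph{fixed} level strictly below $-1$: arguing by contradiction, if $Y\notin\mathcal{X}_1$ then (with $\lambda>1$) there are a deterministic time $t$, $\varepsilon$ with $\lambda-\varepsilon>1$ and $\alpha>0$ such that $P(Y_t\leq-(\lambda-\varepsilon))=\alpha$, and the tail strategy $\ind_D\ind_{]t,1]}$ has payoff at least $\delta=\lambda-\varepsilon-1>0$ on $D$ --- a fixed gain on a fixed-probability event. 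The technical work is then to show this single tail process lies in $\mathcal{X}_1$ (approximating $D$ by $D^k=\{Y^k_t\leq-(\lambda-\varepsilon)+2^{-k}\}$ along the approximating sequence), which contradicts (NA) via \eqref{eq:NAeq}; note the paper never needs to exhibit $1$-admissible approximants of $Y$ at all.

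There is a second, unresolved gap even granting $P(\sigma_k<1)\to0$: the stopped processes $(X^k)^{\sigma_k}$ are only $\lambda$-admissible, since at $\sigma_k$ the process may jump down to $-\lambda$; they are therefore not $(1+\varepsilon_k)$-admissible, and rescaling by $1/(1+\varepsilon_k)$ does not exhibit $Y$ as an Emery limit of elements of $\bigcup_n\mathcal{X}_1^n$. You flag this overshoot problem, but none of the suggested remedies is carried out, and the natural ones fail: stopping ``strictly before'' the crossing is not available because $1_{[0,\sigma_k)}$ is in general not predictable, and the overshoot cannot be absorbed into a vanishing risk because admissibility is a pathwise constraint that a small-probability event already destroys. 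The paper sidesteps this issue entirely: its approximating integrands are increments from the deterministic time $t$ on the event $D^k$, so the global bound $Y^k\geq-\lambda$ combined with $Y^k_t\leq-(\lambda-\varepsilon)+2^{-k}$ on $D^k$ gives $(H^k\bullet Y^k)\geq-\varepsilon\geq-1$ directly, with no overshoot to control. In short: same tools, but the two key steps of your route (the (NA)-based control of the stopping probability and the admissibility of the stopped approximants) both have genuine holes, and repairing the first essentially forces you back to the paper's fixed-level, proof-by-contradiction argument.
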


\begin{proof}
We will show that $Y\in\mathcal{X}_1$.
By assumption $Y \in \mathcal{X}=\bigcup_{\la>0}\la \mathcal{X}_1$. With $ \lambda:=\sup_{t \in [0,1]} \|Y^-_t\|_{L^{\infty}} >0$ we obtain $Y=\la Z$ with $Z\in \mathcal{X}_1$ by Assumption \eqref{ass:basic_NA}.  We shall prove by contradiction that $\lambda=1$: assume that $Y\notin\mathcal{X}_1$, or equivalently, $\la>1$.

Then there exists $t\in[0,1)$,  $\alpha>0$ and $1>\ep>0$ with $\la-\ep>1$ such that
\begin{equation}\label{setD}
P(Y_t\leq -(\la-\ep))=\alpha>0.
\end{equation}
Let $\delta=\la-\ep-1>0$.
Set $D=\{Y_t\leq -(\la-\ep)\}$. Define the simple predictable integrand
$H_u=\ind_{D}\ind_{]t,1]}$. Then
$$(H\bullet Y)_1=\ind_D(Y_1-Y_t)\geq \ind_D(-1+\la-\ep)=\delta\ind_D,$$
hence $(H\bullet Y)_1\geq 0$ a.s. and $P((H\bullet Y)_1\geq\delta)=\alpha>0$.

Our next step is to show that $(H\bullet Y)\in\mathcal{X}$, then we get a contradiction to (NA). We will show that $(H\bullet Y)$ is actually in $\mathcal{X}_1$. Indeed, by definition of $\mathcal{X}_1$, there exists a sequence $Z^k\in \bigcup_{n\geq 1}\mathcal{X}^n_1$ such that $Z^k\to Z$ in the Emery-topology. There exist finite subsets $A^k$ of $I$ with $Z^k\in\mathcal{X}_1^{A^k}$.  Moreover, clearly, $Y^k:=\la Z^k$ Emery-converges to $Y=\la Z$. It is easy to see that
$$(H\bullet Y^k)\to (H\bullet Y)$$ in the Emery-topology, as for each $K\in b \mathcal{E}, \, {\| K\|}_{\infty} \leq 1$, we have that
$$(K\bullet((H\bullet Y^k)-(H\bullet Y)))=(KH\bullet (Y^k-Y)),$$
and $KH\in b \mathcal{E}, \, {\| KH\|}_{\infty} \leq 1$.

However, we cannot say that $(H\bullet Y^k)\in \mathcal{X}_1^{A^k}\subseteq \bigcup_{n\geq 1}\mathcal{X}^n_1$ because $(H\bullet Y^k)$ might not be $\geq -1$. Therefore we approximate $H$ by predictable simple integrands $H^k=\ind_{D^k}\ind_{]t,1]}$, where
$$D^k=\{Y^k_t\leq -(\la-\ep)+2^{-k}\}.$$
As $Y^k\to Y$ in the Emery-topology, then, in particular,
$Y^k_t\to Y_t$ in probability and hence $\ind_{D^k}\to\ind_D$ in probability.
Moreover, $H^k\geq 0$ and
$$(H^k\bullet Y^k)_u=\ind_{D^k}(Y^k_u-Y^k_t)\geq \ind_{D^k}(-\la+\la-\ep-2^{-k})\geq -\ep\geq-1,$$
for each $u\geq t$. Hence, by an application of the concatenation property Lemma~\ref{lem:concatenation}, $(H^k\bullet Y^k)\in \mathcal{X}_1^{A^k}\subseteq \bigcup_{n\geq 1}\mathcal{X}^n_1$.

Finally, we show that
$$d_{\mathbb{S}}((H^k\bullet Y^k),(H\bullet Y^k))\to0,$$
for $k\to\infty$. Indeed, let $J^k=H^k-H$, then $J^k\ne 0$ only on the set
$E^k=D\setminus D^k\cup D^k\setminus D$, where $P(E^k)\to0$, for $k\to\infty$.
For each $K\in  b \mathcal{E}, \, {\| K\|}_{\infty} \leq 1$ and $u\geq t$ we have that
\begin{align} \left(K\bullet((H^k\bullet Y^k)-(H\bullet Y^k))\right)_u
&=\left((H^k-H)\bullet (K\bullet Y^k)\right)_u\nonumber\\
&=(\ind_{D^k}-\ind_{D})((K\bullet Y^k)_u-(K\bullet Y^k)_t).\nonumber
\end{align}
Hence, for $g^k_u=(K\bullet Y^k)_u-(K\bullet Y^k)_t$,
\begin{align}
&\sup_{K \in b \mathcal{E}, \, {\| K\|}_{\infty} \leq 1}E\big[|K\bullet((H^k\bullet Y^k)-(H\bullet Y^k))|_1^*\wedge 1\big]\nonumber\\
\leq
&\sup_{K \in b \mathcal{E}, \, {\| K\|}_{\infty} \leq 1}E\big[|\ind_{D^k}-\ind_{D}|(\sup_{u\in[0,1]}|g^k_u|\wedge 1)\big],\nonumber\\
\leq &E\big[|\ind_{D^k}-\ind_{D}|\sup_{K \in b \mathcal{E}, \, {\| K\|}_{\infty} \leq 1}(\sup_{u\in[0,1]}|g^k_u|\wedge 1)\big].\label{a}
\end{align}
 But as
$|\ind_{D^k}-\ind_{D}|=\ind_{E^k}\to 0$ in probability and as everything in (\ref{a}) is bounded by $1$
we get by dominated convergence that the expected value in (\ref{a}) converges to $0$. This implies
$d_{\mathbb{S}}((H^k\bullet Y^k),(H\bullet Y^k))\to0$.

This concludes the proof as we found $(H^k\bullet Y^k)\in \bigcup_{n\geq 1}\mathcal{X}^n_1$ with $(H^k\bullet Y^k)\to(H\bullet Y)$ in the Emery-topology and hence
$(H\bullet Y)\in\mathcal{X}_1$ contradicting (NA).
\end{proof}

Let us now introduce the conditions ((N)AA1) and (NUPBR) which turn out to be equivalent.

\begin{definition}\label{def:AA1}
There exists an \emph{asymptotic arbitrage of the first kind} (AA1) if there exist some $\alpha > 0$ and sequences $\varepsilon_k \to 0$, $c_k \to \infty$ and $X^k$ such that for each $k \in \mathbb{N}$
\begin{enumerate}
\item  $X^k \in  \varepsilon_k \bigcup_{n\geq 1} \mathcal{X}_1^n$,
\item $P[ X^k \geq c_k] \geq \alpha $,
\end{enumerate}
\end{definition}

\begin{description}
\item[(NAA1)] \emph{No asymptotic arbitrage of the first kind} holds if there exists no (AA1).\\
\item[(NUPBR)] The set $\mathcal{X}_1$ is said to satisfy \emph{no unbounded profit with bounded risk} if $K^1_0$ is a bounded subset of $L^0$.
\end{description}

Arbitrage of the first kind, which first appeared under this name in~\cite{I:87}, was introduced in the context of large financial markets by Y.~Kabanov and D.~Kramkov~\cite[Definition 1]{KK:94}.
Later it was taken up by C.~Kadaras and I.~Karatzas~\cite{KK:07} in the context of classical small financial markets. In their setting (NA1) is shown to be equivalent to (the corresponding notion of) NUPBR.
The following proposition establishes the analogous result in our large financial market setting.

\begin{proposition}
(NAA1) $\Leftrightarrow$ (NUPBR)
\end{proposition}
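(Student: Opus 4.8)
The plan is to prove the equivalence by unwinding both definitions and showing they are two ways of saying that the family $\bigcup_{n\geq1}\mathcal{X}_1^n$ of time-$1$ values is bounded in probability (i.e., bounded in $L^0$). Recall (NUPBR) says $K_0^1$ is bounded in $L^0$, where $K_0^1$ consists of terminal values of elements of $\mathcal{X}_1=\overline{\bigcup_n\mathcal{X}_1^n}^{\mathbb{S}}$. The key preliminary observation is that $\bigcup_{n\geq1}\mathcal{X}_1^n$ is dense in $\mathcal{X}_1$ in the Emery topology, hence the set of terminal values $\mathcal{K}_0^1:=\{X_1 : X\in\bigcup_n\mathcal{X}_1^n\}$ is dense in $K_0^1$ for convergence in probability (Emery convergence implies u.c.p. convergence, in particular convergence at time $1$ in probability). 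Therefore $K_0^1$ is bounded in $L^0$ if and only if $\mathcal{K}_0^1$ is bounded in $L^0$: one direction is trivial since $\mathcal{K}_0^1\subseteq K_0^1$, and for the other, if $\mathcal{K}_0^1$ is $L^0$-bounded, given $X\in K_0^1$ pick $X^k\in\mathcal{K}_0^1$ with $X^k\to X$ in probability, so the tail probabilities $P[X>c]\leq\liminf_k P[X^k>c-1]$ are controlled uniformly. So it suffices to show that (NAA1) holds if and only if $\mathcal{K}_0^1$ is bounded in $L^0$.

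Next I would translate boundedness in $L^0$ into the negation of (AA1). Boundedness of $\mathcal{K}_0^1$ in $L^0$ means: for every $\alpha>0$ there is $c<\infty$ with $P[Y_1>c]<\alpha$ for all $Y\in\bigcup_n\mathcal{X}_1^n$. The failure of this is precisely: there exists $\alpha>0$ such that for every $c<\infty$ there is $Y^c\in\bigcup_n\mathcal{X}_1^n$ with $P[Y_1^c>c]\geq\alpha$. Taking $c=c_k\to\infty$ along a sequence yields elements $Y^k\in\bigcup_n\mathcal{X}_1^n$, i.e.\ $X^k:=Y^k\in\varepsilon_k\bigcup_n\mathcal{X}_1^n$ with $\varepsilon_k=1$, with $P[X_1^k\geq c_k]\geq\alpha$. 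This is almost (AA1) except (AA1) requires $\varepsilon_k\to0$. To get the vanishing-risk parameter, note that $\bigcup_n\mathcal{X}_1^n$ is a cone under multiplication by scalars in $(0,1]$ only in the weak sense that $\varepsilon\mathcal{X}_1^n\subseteq\mathcal{X}_\varepsilon^n\subseteq\mathcal{X}_1^n$ when $\varepsilon\le1$ — more directly, if $X^k=\mathbf{H}^{n_k}\bullet\mathbf{S}^{n_k}$ is $1$-admissible, then $\varepsilon_k X^k=(\varepsilon_k\mathbf{H}^{n_k})\bullet\mathbf{S}^{n_k}$ is $\varepsilon_k$-admissible, hence lies in $\mathcal{X}_1^{n_k}$ as well (being bounded below by $-\varepsilon_k\geq-1$), and by the concatenation/scaling structure $\varepsilon_k X^k\in\bigcup_n\mathcal{X}_1^n$. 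Now choose $\varepsilon_k\to0$ slowly enough that $\varepsilon_k c_k\to\infty$ (possible since $c_k\to\infty$; e.g.\ $\varepsilon_k=1/\sqrt{c_k}$): then $\tilde X^k:=\varepsilon_k X^k\in\varepsilon_k\bigcup_n\mathcal{X}_1^n$ and $P[\tilde X^k\geq\varepsilon_k c_k]=P[X^k\geq c_k]\geq\alpha$, with $\varepsilon_k c_k\to\infty$. Relabeling $\varepsilon_k c_k$ as the new $c_k$, this is exactly an (AA1). Conversely, if an (AA1) $(X^k,\varepsilon_k,c_k)$ exists, then since $\varepsilon_k\le1$ eventually, $X^k\in\varepsilon_k\bigcup_n\mathcal{X}_1^n\subseteq\bigcup_n\mathcal{X}_1^n$ (again using that scaling a $1$-admissible integral by $\varepsilon_k\le1$ keeps it admissible), and $P[X_1^k\geq c_k]\geq\alpha$ with $c_k\to\infty$ shows $\mathcal{K}_0^1$ is unbounded in $L^0$, hence so is $K_0^1$, violating (NUPBR).

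I expect the main obstacle to be the bookkeeping around admissibility and scaling: one must be careful that multiplying a generic element $X\in\bigcup_n\mathcal{X}_1^n$ by $\varepsilon\in(0,1]$ really lands back in $\bigcup_n\mathcal{X}_1^n$ rather than merely in some $\bigcup_n\lambda\mathcal{X}_1^n$. In the abstract Kabanov-type setting of Definition~\ref{setting} this requires noting that $\mathcal{X}_1^A$ is closed under the map $X\mapsto(\varepsilon\mathbf{1}\bullet X)$ for $\varepsilon\in(0,1]$ — which follows from the concatenation property with $H=\varepsilon\mathbf{1}$, $G=0$, since $\varepsilon X\geq-\varepsilon\geq-1$ — and in the concrete Examples~\ref{LFM} and~\ref{continuum} it is immediate since $\varepsilon\mathbf{H}$ is again a $1$-admissible integrand. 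The second delicate point is the density transfer of $L^0$-boundedness from $\mathcal{K}_0^1$ to $K_0^1$, which is where we use that Emery convergence implies convergence of terminal values in probability; boundedness in $L^0$ is preserved under such limits by the standard $\liminf$ estimate on tail probabilities. The rest is a routine reindexing of sequences. I would also remark that combining this Proposition with Lemma~\ref{Y_in_X1} and Proposition~\ref{prop:NAFLVR=NUPBRNA} (or rather, this result feeds into that decomposition) gives the chain (NAFLVR)\,$\Leftrightarrow$\,(NUPBR)+(NA)\,$\Leftrightarrow$\,(NAA1)+(NA) used in the proof of Theorem~\ref{th:FTAPLFM}.
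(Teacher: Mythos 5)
Your proof is correct and takes essentially the same route as the paper: both arguments rest on the Emery-density of $\bigcup_{n\geq 1}\mathcal{X}_1^n$ in $\mathcal{X}_1$ (so terminal values converge in probability) and on rescaling by $\varepsilon_k\to 0$ with $\varepsilon_k c_k\to\infty$ to manufacture, respectively refute, an (AA1). The only cosmetic differences are that you transfer the $L^0$-boundedness to the dense set up front and invoke closedness of $\mathcal{X}_1^A$ under scaling by $\varepsilon\in(0,1]$, whereas the paper simply divides $X^k$ by $\varepsilon_k$ to land in $\bigcup_{n\geq 1}\mathcal{X}_1^n$ directly.
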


\begin{proof}
We first prove (NUPBR) $\Rightarrow$ (NAA1). Assume by contradiction that there is an (AA1). Then there exists a sequence $X^k$ as in Definition~\ref{def:AA1}. For $Y^k=\frac{1}{\varepsilon_k} X^k \in \bigcup_{n\geq 1} \mathcal{X}_1^n$ we have for $k \in \mathbb{N}$
\[
P\left[Y^k \geq \frac{c_k}{\varepsilon_k}\right] \geq \alpha.
\]
This contradicts the boundedness of the set $K_0^1$ in $L_0$. Conversely, assume that $K_0^1$ is not bounded in $L_0$. Then there exists $X^n \in \mathcal{X}_1$ such that
\[
\lim_{n \to \infty} P[X_1^n \geq n] =\alpha >0.
\]
By definition of $\mathcal{X}_1$ there exists a sequence $\bigcup_{n\geq 1}X^n_1 \ni X^{k,n} \stackrel{k \to \infty}\to  X^n$ in the Emery topology. Hence
$X_1^{k,n} \stackrel{k \to \infty}\to  X_1^n$ in probability. For each $n \in \mathbb{N}$, we can choose $k_n$ such that
\[
\lim_{n \to \infty} P\left[X^{k_n,n}_1 \geq \frac{n}{2}\right]\geq \frac{\alpha}{2}.
\]
Then $Y^n:=\frac{1}{\sqrt{n}}X^{k_n,n}$ yields an (AA1).
\end{proof}

\begin{proposition}\label{prop:NAFLVR=NUPBRNA}
(NA) + (NUPBR) $\Leftrightarrow$ (NA) + (NAA1) $\Leftrightarrow$ (NAFLVR)
\end{proposition}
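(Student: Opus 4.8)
The plan is to prove the two equivalences in Proposition~\ref{prop:NAFLVR=NUPBRNA} by combining the already-established equivalence (NUPBR)~$\Leftrightarrow$~(NAA1) with a classical-style decomposition of (NAFLVR) into its ``bounded in probability'' part and its ``no arbitrage'' part, mimicking the Delbaen--Schachermayer argument~\cite{DS:94} but carried out with admissible \emph{generalized} portfolios and using the concatenation property on the dense set (Lemma~\ref{lem:concatenation}) together with Lemma~\ref{Y_in_X1}. Since the preceding proposition already gives (NA)+(NUPBR)~$\Leftrightarrow$~(NA)+(NAA1) for free, the real content is (NA)+(NUPBR)~$\Leftrightarrow$~(NAFLVR).

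\textbf{Step 1: (NAFLVR)~$\Rightarrow$~(NA).} This is immediate: (NA) says $C\cap L^\infty_{\geq0}=\{0\}$ (equivalently $K_0\cap L^0_{\geq0}=\{0\}$), and since $C\subseteq\overline C$, the hypothesis $\overline C\cap L^\infty_{\geq0}=\{0\}$ forces $C\cap L^\infty_{\geq0}=\{0\}$. One should also record that $K_0\cap L^0_{\geq0}=\{0\}$ is genuinely equivalent to $C\cap L^\infty_{\geq0}=\{0\}$, which is the standard truncation argument: if $f\in K_0\cap L^0_{\geq0}$ with $P(f>0)>0$, then $f\wedge n\in C\cap L^\infty_{\geq0}$ is nonzero for $n$ large (using that $C=(K_0-L^0_{\geq0})\cap L^\infty$ and $f-(f-f\wedge n)=f\wedge n$).

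\textbf{Step 2: (NAFLVR)~$\Rightarrow$~(NUPBR).} Argue by contraposition: suppose $K_0^1$ is not bounded in $L^0$, so there are $X^k\in\mathcal X_1$ and $\alpha>0$ with $P[X^k_1\geq k]\geq\alpha$ for all $k$ (after passing to a subsequence). Set $g_k:=\tfrac1k X^k_1\geq -\tfrac1k$. The negative parts $g_k^-\leq\tfrac1k\to0$ uniformly, so in $L^\infty$. Now apply the Komlós-type lemma (the standard ``convex combinations converge a.s.'' device of~\cite[Lemma A1.1]{DS:94}) to produce convex combinations $h_k\in\mathrm{conv}(g_k,g_{k+1},\dots)$ converging a.s.\ to some $[0,\infty]$-valued $h$; because $P[g_k\geq\alpha\text{ on a set of mass }\alpha]$ persists under convex combination in the right way, one arranges $h\not\equiv0$ (one may need to first normalize so the $g_k$ stay bounded in probability away from $0$ on a fixed-mass set — this is the one place requiring a little care). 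Since each $h_k$ lies in $C$ (convex cone, and $h_k\geq-\tfrac1k\geq$ a bounded random variable can be truncated into $C$) and $h_k^-\to0$ in $L^\infty$, one gets $h\wedge1\in\overline C\cap L^\infty_{\geq0}\setminus\{0\}$, contradicting (NAFLVR). Convexity of $\mathcal X_1$ — hence of $K_0^1$ and of $C$ — is what makes the convex-combinations step legitimate; note $\mathcal X_1$ is the Emery-closure of $\bigcup_n\mathcal X_1^n$, and while the $\mathcal X_1^n$ are not convex, their union's closure inherits convexity from the fact that $\mathcal X_1^{A}$ are convex and $A^1\cup A^2\in\bigcup_n\mathcal A^n$ (the concatenation/union structure).

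\textbf{Step 3: (NA)+(NUPBR)~$\Rightarrow$~(NAFLVR).} This is the substantive direction and the main obstacle. Take $f\in\overline C\cap L^\infty_{\geq0}$; we must show $f=0$. Write $f=\lim_m f_m$ in $L^\infty$-norm with $f_m\in C$, i.e.\ $f_m\leq (X^m)_1$ for some $X^m\in\mathcal X$ (i.e.\ $X^m\in\lambda_m\mathcal X_1$). The uniform convergence $\|f_m-f\|_\infty\to0$ gives $f_m\geq -\varepsilon_m$ with $\varepsilon_m\to0$, so $(X^m)_1\geq f_m\geq-\varepsilon_m$, whence $\tfrac1{\varepsilon_m}X^m\in\mathcal X_1$ has terminal value $\geq-1$; by Lemma~\ref{Y_in_X1} (using (NA)), in fact $X^m\in\varepsilon_m\mathcal X_1$, so $(X^m)_1\in\varepsilon_m K_0^1$. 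By (NUPBR), $K_0^1$ is bounded in $L^0$, hence $\varepsilon_m K_0^1\to\{0\}$ in $L^0$, so $(X^m)_1\to0$ in probability; then $f_m\leq (X^m)_1$ and $f_m\to f$ uniformly forces $f\leq0$, hence $f=0$. The delicate points to get right here are: (i) that Lemma~\ref{Y_in_X1} applies, which it does because $\varepsilon_m^{-1}X^m\in\mathcal X$ with terminal value $\geq-1$; (ii) that boundedness in $L^0$ scales correctly, i.e.\ for a set $B$ bounded in probability, $\sup_{g\in B}P[|g|>c/\varepsilon]\to0$ as $\varepsilon\to0$ for fixed threshold — elementary but it is exactly the quantitative form of (NUPBR) one needs; (iii) handling that $f_m\leq (X^m)_1$ only gives $f_m^+\leq (X^m)_1^+$, so $\limsup f_m \leq \limsup (X^m)_1 = 0$ in probability, and combined with $f_m\to f$ in $L^\infty$ (hence a.s.\ along a subsequence) yields $f\leq 0$ a.s.

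\textbf{Step 4: assembling.} Combining Steps 1--3 gives (NAFLVR)~$\Leftrightarrow$~(NA)+(NUPBR), and the previously proved (NUPBR)~$\Leftrightarrow$~(NAA1) gives (NA)+(NUPBR)~$\Leftrightarrow$~(NA)+(NAA1), completing the chain of equivalences. The hard part is Step~3, and within it the genuinely new ingredient over the classical proof is the appeal to Lemma~\ref{Y_in_X1} to move an unbounded-below rescaled generalized portfolio back into $\mathcal X_1$; everything else is the standard (NFLVR)$=$(NA)+(NUPBR) decomposition with $C$, $K_0$, $K_0^1$ replaced by their large-market generalized-portfolio analogues and convexity supplied by the union-closure structure of $\mathcal X_1$.
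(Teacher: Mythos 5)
Your proposal is correct and follows essentially the same route as the paper: (NAFLVR) $\Rightarrow$ (NA) is immediate, the boundedness direction is the standard truncation-plus-convex-combination argument (the paper delegates it to Kabanov's Lemma 2.2, a) $\Rightarrow$ b), where the genuine care point is passing from a.s.\ convergence of elements of $C$ with vanishing negative parts to membership in the \emph{norm} closure $\overline{C}$, e.g.\ via an Egorov-type truncation, rather than the non-vanishing of the limit that you flag), and the substantive direction hinges, exactly as in the paper, on Lemma~\ref{Y_in_X1} to place the rescaled generalized portfolios back into $\mathcal{X}_1$. The only difference is cosmetic: the paper runs the hard direction as (NA)+(NAA1) $\Rightarrow$ (NAFLVR) by exhibiting an (AA1) via the scaling $\frac{1}{\sqrt{\varepsilon_n}}X^n$, whereas you contradict (NUPBR) directly; given the already-proved equivalence (NAA1) $\Leftrightarrow$ (NUPBR), these are interchangeable.
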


\begin{proof}
The proof is similar to~\cite[Lemma 2.2]{kab:97}.
For the reader's convenience we provide the corresponding arguments in our setting. We start by showing (NAFLVR) $\Rightarrow$ (NA)+ (NAA1). By~\eqref{eq:NAeq},
(NA) follows trivially. Assume there exists an (AA1). Then there exists a sequence $X^k$ as in Definition~\ref{def:AA1}. By defining $f^k:=X^k_1\wedge 1 \in C$ and proceeding as in ~\cite[Lemma 2.2, a) $\Rightarrow$ b)]{kab:97}, we get a contradiction to (NAFLVR).

Conversely, for showing (NA) + (NAA1) $\Rightarrow$ (NAFLVR), suppose (NAFLVR) fails. Then there exist $f^n \in C$ and $f\geq 0$ such that  $P[f >0]>0$ and $\| f^n-f\|_{L^{\infty}} \to 0$.
Hence, there exists  $X_1^n \in \mathcal{X}$ such that $X^n_1 \geq f_n \geq -\| (f^n)^-\|_{L^{\infty}}=:-\varepsilon_n \to 0$.
By Lemma~\ref{Y_in_X1}, (NA) implies that $X^n \in \varepsilon_n \mathcal{X}_1$ and $\frac{1}{\sqrt{\varepsilon_n}} X^n$ yields an (AA1).
\end{proof}

\section{Why alternative portfolio wealth process sets in large financial markets are not suitable}\label{sec:wrong}

Instead of the set $\mathcal{X}$ introduced in Definition~\ref{largeX1}, we here consider two different possibilities to define  portfolio wealth process sets in large financial markets.
In both cases we do not have the desired existence of an equivalent separating measure, which thus illustrates the importance of taking the Emery closure of $1$-admissible portfolios as it is done in Definition~\ref{largeX1}.

\subsection{Importance of \emph{uniformly} admissible approximating portfolio wealth processes}\label{sec:differenceclosure}
Take as set of portfolio wealth processes
\[
\overline{\mathcal{X}}^{\mathbb{S}}=\overline{\bigcup_{\la>0}\bigcup_{n\geq 1}\la\mathcal{X}_1^n}^{\mathbb{S}}.
\]
Then it is \emph{not} possible to conclude that a separating measure for $\mathcal{X}$ is also one for $\overline{\mathcal{X}}^{\mathbb{S}}$.
Indeed, there is a substantial difference in taking the Emery closure of the $1$-admissible portfolios and taking the union over all $\lambda$, as we do in Definition~\ref{largeX1}, and taking  the Emery closure of $\bigcup_{\la>0}\bigcup_{n\geq 1}\la\mathcal{X}_1^n$.
Indeed, an example from~\cite[Example 2]{DP:06} shows that
\[\overline{\bigcup_{\la>0}\bigcup_{n\geq 1}\la\mathcal{X}_1^n}^{\mathbb{S}}=\overline{\mathcal{X}}^{\mathbb{S}} \supset \mathcal{X}=\bigcup_{\la>0} \lambda \overline{\bigcup_{n\geq 1} \mathcal{X}_1^n}^{\mathbb{S}}.
\]
There, a sequence of martingales $(S^n)_{n \in \mathbb{N}}$ and a sequence of $\mathbb{R}^n$-valued strategies $\mathbf{H}^n$ is constructed such that for all $n\in \mathbb{N}$
\[
(\mathbf{H}^n \bullet \mathbf{S}^n) \geq -n,
\]
where $\mathbf{S}^n =(S^1, \ldots, S^n)$. By Ansel-Stricker's lemma (see, e.g.,~\cite{DP:07}), this implies that $(\mathbf{H}^n \bullet \mathbf{S}^n)$ is a local martingale for all $n\in \mathbb{N}$. However, this does not hold true any longer for the limit. Indeed as shown in~\cite[Example 2]{DP:06}, the sequence $(\mathbf{H}^n \bullet \mathbf{S}^n)$ converges in the Emery topology to the increasing process $A_t=t$.

If we now define $\mathcal{X}^n_1$ with respect to the above sequence of  martingales $(S^n)_{n \in \mathbb{N}}$ similar as in Section~\ref{sec:DeDonno}, $(\mathbf{H}^n \bullet \mathbf{S}^n) \in n\mathcal{X}^n_1$ for every $n \in \mathbb{N}$ and thus clearly also in $\bigcup_{\la>0}\bigcup_{n\geq 1}\la\mathcal{X}_1^n$. The limit process $A_t=t$ thus lies in $\overline{\mathcal{X}}^{\mathbb{S}}$. But there is certainly no measure $Q \sim P$ such that $E_{Q}[A_1]=E_{Q}[1]\leq 0$. The set $\mathcal{X}$ in contrast satisfies the (ESM) property. Indeed, in this case only \emph{uniformly} admissible portfolio wealth processes in the small markets are taken into account, i.e., only those strategies for which we have, for all $n \in \mathbb{N}$, $(\mathbf{H}^n \bullet \mathbf{S}^n)\geq -\lambda$ for some $\lambda\geq 0$. By Fatou's lemma this then implies that the corresponding Emery limits are supermartingales, whence the (ESM) property (and thus also (NAFLVR)) holds for $\mathcal{X}$.

In particular, this example shows that the set $\overline{\mathcal{X}}^{\mathbb{S}}$ is in general too large and leads to asymptotic arbitrages which one should exclude from the outset. On the other hand just taking $\bigcup_{\la>0}\bigcup_{n\geq 1}\la\mathcal{X}_1^n$  without any closure will be too small, as shown in the next section.

\subsection{No naive ``no free lunch with vanishing risk''}

Consider similarly to Section~\ref{sec:NAFL} the following set of portfolio processes
\[
\bigcup_{\la>0}\bigcup_{n\geq 1}\la\mathcal{X}_1^n
\]
and the analogous notion of (NAFLVR) for this set, i.e.,
\[
\overline {\mathcal{C}} \cap L_{\geq 0}^{\infty} ,
\]
where $\mathcal{C}$ is defined in~\eqref{eq:coneC} and $\overline {\mathcal{C}}$ denotes the norm closure in $L^{\infty}$.
Moreover denote the evaluations of elements of $\bigcup_{\la>0}\la\mathcal{X}_1^n$ at terminal time $T=1$ by $K_{0,n}$, i.e., $K_{0,n}$
stands for the terminal values of portfolios in the small financial market $n$.
We say (NA$_{\textrm{small}}$) holds if every small financial market $n$ satisfies no arbitrage, i.e., if $K_{0,n} \cap L_{\geq 0}^0 =\{0\}$ for all $n \in \mathbb{N}$. Analogously to Proposition~\ref{prop:NAFLVR=NUPBRNA}, we then obtain the following characterization.

\begin{proposition}
\[
\overline {\mathcal{C}} \cap L_{\geq 0}^{\infty}=\{0\} \Leftrightarrow \textrm{(NAA1)} + \textrm{(NA$_{\textrm{small}}$)},
\]
\end{proposition}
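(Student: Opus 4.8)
The plan is to mimic the proof of Proposition~\ref{prop:NAFLVR=NUPBRNA}, replacing the role played by Lemma~\ref{Y_in_X1} (which is unavailable here, since it relied crucially on the Emery closure and the extended concatenation property of $\mathcal{X}_1$) by the hypothesis (NA$_{\textrm{small}}$) together with a small-market version of the classical fact that a portfolio wealth process bounded from below at the terminal time is bounded from below at all times. First I would prove the forward direction: assume $\overline{\mathcal{C}}\cap L^\infty_{\geq0}=\{0\}$. Then (NA$_{\textrm{small}}$) is immediate, since $K_{0,n}\subseteq K_{0,n}-L^0_{\geq0}\subseteq \mathcal{K}_0-L^0_{\geq0}$, so a nonzero element of $K_{0,n}\cap L^0_{\geq0}$ would (after truncation by $1$) give a nonzero element of $\mathcal{C}\cap L^\infty_{\geq0}$. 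For (NAA1), suppose an (AA1) exists with $X^k\in\varepsilon_k\bigcup_{n}\mathcal{X}_1^n$ and $P[X^k\geq c_k]\geq\alpha$; then set $f^k:=X^k_1\wedge1\in\mathcal{C}$ and argue exactly as in~\cite[Lemma 2.2, a)$\Rightarrow$b)]{kab:97} (the $\varepsilon_k$-negative parts of $f^k$ vanish in $L^\infty$ while a convex combination converges to something nonnegative and nonzero), obtaining a contradiction.

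For the converse, assume (NAA1) and (NA$_{\textrm{small}}$) but suppose $\overline{\mathcal{C}}\cap L^\infty_{\geq0}\neq\{0\}$. Then there are $f^n\in\mathcal{C}$ and $f\geq0$ with $P[f>0]>0$ and $\|f^n-f\|_{L^\infty}\to0$, and correspondingly $X^n\in\bigcup_{\lambda>0}\bigcup_{m\geq1}\lambda\mathcal{X}_1^m$ with $X^n_1\geq f^n\geq-\varepsilon_n$ where $\varepsilon_n:=\|(f^n)^-\|_{L^\infty}\to0$. Each $X^n$ lives in some single small market, i.e.\ $X^n\in\mathcal{X}_1^{A^n}$ scaled by some $\lambda_n$. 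Here I would invoke (NA$_{\textrm{small}}$): for each fixed small market $A^n$, no arbitrage holds, so by the small-market result the process $X^n$, whose terminal value is bounded below by $-\varepsilon_n$, must itself be bounded below by $-\varepsilon_n$ on all of $[0,1]$; equivalently $X^n\in\varepsilon_n\mathcal{X}_1^{A^n}\subseteq\varepsilon_n\bigcup_{m\geq1}\mathcal{X}_1^m$. Then $Y^n:=\frac{1}{\sqrt{\varepsilon_n}}X^n\in\sqrt{\varepsilon_n}\bigcup_{m\geq1}\mathcal{X}_1^m$ satisfies $P[Y^n_1\geq\tfrac{1}{\sqrt{\varepsilon_n}}\cdot c]\geq\alpha'$ for suitable constants (since $Y^n_1\geq f/\sqrt{\varepsilon_n}-\sqrt{\varepsilon_n}$ and $P[f>0]>0$), which is an (AA1) — a contradiction.

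The main obstacle, and the only place where the argument genuinely differs from Proposition~\ref{prop:NAFLVR=NUPBRNA}, is the step asserting that (NA$_{\textrm{small}}$) forces $X^n\in\varepsilon_n\mathcal{X}_1^{A^n}$. In the small-market setting this is the standard statement that under (NA) a portfolio with $X_1\geq-c$ satisfies $X_t\geq-c$ for all $t$; it follows by the same stopping-time argument as in the first part of the proof of Lemma~\ref{Y_in_X1}, but now carried out entirely inside a \emph{fixed} small market $\mathcal{X}_1^{A^n}$, where the honest concatenation property of Definition~\ref{setting} is available and no Emery-approximation is needed. Thus the proof is actually \emph{simpler} than that of Lemma~\ref{Y_in_X1}: one only needs (NA) for each individual small market, not for the large market. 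Since the sets $\mathcal{A}^n$ are closed under finite unions, one does not even need uniformity across $n$. The remaining bookkeeping — truncations, convex combinations in the Kabanov-style argument, and the choice of normalizing exponents $\tfrac12$ — is routine and identical to the proofs already presented in Section~\ref{AA1etc}.
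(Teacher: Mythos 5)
Your proposal is correct and is essentially the paper's intended argument: the paper states this proposition without proof, merely asserting it follows ``analogously to Proposition~\ref{prop:NAFLVR=NUPBRNA}'', and your write-up is exactly that analogy, with the key (and correct) observation that the role of Lemma~\ref{Y_in_X1} is now played by the classical small-market fact that under (NA) in a \emph{fixed} small market a wealth process with terminal value bounded below by $-\varepsilon_n$ is bounded below by $-\varepsilon_n$ throughout, which only needs the honest concatenation property of Definition~\ref{setting}. The only cosmetic slip is the intermediate bound $Y^n_1\geq f/\sqrt{\varepsilon_n}-\sqrt{\varepsilon_n}$, which should read $Y^n_1\geq \bigl(f-\|f^n-f\|_{L^{\infty}}\bigr)/\sqrt{\varepsilon_n}$ (and one should dispose of the trivial case $\varepsilon_n=0$ directly via (NA$_{\textrm{small}}$)); this does not affect the conclusion that an (AA1) is obtained.
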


It is possible to construct examples for which (NAA1) and (NA$_{\textrm{small}}$) is satisfied, but we do not get the existence of an equivalent separating measure for $\bigcup_{\la>0}\bigcup_{n\geq 1}\la\mathcal{X}_1^n$.
Indeed consider the following sequence of semimartingales as a large financial market model: for $t \in [0,1)$ and $n \in \mathbb{N}$, let $S^n_t=0$ and $S^n_1$ be independent random variables taking two values $-1$ and $1$ with probability $P[S^n_1=-1]=p_n$ and $P[S^n_1=1]=1-p_n$, where $p_n >0$ for all $n \in \mathbb{N}$ but $\lim_{n \to \infty} p_n=0$. For each $n \in \mathbb{N}$, an equivalent separating measure $Q$ needs to satisfy $Q[S^n_1=-1]=\frac{1}{2}$, which, however, does not tend to $0$ and thus implies that $Q$ is not equivalent to $P$. Note that (NA) in the stronger sense as defined at the beginning of Section~\ref{AA1etc} is violated, since the sequence of investments in the $n^{th}$ asset, i.e., $(\mathbf{H}^n \bullet \mathbf{S}^n)_1=S_1^n-S_0^n$  with $\mathbf{H}^n_{t,i}=1_{[0,1]}(t)\delta_{in}$ yields in the limit an element which is almost surely $1$ and thus an (asymptotic) arbitrage opportunity (corresponding to an asymptotic arbitrage of the \emph{second} kind as introduced in~\cite{KK:94}).

This example shows that under $\overline {\mathcal{C}} \cap L_{\geq 0}^{\infty}=\{0\}$,
$\overline {\mathcal{C}} \subsetneq \overline {\mathcal{C}}^*$.
In view of Proposition~\ref{prop:NAFLVR=NUPBRNA} and Section~\ref{sec:NAFL}, the crucial issue
to obtain an FTAP without using weak-$*$-closures is  thus to strengthen the no arbitrage condition from (NA$_{\textrm{small}}$) to (NA) as defined at the beginning of Section~\ref{AA1etc}.

\section{On $\sigma$-martingale measures in large financial markets}\label{sec:sigmamart}

The purpose of this section is to show that - in contrast to classical small financial markets - (NAFLVR) does not imply the existence of a $\sigma$-martingale measure.
Indeed, the following large financial market model provides a counterexample.

Let $(\Omega, \mathcal{F}, P)=([0,1], \mathcal{B}([0,1]), \lambda)$, where $\lambda$ denotes the Lebesgue measure.
For $t \in [0,1)$ and $n \in \mathbb{N}$, let $S^n_t=0$ and define $S^n_1$ as follows:
\begin{align}\label{eq:nosigma}
S_1^n(\omega) =
\left\{
\begin{array}{ll}
   -\frac{1}{\sqrt{\omega}} & \omega \in [0, \varepsilon_n),\\
      \frac{1}{(1-\omega)^{\frac{1}{n+1}}} &  \omega \in [\varepsilon_n, 1],
\end{array}
\right.
\end{align}
where $(\varepsilon_n)$ is a sequence taking values in $(0,1)$.
We start with a lemma that  shows the existence of a particular sequence  $(\varepsilon_n)$
such that $E[S^n_1]=1$ for all  $n \in \mathbb{N}$ and $\varepsilon_n \to 0$ as $n \to \infty$.

\begin{lemma}\label{lem:eps}
Let $(S^n_1)$ be given by \eqref{eq:nosigma}. Then we can choose $(\varepsilon_n)$ such that
\[
E[S^n_1]=1 \quad \textrm{ for all } n \in \mathbb{N}
\]
and $\varepsilon_n \to 0$ as $n \to \infty$.
\end{lemma}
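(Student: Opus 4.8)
The plan is to compute the expectation $E[S^n_1]$ explicitly as a function of $\varepsilon_n$ and then show that the resulting equation in $\varepsilon_n$ has a solution in $(0,1)$ for every $n$, with the solutions tending to $0$. Concretely, one integrates over the two pieces of the definition \eqref{eq:nosigma}. On $[0,\varepsilon_n)$ we have $\int_0^{\varepsilon_n} \big(-\omega^{-1/2}\big)\,d\omega = -2\sqrt{\varepsilon_n}$, which is finite, so the negative part is integrable. On $[\varepsilon_n,1]$ we have $\int_{\varepsilon_n}^1 (1-\omega)^{-1/(n+1)}\,d\omega = \frac{n+1}{n}\,(1-\varepsilon_n)^{n/(n+1)}$, again finite since the exponent $1/(n+1)<1$. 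Hence
\[
E[S^n_1] = -2\sqrt{\varepsilon_n} + \frac{n+1}{n}\,(1-\varepsilon_n)^{n/(n+1)} =: \varphi_n(\varepsilon_n).
\]

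Next I would apply the intermediate value theorem to $\varphi_n$ on $[0,1]$. At $\varepsilon_n = 0$ we get $\varphi_n(0) = \frac{n+1}{n} > 1$, and at $\varepsilon_n = 1$ we get $\varphi_n(1) = -2 < 1$; since $\varphi_n$ is continuous on $[0,1]$, there exists $\varepsilon_n \in (0,1)$ with $\varphi_n(\varepsilon_n) = 1$. (One can pick, say, the smallest such root, or note that $\varphi_n$ is strictly decreasing — $-2\sqrt{\cdot}$ is decreasing and $(1-\cdot)^{n/(n+1)}$ is decreasing — so the root is unique; either way a valid choice of $\varepsilon_n$ exists.) This establishes $E[S^n_1]=1$ for all $n$.

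It remains to show $\varepsilon_n \to 0$. Here I would argue by contradiction or by a direct estimate: since $\varphi_n(\varepsilon_n)=1$ forces $2\sqrt{\varepsilon_n} = \frac{n+1}{n}(1-\varepsilon_n)^{n/(n+1)} - 1$, and the right-hand side is bounded above by $\frac{n+1}{n} - 1 = \frac1n \to 0$ (using $(1-\varepsilon_n)^{n/(n+1)} \le 1$), we immediately get $2\sqrt{\varepsilon_n} \le \frac1n$, hence $\varepsilon_n \le \frac{1}{4n^2} \to 0$. This even gives the explicit rate $\varepsilon_n = O(n^{-2})$.

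I do not anticipate a genuine obstacle here; the only thing to be careful about is checking the two integrals converge (so that $S^n_1 \in L^1$ and the identity $E[S^n_1]=1$ is meaningful), in particular that the singularity $\omega^{-1/2}$ at $0$ and the singularity $(1-\omega)^{-1/(n+1)}$ at $1$ are both integrable — which they are, with exponents $1/2<1$ and $1/(n+1)<1$ respectively. The bookkeeping of constants in $\varphi_n$ and the monotonicity argument for uniqueness of the root are routine.
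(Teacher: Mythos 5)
Your proposal is correct and follows essentially the same route as the paper: both reduce the condition $E[S^n_1]=1$ to the scalar equation $1+2\sqrt{\varepsilon_n}=\frac{n+1}{n}(1-\varepsilon_n)^{n/(n+1)}$ and solve it by continuity/monotonicity of the two sides. Your argument for $\varepsilon_n\to 0$ (bounding $(1-\varepsilon_n)^{n/(n+1)}\le 1$ to get $\varepsilon_n\le \tfrac{1}{4n^2}$) is a slightly more direct, quantitative version of the paper's limiting argument via $f_n(x)\to 1-x$, but this is a minor variation, not a different method.
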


\begin{proof}
Note that $E[S^n_1]=1$
is equivalent to
\begin{align}\label{eq:equality}
1+2 \sqrt{\varepsilon_n}=\frac{n+1}{n}(1-\varepsilon_n)^{\frac{n}{n+1}}.
\end{align}
On $[0,1]$, the function $x \mapsto f_n(x):=\frac{n+1}{n}(1-x)^{\frac{n}{n+1}}$  (corresponding to the right hand side) is decreasing, while
 $x \mapsto 1 +2 \sqrt{x}$ (corresponding to the left hand side) is increasing. Hence for each $n$ we find some $\varepsilon_n$ such that~\eqref{eq:equality} holds. Moreover, by the same argument the sequence $\varepsilon_n$ is decreasing and converges to $0$ since
$\lim_{n \to \infty} f_n(x)=1-x$.
\end{proof}

In the sequel we take $(\varepsilon_n)$ as of Lemma~\ref{lem:eps}. Moreover, we consider as filtration  the one generated by $(S^n_t)_{t \in [0,1], n\in \mathbb{N}}$ such that all predictable strategies reduce to deterministic ones. The terminal value of a  portfolio in the first $n$ assets is thus of form
$X_1=\sum_{k=1}^n c_k (S_1^k-S_0^k)=\sum_{k=1}^n c_k S_1^k$ for some constants $c_k \in \mathbb{R}$.
The following lemma provides a necessary condition which guarantees that $X_1 \geq -1$ (which is  equivalent to $X \in \mathcal{X}^n_1 $)
and shows that $P=\lambda$ is a separating measure.

\begin{lemma}\label{lem:sepmeas}
Let $(S^n_1)$ be given by \eqref{eq:nosigma} with $(\varepsilon_n)$ as of Lemma~\ref{lem:eps}.
For $n \in \mathbb{N}$, suppose that $X_1=\sum_{k=1}^n c_k S_1^k \geq -1$.
Then
\begin{align}\label{eq:negpos}
\sum_{k \in G_+ } c_k \, \leq \sum_{k \in G_-} |c_k|,
\end{align}
where $G_+=\{ j \, |\,  c_j >0\}$ and $G_-=\{ j \, |\,  c_j <0\}$. Furthermore, $P=\lambda$ is a separating measure for $\mathcal{X}$.
\end{lemma}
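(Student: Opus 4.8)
The claim has two parts. First I would establish the inequality~\eqref{eq:negpos}, and then deduce from it that $P=\lambda$ is a separating measure for $\mathcal{X}$. For the first part, the key observation is that the constraint $X_1 = \sum_{k=1}^n c_k S_1^k \geq -1$ must in particular hold as $\omega \to 1$. On the interval $[\varepsilon_{n_0},1]$ where $n_0 = \max\{k : c_k \neq 0\}$, all the functions $\omega \mapsto S_1^k(\omega) = (1-\omega)^{-1/(k+1)}$ are positive and blow up to $+\infty$ as $\omega \to 1$, but at different rates: the term with the largest index blows up fastest. More carefully, I would look at the behaviour of $X_1(\omega)$ as $\omega \downarrow 0$: on $[0,\varepsilon_n)$ we have $S_1^k(\omega) = -1/\sqrt{\omega}$ for every $k$ with $\varepsilon_k \le \omega$... but since the $\varepsilon_k$ are decreasing, near $0$ all the $S_1^k$ equal $-1/\sqrt{\omega}$, so $X_1(\omega) = -(\sum_k c_k)/\sqrt\omega$ near $0$, which forces $\sum_k c_k \geq 0$ — that is a weaker statement. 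The right region to exploit is near $\omega = \varepsilon_{j}$ for indices $j \in G_+$, or more cleanly: I would argue that the lower bound on $X_1$ is attained (or approached) by isolating the positive and negative contributions. Concretely, pick $\omega$ slightly larger than $\varepsilon_{j_{\max}}$ where $j_{\max}$ is the largest index in $G_-$; then every $k\in G_-$ contributes $-|c_k|/\sqrt{\omega} \approx -|c_k|/\sqrt{\varepsilon_{j_{\max}}}$ while the $k \in G_+$ contribute positive but bounded terms. Balancing this against $\ge -1$ and letting the relevant $\varepsilon$'s be small should yield~\eqref{eq:negpos}. I expect the bookkeeping of which $S_1^k$ is in which branch at a given $\omega$ (governed by the ordering of the $\varepsilon_k$) to be the main technical obstacle here, and the cleanest route is probably to evaluate $X_1$ at a single well-chosen point rather than take limits.

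\textbf{From~\eqref{eq:negpos} to the separating measure.} Once~\eqref{eq:negpos} is available, I would compute $E_\lambda[X_1] = \sum_{k=1}^n c_k E[S_1^k] = \sum_{k=1}^n c_k \cdot 1 = \sum_{k\in G_+} c_k - \sum_{k\in G_-} |c_k|$, using Lemma~\ref{lem:eps} which gives $E[S_1^k]=1$ for all $k$. By~\eqref{eq:negpos} this is $\leq 0$. Hence $E_\lambda[X_1] \leq 0$ for every $X \in \bigcup_{n\ge1}\mathcal{X}_1^n$ (the $1$-admissible portfolios in the small markets), and by positive homogeneity also for every $X \in \bigcup_{\lambda>0}\bigcup_{n\ge1}\lambda\mathcal{X}_1^n$. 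To pass to all of $\mathcal{X} = \bigcup_{\lambda>0}\lambda\,\overline{\bigcup_{n\ge1}\mathcal{X}_1^n}^{\mathbb{S}}$, I would invoke exactly the Fatou argument already used in the proof of Lemma~\ref{th:compESM}: if $Y \in \mathcal{X}_1$, choose $X^m \in \bigcup_{n\ge1}\mathcal{X}_1^n$ with $X^m \to Y$ in the Emery topology, so $X^m_1 \to Y_1$ in probability with $X^m_1 \geq -1$ uniformly, and Fatou's lemma gives $E_\lambda[Y_1] \leq \liminf_m E_\lambda[X^m_1] \leq 0$; scaling by $\lambda>0$ finishes the case of general $Y\in\mathcal{X}$. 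Therefore $\lambda=P$ is a separating measure, and since $P\sim P$ trivially, it is an \emph{equivalent} separating measure, which also re-proves that (NAFLVR) holds in this example (consistent with the surrounding discussion that the example has a separating measure but, as Section~\ref{sec:sigmamart} will show, no $\sigma$-martingale measure).

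\textbf{Summary of the order of steps.} (i) Fix $X_1 = \sum_{k=1}^n c_k S_1^k \geq -1$ and, using the ordering $\varepsilon_1 \geq \varepsilon_2 \geq \cdots$ together with the branch structure of~\eqref{eq:nosigma}, locate a point (or sequence of points) $\omega$ near the smallest relevant $\varepsilon_k$ where the negative-coefficient terms dominate, to extract~\eqref{eq:negpos}. (ii) Use Lemma~\ref{lem:eps} to evaluate $E_\lambda[X_1] = \sum_{k\in G_+}c_k - \sum_{k\in G_-}|c_k| \leq 0$. (iii) Extend by homogeneity to $\bigcup_{\lambda>0}\bigcup_n \lambda\mathcal{X}_1^n$ and by the Emery-limit Fatou argument (as in Lemma~\ref{th:compESM}) to all of $\mathcal{X}$, concluding that $P=\lambda$ is an equivalent separating measure. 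The only real difficulty is step (i); steps (ii) and (iii) are routine given the earlier results in the paper.
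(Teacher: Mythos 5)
Your steps (ii) and (iii) are fine and essentially identical to the paper: once \eqref{eq:negpos} is available, Lemma~\ref{lem:eps} gives $E[X_1]=\sum_{k\in G_+}c_k-\sum_{k\in G_-}|c_k|\le 0$ for all $X\in\mathcal{X}_1^n$, and the Fatou/Emery-limit argument (as in Lemma~\ref{th:compESM}) together with positive homogeneity extends this to $\mathcal{X}_1$ and then to $\mathcal{X}$. The genuine gap is in step (i), and, ironically, you discarded the correct argument there. First, \eqref{eq:negpos} is \emph{exactly equivalent} to $\sum_{k=1}^n c_k\le 0$, since $\sum_{k\in G_-}|c_k|=-\sum_{k\in G_-}c_k$; it is not a stronger statement. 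Second, your sign is wrong: on $[0,\varepsilon_n)$ one has $\omega<\varepsilon_n\le\varepsilon_k$ for every $k\le n$ (the $\varepsilon_k$ are decreasing), so \emph{all} assets are on the branch $-1/\sqrt{\omega}$ and $X_1(\omega)=-\bigl(\sum_{k=1}^n c_k\bigr)/\sqrt{\omega}$; the requirement $X_1\ge -1$ as $\omega\downarrow 0$ forces $\sum_k c_k\le 0$ (if $\sum_k c_k>0$ this expression tends to $-\infty$), not $\sum_k c_k\ge 0$. With the sign corrected, the ``weaker statement'' you dismissed is precisely \eqref{eq:negpos}, and this one-line contradiction argument is exactly the paper's proof.

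The alternative route you sketch instead does not work. At a point $\omega$ slightly larger than some $\varepsilon_j$, which branch $S_1^k(\omega)$ lies on is determined by whether $\omega\ge\varepsilon_k$, i.e.\ by the index $k$ through the decreasing sequence $(\varepsilon_k)$, not by the sign of the coefficient $c_k$; so the claim that ``every $k\in G_-$ contributes $-|c_k|/\sqrt{\omega}$'' near $\omega=\varepsilon_{j_{\max}}$ is false in general, and the positive-branch terms $(1-\omega)^{-1/(k+1)}$ are not negligible there either. Moreover, the $\varepsilon_n$ are fixed by Lemma~\ref{lem:eps} (they are pinned down by $E[S^n_1]=1$), so ``letting the relevant $\varepsilon$'s be small'' is not an available move. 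The fix is simple: drop the detour, correct the sign in the $\omega\downarrow 0$ computation, note that $\sum_k c_k\le 0$ \emph{is} \eqref{eq:negpos}, and the remainder of your write-up then goes through as stated.
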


\begin{proof}
Concerning the first assertion, suppose by contradiction that
\[
 \alpha:=\sum_{k \in G_+} c_k -\sum_{k \in G_-} |c_k| > 0.
\]
 Then on the set $[0, \varepsilon_n)$, we have $X_1(\omega)=\sum_{k=1}^n c_k S_1^k(\omega)=-\frac{\alpha}{\sqrt{\omega}}$,
 which is unbounded from below and thus implies~\eqref{eq:negpos}.
This together with Lemma~\ref{lem:eps} yields
\begin{align*}
E[X_1]&=\sum_{k \in G_+} c_k E[S_1^k]+\sum_{k \in G_-} c_k E[S_1^k]\\
&=\sum_{k \in G_+} c_k -\sum_{k \in G_-} |c_k| \leq 0
\end{align*}
for all $X \in \mathcal{X}^n_1$, by Fatou's lemma thus also for all $X \in \mathcal{X}_1$ and in turn also for $X \in \mathcal{X}$.
\end{proof}

\begin{lemma}\label{lem:nomart}
Let $(S^n_1)$ be given by \eqref{eq:nosigma} with $(\varepsilon_n)$ as of Lemma~\ref{lem:eps}. Then there does not exist an equivalent martingale measure, i.e., there
is no $Q \sim P=\lambda$ such that
$E_{Q}[S^n_1]=0$ for all $n$.
\end{lemma}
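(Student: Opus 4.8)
The plan is to argue by contradiction: suppose $Q \sim \lambda$ with $E_Q[S^n_1]=0$ for all $n$, and derive an integrability failure near $\omega = 0$. Write $g := dQ/d\lambda \in L^1(\lambda)$, so $g > 0$ $\lambda$-a.s. and $\int_0^1 g\,d\lambda = 1$. The martingale condition $E_Q[S^n_1] = 0$ reads
\[
-\int_0^{\varepsilon_n} \frac{g(\omega)}{\sqrt{\omega}}\,d\omega + \int_{\varepsilon_n}^1 \frac{g(\omega)}{(1-\omega)^{1/(n+1)}}\,d\omega = 0 .
\]
The second integral is bounded above by $\int_0^1 g(\omega)(1-\omega)^{-1/(n+1)}\,d\omega$; since $(1-\omega)^{-1/(n+1)} \downarrow 1$ pointwise on $[0,1)$ as $n \to \infty$ and is dominated by $(1-\omega)^{-1/2} \in L^1$ for $n \geq 1$ (or one simply notes it decreases in $n$), dominated convergence gives that this integral converges to $\int_0^1 g\,d\lambda = 1$. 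In particular the right-hand integrals are uniformly bounded, say by a constant $M < \infty$, for all $n$.

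Consequently $\int_0^{\varepsilon_n} \frac{g(\omega)}{\sqrt{\omega}}\,d\omega = \int_{\varepsilon_n}^1 \frac{g(\omega)}{(1-\omega)^{1/(n+1)}}\,d\omega \leq M$ for every $n$, and in fact this quantity converges to $1$ as $n \to \infty$ by the same dominated-convergence argument applied to the right-hand side (recall $\varepsilon_n \to 0$). But $\varepsilon_n \to 0$, so by monotone convergence $\int_0^{\varepsilon_n} \frac{g(\omega)}{\sqrt{\omega}}\,d\omega \to \int_0^{\delta}\cdots$ decreasing, hence
\[
\int_0^{\varepsilon_1} \frac{g(\omega)}{\sqrt{\omega}}\,d\omega = \lim_{n\to\infty}\left( \int_0^{\varepsilon_n} + \int_{\varepsilon_n}^{\varepsilon_1}\right) \frac{g(\omega)}{\sqrt{\omega}}\,d\omega = 1 + \int_0^{\varepsilon_1}\frac{g(\omega)}{\sqrt\omega}\,d\omega,
\]
using that $\int_{\varepsilon_n}^{\varepsilon_1} \frac{g}{\sqrt\omega}\,d\omega \to \int_0^{\varepsilon_1}\frac{g}{\sqrt\omega}\,d\omega$ by monotone convergence (the integrand being nonnegative). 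This forces $1 = 0$, a contradiction, provided $\int_0^{\varepsilon_1} \frac{g}{\sqrt\omega}\,d\omega < \infty$; and if instead that integral is $+\infty$, then already $\int_0^{\varepsilon_n}\frac{g}{\sqrt\omega}\,d\omega = +\infty$ for all $n$ (since $g > 0$ a.s. the tail near $0$ carries the divergence), contradicting the bound $\leq M$ directly. Either way we reach a contradiction, so no equivalent martingale measure exists.

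The main obstacle is organizing the limiting argument cleanly so that the two cases ($g/\sqrt{\omega}$ integrable near $0$ versus not) are both handled: the key mechanism is that the single relation $E_Q[S^n_1]=0$ simultaneously pins the left integral near the positive value $1$ (through the right integral's convergence to $1$) while monotone/dominated convergence in $n$ forces the same left integral to behave inconsistently. One should also double-check that the right-hand integrand is genuinely dominated — using $(1-\omega)^{-1/(n+1)} \le (1-\omega)^{-1/2}$ for all $n \ge 1$ and $(1-\omega)^{-1/2} \in L^1([0,1],\lambda)$ — so that dominated convergence applies; this is the one routine verification that must not be skipped. Note also that Lemma~\ref{lem:sepmeas} already gives that $\lambda$ itself is an equivalent \emph{separating} measure, so the present lemma completes the promised separation between (ESM) and the existence of an equivalent martingale (a fortiori $\sigma$-martingale) measure.
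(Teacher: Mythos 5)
Your overall strategy is sound and is essentially the mirror image of the paper's argument: the paper applies dominated convergence to the \emph{left}-hand side of the identity \eqref{eq:nomart} and derives the contradiction from the trivial lower bound $\int_{\varepsilon_n}^1(1-\omega)^{-1/(n+1)}\,dQ\geq Q([\varepsilon_n,1])\to Q((0,1])=1$, whereas you push the right-hand side to $1$ and the left-hand side to $0$. The problem is that the step you yourself single out as ``the one routine verification that must not be skipped'' is verified with respect to the wrong measure, and as written it is not valid. The integrals $\int_{\varepsilon_n}^1 g(\omega)(1-\omega)^{-1/(n+1)}\,d\omega$ are integrals against $Q=g\,d\lambda$; to apply dominated convergence (or the decreasing form of monotone convergence) you need the majorant $(1-\omega)^{-1/2}$ to lie in $L^1(Q)$, i.e.\ $g(\omega)(1-\omega)^{-1/2}\in L^1(\lambda)$, and the fact that $(1-\omega)^{-1/2}\in L^1([0,1],\lambda)$ says nothing about this, because $g$ may blow up near $\omega=1$ (for instance $g(\omega)\sim c\,(1-\omega)^{-1}(\log(1-\omega))^{-2}$ makes every right-hand integral infinite while $g\in L^1(\lambda)$). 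The same objection applies to the ``it decreases in $n$'' variant, which needs the largest ($n=1$) term to have finite integral.

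The gap is closable from your own hypotheses, but this must be said: $E_Q[S^1_1]=0$ presupposes $S^1_1\in L^1(Q)$, hence $(1-\omega)^{-1/2}1_{[\varepsilon_1,1]}\in L^1(Q)$, and on $[0,\varepsilon_1)$ the function $(1-\omega)^{-1/2}$ is bounded, so $(1-\omega)^{-1/2}\in L^1(Q)$ and your dominated-convergence step goes through. The same observation would also let you drop the two-case analysis at the end: the $n=1$ condition already forces $\int_0^{\varepsilon_1}g(\omega)\,\omega^{-1/2}\,d\omega<\infty$, so the ``integral $=+\infty$'' case cannot occur. Note that the paper's orientation avoids the issue entirely: it only uses $\omega^{-1/2}1_{[0,\varepsilon_1)}\in L^1(Q)$ (the same kind of consequence of the $n=1$ condition) to send the left side of \eqref{eq:nomart} to $0$, and on the right side it needs nothing beyond $(1-\omega)^{-1/(n+1)}\geq 1$, so no integrability near $\omega=1$ is ever invoked; this is slightly cleaner than your version, though with the corrected domination your proof is correct.
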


\begin{proof}
Suppose  by contradiction that there exists some $Q \sim \lambda$ such that
$E_{Q}[S^n_1]=0$ for all $n$. This implies that
\begin{align}\label{eq:nomart}
\int_0^{\varepsilon_n} \frac{1}{\sqrt{\omega}}dQ(\omega)=\int_{\varepsilon_n}^1 \frac{1}{(1-\omega)^{\frac{1}{n+1}}}dQ(\omega)
\end{align}
for all $n$. Since $Q \sim \lambda$ and $\varepsilon_n \to 0$ we necessarily have $Q([0, \varepsilon_n)) \to 0$.
Moreover, as  $\frac{1}{\sqrt{\omega}}1_{[0, \varepsilon_1)} \in L^1(Q)$, we obtain by dominated convergence that the left hand side of~\eqref{eq:nomart} tends to $0$ as $n$
tends to $\infty$. Consequently the right hand side tends to $0$ as well, implying that $Q([\varepsilon_n,1]) \to 0$,  since
\[
\int_{\varepsilon_n}^1 \frac{1}{(1-\omega)^{\frac{1}{n+1}}}dQ(\omega)\geq Q([\varepsilon_n,1]).
\]
This contradicts the equivalence of $Q$ and $\lambda$ and proves the assertion.
\end{proof}

Combining the above lemmas we obtain the following result:

\begin{proposition}
 Let $(S^n_1)$ be given by \eqref{eq:nosigma} with $(\varepsilon_n)$ as of Lemma~\ref{lem:eps}. Then, there exists an equivalent separating measure (namely $P=\lambda$), but no
 equivalent martingale measure.
\end{proposition}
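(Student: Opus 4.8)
The plan is to assemble the three preceding lemmas into the stated dichotomy. Concretely, I would argue as follows. By Lemma~\ref{lem:eps} there is a choice of the sequence $(\varepsilon_n)$ with $E[S^n_1]=1$ for every $n$ and $\varepsilon_n\to 0$; fix this sequence throughout. Lemma~\ref{lem:sepmeas} then shows that $P=\lambda$ separates the whole set $\mathcal{X}$: for every $X\in\mathcal{X}^n_1$ the admissibility constraint $X_1\geq -1$ forces the coefficient inequality~\eqref{eq:negpos}, which combined with $E[S^k_1]=1$ gives $E[X_1]\leq 0$; by Fatou's lemma this passes to the Emery closure $\mathcal{X}_1$ and hence, by positive homogeneity, to $\mathcal{X}=\bigcup_{\lambda>0}\lambda\mathcal{X}_1$. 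Thus $P=\lambda$ is itself an equivalent separating measure, establishing the first half of the proposition. Finally Lemma~\ref{lem:nomart} states that no $Q\sim\lambda$ can satisfy $E_Q[S^n_1]=0$ for all $n$.

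To finish I would note that the non-existence of an equivalent martingale measure already implies the non-existence of an equivalent $\sigma$-martingale measure, and in fact of any equivalent measure $Q$ under which all the $(S^n_t)_{t\in[0,1]}$ are (true/local/$\sigma$-)martingales. The point is that in this model each $S^n$ is piecewise constant: $S^n_t=0$ for $t<1$ and jumps only at the terminal time $t=1$. For such a process the $\sigma$-martingale, local-martingale and true-martingale properties all coincide with the single equation $E_Q[S^n_1]=E_Q[S^n_1\mid\mathcal{F}_0]=0$ (there is no localizing sequence of stopping times and no $\sigma$-martingale density process that can circumvent a condition imposed at a single deterministic time, as long as $S^n_1\in L^1(Q)$; and if $S^n_1\notin L^1(Q)$ then $S^n$ cannot even be a $\sigma$-martingale under $Q$ because a one-jump process is a $\sigma$-martingale iff its jump is $Q$-integrable with conditional mean zero). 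Hence Lemma~\ref{lem:nomart} rules out an equivalent $\sigma$-martingale measure as well, which is what the proposition asserts.

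I would therefore write the proof essentially as: ``By Lemma~\ref{lem:eps} choose $(\varepsilon_n)$ as stated. By Lemma~\ref{lem:sepmeas}, $P=\lambda$ is an equivalent separating measure for $\mathcal{X}$. By Lemma~\ref{lem:nomart} there is no $Q\sim\lambda$ with $E_Q[S^n_1]=0$ for all $n$; since each $S^n$ is constant before time $1$, being a $\sigma$-martingale under an equivalent $Q$ is equivalent to $S^n_1\in L^1(Q)$ and $E_Q[S^n_1]=0$, so no equivalent $\sigma$-martingale measure exists.'' The first two sentences are immediate citations; the only genuine content is the elementary observation in the last sentence, which reduces the $\sigma$-martingale condition to the martingale condition for these degenerate one-jump processes.

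The main obstacle, such as it is, is the final reduction step: one must make sure that the $\sigma$-martingale notion — which is strictly weaker than local martingale in general — really does collapse to $E_Q[S^n_1]=0$ here, including the borderline case where $S^n_1$ fails to be $Q$-integrable. I would handle this by recalling the characterization that a real semimartingale $M$ with $M_0=0$ is a $\sigma$-martingale under $Q$ iff there is a strictly positive predictable $M$-integrable process $\varphi$ with $\varphi\bullet M$ a (true) $Q$-martingale; applied to $M=S^n$, which has a single jump at the deterministic time $1$, the process $\varphi\bullet S^n$ has terminal value $\varphi_1 S^n_1$ with $\varphi_1$ an $\mathcal{F}_1$-measurable (indeed, by our filtration, deterministic-times-measurable) strictly positive constant, so $E_Q[\varphi_1 S^n_1]=0$ forces $E_Q[S^n_1]=0$ and $S^n_1\in L^1(Q)$. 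This closes the argument and shows the counterexample is genuinely about the gap between ``separating'' and ``$\sigma$-martingale'' measures in large markets.
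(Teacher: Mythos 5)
Your proof is correct and follows the same route as the paper: the proposition is deduced directly from Lemma~\ref{lem:sepmeas} (which exhibits $P=\lambda$ as a separating measure) and Lemma~\ref{lem:nomart} (which excludes an equivalent martingale measure). Your additional reduction of the $\sigma$-martingale property to $E_Q[S^n_1]=0$ for these one-jump processes is also what the paper asserts, but it places that observation in the remark following the proposition rather than in the proof itself.
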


\begin{proof}
 The proof is a consequence of  Lemma~\ref{lem:sepmeas} and Lemma~\ref{lem:nomart}.
\end{proof}

\begin{remark}
Note that in the situation of the above example the requirement of $S^n$ being a $\sigma$-martingale is equivalent to $S^n$ being a martingale. We have therefore proved that
in general (NAFLVR) does not imply the existence of an equivalent $\sigma$-martingale measure.
\end{remark}

\section{Proof of Theorem~\ref{th:main}}\label{proof}

We adapt the proof strategy of~\cite{CT:14} to the present setting. This means in particular that the first part of the proof goes along the lines of the original proof by F.~Delbaen and W.~Schachermayer~\cite{DS:94} for the classical small market situation.
The first series of conclusions is the following. Recall also Proposition~\ref{prop:NAFLVR=NUPBRNA}, which is needed to split (NAFLVR) into (NA) and (NUPBR).

\begin{enumerate}
\item The convex cone $C$ defined in~\eqref{eq:coneLFM} is closed with respect to the weak-$*$-topology if and only if $C_0$ is Fatou-closed, i.e.~for any sequence $(f_n)$ in $C_0$ uniformly bounded from below and converging almost surely to $ f $ it holds that $ f \in C_0$, see beginning of Section 3 in~\cite{kab:97}, and~\cite[Theorem 2.1]{DS:94} essentially tracing back to A.~Grothendieck.
\item Take now $ -1 \leq f_n \in C_0 $ converging almost surely to $ f $. Then we can find $ f_n \leq g_n = Y^n_1 $ with $Y^n \in \mathcal{X}$.
\item By Lemma~\ref{Y_in_X1} it follows that each $ Y^n \in \mathcal{X}_1$.
\item By (NUPBR) it follows that there are forward-convex combinations $ \widetilde{Y^n} \in \operatorname{conv}(Y^n,Y^{n+1},\ldots) $ such that $ \widetilde{Y^n_1} \to \widetilde{h_0} \geq f $ almost surely.
\item This implies that the set $\widehat{K}^1_0 \cap \{g \in L_0 \, |\, g \geq f\}$, where $\widehat{K}^1_0$ denotes the closure of $K_0^1$ in $L^0$, is non-empty. Since it is also bounded by (NUPBR) and closed, a maximal element $h_0$ exists (see beginning of Section 3 in~\cite{kab:97} or~\cite[Lemma 4.3]{DS:94}). Since $h_0 \in \widehat{K}^1_0$, we can find a sequence of semimartingales $ X^n \in \mathcal{X}_1 $ such that $ X^n_1 \to h_0 $ almost surely and $ h_0 $ is maximal above $ f $ with this property.
\begin{remark}\label{wlog}
Without loss of generality we can take the sequence of semimartingales in the Emery-dense subset $\bigcup_{n\geq 1}\mathcal{X}^n_1$ of $\mathcal{X}_1$. Indeed, by the properties of the Emery-topology, it is clear that we can find a sequence in the dense subset such that still $X^n_1 \to h_0$ a.s.
\end{remark}
\item \label{convergence} The previously constructed ``maximal'' sequence of semimartingales $ X^n \in \bigcup_{n\geq 1}\mathcal{X}^n_1$ converges uniformly in probability, i.e. $ {|X^n - X|}^*_1 \to 0 $ in probability, to some c\`adl\`ag process $X$ (see~\cite[Lemma 3.2]{kab:97} or~\cite[Lemma 4.5]{DS:94}).
\end{enumerate}

As in~\cite{CT:14} it is now the goal to show that the sequence $(X^n)$ constructed in~\ref{convergence} above, which converges uniformly to $X$ in probability, also converges to $X$ in the Emery topology. From this it follows that $h_0=\lim_{n \to \infty} X^n_1=X_1 \in K_0^1$, since \emph{$\mathcal{X}_1$ is closed in the Emery topology}. This in turn implies that $f \in C_0$, which finishes the proof by step (i) above.

\begin{theorem}\label{th:Emeryconv}
Let $\mathcal{X}_1=\overline{\bigcup_{n\geq 1}\mathcal{X}^n_1}^{\mathbb{S}}$
satisfy (NUPBR). Let $ (X^n)_{n \geq 0} \in \bigcup_{n\geq 1}\mathcal{X}^n_1 $ be a sequence of semimartingales, which converges  uniformly in probability to $X$, such that $X_1$ is a maximal element in $\widehat{K_0^1}$, where $\widehat{K_0^1}$ denotes the closure of $K_0^1$ in $L^0$. Then $ X^n \to X $ in the Emery topology.
\end{theorem}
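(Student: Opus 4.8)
The plan is to follow the approach of~\cite{CT:14}: since $(X^n)$ already converges uniformly in probability, it remains only to upgrade this to convergence in the Emery topology, and the natural device for doing so is the \emph{predictable uniform tightness} (P-UT) property. First I would recall the relevant boundedness/completeness principle for the Emery topology (cf.~\cite{eme:79,mem:80}): a P-UT sequence of semimartingales which is Cauchy for uniform convergence in probability is Cauchy, hence convergent, in the Emery topology. Since $(X^n)$ converges uniformly in probability it is in particular ucp-Cauchy, so it suffices to show that $\{X^n\}$ is P-UT; the Emery limit then necessarily coincides with $X$. By~\eqref{eq:Emetric}, being P-UT amounts to the family $\{(K\bullet X^n)_t : K\in b\mathcal E,\ \|K\|_\infty\le 1,\ n\in\mathbb N,\ t\in[0,1]\}$ being bounded in $L^0$, and since $(K\bullet X^n)_t=(K\ind_{]0,t]}\bullet X^n)_1$ with $K\ind_{]0,t]}\in b\mathcal E$ of sup-norm $\le 1$, this reduces at once to boundedness in $L^0$ of the terminal values $(K\bullet X^n)_1$, over $K\in b\mathcal E$, $\|K\|_\infty\le 1$, $n\in\mathbb N$.

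The crux is precisely this boundedness, and it is the only point where (NUPBR), the uniform lower bound $X^n\ge -1$, the concatenation property (Lemma~\ref{lem:concatenation}) and the maximality of $X_1$ in $\widehat{K_0^1}$ all genuinely enter; here I would argue as in~\cite{DS:94} and~\cite[Section~3]{kab:97}. For $0\le K\le 1$ the process $(1+\varepsilon K)\bullet X^n=X^n+\varepsilon(K\bullet X^n)$ has an a.s.\ finite negative part on the compact interval $[0,1]$; stopping it at the first time it would drop below a fixed level and rescaling, one obtains — by the concatenation property of Definition~\ref{setting} applied with the second portfolio equal to $0\in\mathcal X_1^A$ — an element of some $\mathcal X_1^{k'}\subseteq\mathcal X_1$, whose terminal value is then controlled in $L^0$ by (NUPBR). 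A hypothetical failure of this control, uniformly in $n$ and $K$ as $\varepsilon\downarrow 0$, would, after passing to forward-convex combinations as in step~(iv) and using $X^n_1\to h_0$, produce an element of $\widehat{K_0^1}$ that strictly dominates $h_0$, contradicting its maximality. Splitting a general $K=K^+-K^-$ into its positive and negative parts then completes the estimate. This step is the counterpart of the ``series of tricky lemmas'' of~\cite{DS:94}, reorganized via the P-UT formalism in~\cite{CT:14}, and it is where the bulk of the technical work lies.

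The genuinely new difficulty, compared with the small-market arguments of~\cite{CT:14,DS:94,kab:97}, is that the concatenation property is available only on the Emery-dense subset $\bigcup_{n\ge 1}\mathcal X_1^n$ and not on $\mathcal X_1$ itself (Remark~\ref{rem:concatenation}); consequently every manipulation in the previous paragraph must be carried out at the level of the small markets before any Emery limit is taken, which is exactly why the ``maximal'' sequence was chosen inside $\bigcup_{n\ge 1}\mathcal X_1^n$ in Remark~\ref{wlog}. Once the P-UT property of $\{X^n\}$ is established, the principle quoted in the first paragraph yields $d_{\mathbb S}(X^n,X)\to 0$, i.e.\ $X^n\to X$ in the Emery topology. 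In particular, since each $X^n\in\mathcal X_1$ and $\mathcal X_1$ is closed in the Emery topology, the limit $X$ lies in $\mathcal X_1$ and is a semimartingale — the input needed for the remainder of the proof of Theorem~\ref{th:main}.
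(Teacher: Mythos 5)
The pivot of your argument --- ``a P-UT sequence of semimartingales which is Cauchy for uniform convergence in probability is Cauchy, hence convergent, in the Emery topology'' --- is not a valid principle, and it is exactly the point this theorem has to work around. A deterministic counterexample: let $B^n$ be the sawtooth function on $[0,1]$ with $n$ teeth of height $\tfrac{1}{2n}$; then $\sup_t|B^n_t|\to 0$ (so ucp convergence to $0$), the total variations equal $1$ uniformly (so the sequence is P-UT), yet choosing the simple predictable $K^n=\pm 1$ on the monotonicity intervals gives $(K^n\bullet B^n)_1=1$, whence $d_{\mathbb{S}}(B^n,0)\geq 1$ and there is no Emery convergence. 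What P-UT together with ucp convergence actually yields is Proposition~\ref{prop:Memin} (M\'emin--S\l{}ominski): in the decomposition~\eqref{eq:decomp} the local martingale parts $M^{n,C}$ and the big-jump parts $\check{X}^{n,C}$ converge in the Emery topology, while the predictable finite variation parts $B^{n,C}$ converge only uniformly in probability. Upgrading the convergence of the drifts $B^{n,C}$ to Emery convergence (essentially total-variation convergence) is the genuinely hard step, and it is there --- not in the P-UT step --- that the maximality of $X_1$ in $\widehat{K_0^1}$ is needed: in Proposition~\ref{prop:kabanovmax} one assumes $(B^{n,C})$ is not Emery--Cauchy, switches between $X^{i_k}$ and $X^{j_k}$ on the drift-comparison sets $\Gamma_k=\{r^{i_k}\geq r^{j_k}\}$ (this is where the concatenation Lemma~\ref{lem:concatenation} and the stopping/admissibility Lemma~\ref{lem:admissibility} enter, and why the sequence must sit in the dense set $\bigcup_{n\geq 1}\mathcal{X}_1^n$), and then Kabanov's Lemma A plus forward convex combinations produce an element of $\widehat{K_0^1}$ strictly dominating $h_0$, contradicting maximality.

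Correspondingly, you place the hypotheses in the wrong step: in the paper the P-UT property requires neither maximality nor concatenation --- Proposition~\ref{prop:NUPBR-PUT} gives it from (NUPBR) alone, for an arbitrary sequence in $\mathcal{X}_1$, along the lines of the $L^0$-boundedness reduction you sketch. So your proposal spends its effort where the problem is easy and never addresses the convergence of the finite variation parts, which is the actual content of the theorem; that the drift genuinely can fail to behave without such an argument is illustrated by the example of Section~\ref{sec:differenceclosure}, where Emery limits of integrals of martingales acquire the drift $A_t=t$.
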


\begin{proof}
Due to Proposition~\ref{prop:NUPBR-PUT}, (NUPBR) implies the (P-UT) property of $(X^n)$. Hence the theorem is a consequence of Proposition~\ref{prop:Memin} and Proposition~\ref{prop:kabanovmax} below.
\end{proof}

For a sequence of semimartingales $(X^n)_{n \geq 0}$ with $X^n_0=0$ and some $C >0$ let us consider the following decomposition
\begin{align}\label{eq:decomp}
X^n= B^{n,C}+M^{n,C}+\check{X}^{n,C},
\end{align}
where  $\check{X}^{n,C}= \sum_{s\leq t} \Delta X^n_s 1_{\{|\Delta X^n_s| > C\}}$,
$ B^{n,C}$ is the predictable finite variation part and  $M^{n,C}$ the local martingale part of the canonical decomposition of the special semimartingale $X^n-\check{X}^{n,C}$.

\begin{proposition}\label{prop:kabanovmax}
Let $\mathcal{X}_1=\overline{\bigcup_{n\geq 1}\mathcal{X}^n_1}^{\mathbb{S}}$ satisfy (NUPBR) and let $ (X^n)_{n \geq 0} \in  \bigcup_{n\geq 1}\mathcal{X}^n_1$ be a sequence of semimartingales, which converges uniformly in probability to $X$ such that $X_1$ is a maximal element in $\widehat{K_0^1}$. (Here, $\widehat{K_0^1}$ denotes the closure of $K_0^1$ in $L^0$.) Consider the semimartingale decompositions of form~\eqref{eq:decomp} for $(X^n)$ and $X$ and
assume that $M^{n,C} \to M^C$ and $\check{X}^{n,C} \to \check{X}^{C}$ in the Emery topology. Then $B^{n,C}\to B^C$ in the Emery topology.
\end{proposition}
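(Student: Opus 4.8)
\textbf{Proof plan for Proposition~\ref{prop:kabanovmax}.}

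The plan is to reduce the convergence of the finite-variation parts $B^{n,C}$ to the already-assumed convergence of $X^n$ (uniformly in probability, hence we will upgrade it), $M^{n,C}$, and $\check X^{n,C}$, using the maximality of $X_1$ to rule out the only obstruction. Since $B^{n,C}=X^n-M^{n,C}-\check X^{n,C}$ by \eqref{eq:decomp}, and the last two terms converge in the Emery topology by hypothesis, it suffices to show $X^n\to X$ in the Emery topology; then $B^{n,C}\to X-M^C-\check X^C=:B^C$ automatically, and $B^C$ is the predictable finite-variation part of the canonical decomposition of $X-\check X^C$ because Emery convergence of the special semimartingales $X^n-\check X^{n,C}$ is compatible with passing the canonical decomposition to the limit (the decomposition map is continuous on the relevant set under (P-UT), cf.\ the Mémin-type result invoked in \thmref{th:Emeryconv}). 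So the whole content is: \emph{a uniformly-in-probability convergent maximal sequence $X^n\to X$ in $\bigcup_n\mathcal X_1^n$ converges in the Emery topology.}

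First I would record what (P-UT) gives us: by Proposition~\ref{prop:NUPBR-PUT}, (NUPBR) implies $(X^n)$ is predictably uniformly tight, and this is precisely the hypothesis under which uniform-in-probability convergence can fail to be Emery convergence only through a ``loss of mass'' in the drift part — more precisely, the general Emery-convergence criterion (the Mémin-type statement, Proposition~\ref{prop:Memin}) reduces Emery convergence of a (P-UT) sequence with $X^n\to X$ u.c.p.\ to checking that no extra drift is created in the limit, equivalently that the limiting process $X$ does not strictly dominate the ``true'' limit of the stochastic-integral parts. Concretely, one decomposes as in \eqref{eq:decomp}, extracts a subsequence along which $M^{n,C}$ and $\check X^{n,C}$ converge in the Emery topology (possible by (P-UT) and the tightness of jumps), and is left to control $B^{n,C}$.

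The main obstacle — and the step where maximality of $X_1$ enters — is exactly here: a priori the drifts $B^{n,C}$ could converge (after passing to a further subsequence, using that they are predictable of uniformly controlled variation by (P-UT)) to some limit $\bar B$ that differs from the ``expected'' drift, and this discrepancy would be an increasing (nonnegative-direction) process $\bar B - B^C \geq 0$ that one could add to $X$ while staying in $\widehat{K_0^1}$, thereby producing a terminal value $X_1 + (\bar B-B^C)_1 \geq X_1$ strictly larger with positive probability — contradicting that $X_1$ is maximal in $\widehat{K_0^1}$. Making this rigorous is the crux: I would argue that any Emery-cluster point of $(X^n)$ has terminal value in $\widehat{K_0^1}$ and dominates $X_1$ (because $X^n_1\to X_1$ u.c.p.\ forces the terminal values to agree, while the possible extra drift is nonnegative), so maximality forces the extra drift to vanish, i.e.\ the cluster point is $X$ itself; since every subsequence has a further subsequence Emery-converging to $X$, the whole sequence does. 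This is the pattern of~\cite[Lemma 4.5 and its sequel]{DS:94} and of~\cite{CT:14}, transported to the large-market setting, the only new point being that the $X^n$ live in $\bigcup_n\mathcal X_1^n$ rather than a single $\mathcal X_1^A$ — but that plays no role here, since (P-UT) and maximality are properties of the sequence $(X^n)$ and its terminal values, not of which small market each $X^n$ belongs to. Once $X^n\to X$ in the Emery topology is established, $B^{n,C}=X^n-M^{n,C}-\check X^{n,C}\to X-M^C-\check X^C$ in the Emery topology and the identification $B^C=X-M^C-\check X^C$ follows from continuity of the canonical decomposition under Emery convergence for (P-UT) sequences, completing the proof.
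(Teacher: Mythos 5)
Your reduction ``it suffices to show $X^n\to X$ in the Emery topology'' is logically fine, but the mechanism you propose for that step does not work, and it misses where the real difficulty lies. Proposition~\ref{prop:Memin} already gives $B^{n,C}\to B^C$ \emph{uniformly in probability}, so there is no danger of the drifts converging to a ``wrong'' limit $\bar B\neq B^C$: any limit point in any reasonable sense must be $B^C$. The failure mode the proposition must exclude is that $(B^{n,C})$ does not converge in the Emery topology at all, i.e.\ that the total variations $\int_0^1 d|B^{n,C}_s-B^{m,C}_s|$ stay bounded away from zero (oscillating drifts), even though the processes converge uniformly in probability. Consequently your maximality argument is vacuous as stated: an Emery-cluster point of $(X^n)$ would have terminal value $X_1$ (forced by the u.c.p.\ convergence, as you yourself note), so there is no ``extra terminal drift'' to contradict maximality with; and there is no justification for the claimed nonnegativity of the discrepancy $\bar B-B^C$, nor for the subsequence extraction — (P-UT) gives tightness of variations, not sequential precompactness of the drift parts in the (non-locally-convex) Emery topology, and no convergent subsequence of $(B^{n,C})$ is extracted in the paper either.

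The paper's proof converts the \emph{non-Cauchyness in total variation} into a free terminal gain by a concrete switching construction, and this is the content you would need to supply. One picks indices $i_k,j_k$ with $P[\int_0^1 d|B^{i_k,C}_s-B^{j_k,C}_s|>2\gamma]\geq 2\gamma$, takes a dominating increasing process $B$ with densities $r^{i_k},r^{j_k}$, sets $\Gamma_k=\{r^{i_k}\geq r^{j_k}\}$, and forms $\bar X^k=1_{\Gamma_k}\bullet X^{i_k}+1_{(\Gamma_k)^c}\bullet X^{j_k}$ — always riding the process with the larger drift density. This uses the concatenation property, which is only available on the dense subset $\bigcup_n\mathcal X_1^n$ (Lemma~\ref{lem:concatenation}); admissibility of the switched portfolio is salvaged by stopping at $\sigma_k$ (Lemma~\ref{lem:admissibility}), with $P[\sigma_k<\infty]\to 0$ thanks to the assumed Emery convergence of $M^{n,C}+\check X^{n,C}$. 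The harvested drift $\xi_k=(1_{\Gamma_k}(r^{i_k}-r^{j_k})\bullet B)_{1\wedge\sigma_k}\geq 0$ does not vanish in probability, and Kabanov's Lemma~A applied to $\xi_k$, together with forward convex combinations, produces elements of $(1+\alpha_k)\mathcal X_1$ whose terminal values converge to $h_0+\eta$ with $\eta\gneq 0$, contradicting maximality of $h_0$. None of this switching construction, the admissibility control, or the use of forward convex combinations appears in your plan, and the step where you invoke maximality would not produce a contradiction; so the proposal has a genuine gap at its central step.
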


\begin{proof}
We adapt the proof of Proposition 5.5 in~\cite{CT:14} in view of the weakened concatenation property as of Remark~\ref{rem:concatenation}. Let $(X^n) \in \bigcup_{n\geq 1}\mathcal{X}_1^n$. Then $X^n \in \mathcal{X}_1^{A^n}$ for finite subsets $A^n\subseteq I$. Define $Y^n:=M^{n,C}+\check{X}^{n,C}$, which converges in the Emery topology by assumption.

Assume by contradiction that $(B^{n,C})$ does not converge in the Emery topology. Then it is not a Cauchy sequence and there exists $i_k, j_k \to \infty$ such that
\[
P\left[\int_0^1 d|B_s^{i_k,C}-B_s^{j_k,C}| > 2\gamma\right]\geq 2 \gamma >0.
\]

Let $B$ be a predictable increasing process dominating all $B^{n,C}$. Set $r^{i_k}:=\frac{dB^{i_k,C}}{dB},\, r^{j_k}:=\frac{dB^{j_k,C}}{dB}$ and $\Gamma_k:=\{r^{i_k}\geq r^{j_k}\}$.

Then we may conclude -- by assuming $i_k \wedge j_k \geq i_{k-1} \vee j_{k-1}$ and possibly interchanging $i_k$ and $j_k$ -- that
\[
P\left[((r^{i_k}-r^{j_k})1_{\Gamma_k} \bullet B)_1 > \gamma\right]\geq  \gamma >0.
\]
Take $\alpha_k \downarrow 0$ and define $\bar{X}^k:=1_{\Gamma_k}\bullet X^{i_k}+1_{(\Gamma_k)^c}\bullet X^{j_k}$. Note that by Lemma~\ref{lem:concatenation} $\bar{X}^k \in \mathcal{X}_1^{A^{i_k}\cup A^{j_k}}$.
Let us define
\[
\sigma_k=\inf\{ t\, |\, (1_{\Gamma_k}\bullet Y^{i_k})_t+(1_{(\Gamma_k)^c}\bullet Y^{j_k})_t < Y_t^{i_k} \vee Y_t^{j_k}-\alpha_k\}.
\]
Note that $\bar{Y}^{k}-Y^{i_k}=1_{(\Gamma_k)^c}\bullet (Y^{j_k}-Y^{i_k})$ and $\bar{Y}^k-Y^{j_k}=1_{\Gamma_k}\bullet (Y^{i_k}-Y^{j_k})$
converges to $0$ in the Emery topology and thus also uniformly in probability. We  may therefore take $(i_k,j_k)$ growing fast enough to ensure $P[\sigma_k <\infty] \to 0$.
Set now $\widetilde{X}^k:=1_{[0,\sigma_k]}\bullet\bar{X}^k$.
By Lemma~\ref{lem:admissibility}, $\widetilde{X}^k\in (1+\alpha_k) \mathcal{X}^{A^{i_k}\cup A^{j_k}}_1\subset (1+\alpha_k) \bigcup_{n\geq 1}\mathcal{X}_1^n$ and we have the following representation
\begin{align*}
\widetilde{X}^k_1&=(1_{\Gamma_k \cap [0, \sigma_k]}\bullet X^{i_k})_1+(1_{(\Gamma_k)^c \cap [0, \sigma_k]}\bullet X^{j_k})_1\\
&=X^{j_k}_{1\wedge \sigma_k}+(1_{\Gamma_k \cap [0, \sigma_k]}\bullet(X^{i_k}-X^{j_k}))_1\\
& =X^{j_k}_{1\wedge \sigma_k}+(1_{\Gamma_k \cap [0, \sigma_k]}\bullet (Y^{i_k}-Y^{j_k}))_1+\xi_k,
\end{align*}
where $\xi_k=(1_{\Gamma_k\cap [0, \sigma_k]}\bullet(B^{i_k,C}-B^{j_k,C}))_1=(1_{\Gamma_k}(r^{i_k}-r^{j_k})\bullet B)_{1\wedge \sigma_k}$.
Applying~\cite[Lemma A]{kab:97} to $\xi_k$, implies that forward convex combination of $\xi_k$ converge to a random variable $\eta\geq 0$ with $\eta \neq 0$. Denoting the maximal element to which $X_1^{j_k}$ converges by $h_0 \in \widehat{K}_0^1$, it follows -- by the Emery convergence of $(Y^n)$ -- that forward convex combinations of $\widetilde{X}_1^k$ converge to $h_0 +\eta$. Since $\widetilde{X}^k\in (1+\alpha_k)\mathcal{X}_1$, this yields a contradiction to the maximality of $h_0 \in \widehat{K}_0^1$.
\end{proof}

\begin{lemma}\label{lem:admissibility}
Let $X^k\in \mathcal{X}_1^{A^k}, X^l \in \mathcal{X}^{A^l}_1$ (hence $X^k, X^l \in \mathcal{X}^{A^{k}\cup A^{l}}$) and $\alpha >0$. Consider the decomposition~\eqref{eq:decomp} for $X^k$ and $X^l$, i.e.,
\[
X^i=B^{i,C}+M^{i,C}+\check{X}^{i,C}=:B^{i,C}+Y^i, \quad  i=k,l.
\]
Let $B$ be a predictable increasing process dominating $B^{i,C}$. Set $r^i=\frac{dB^{i,C}}{dB}$ for $i=k,l$ and
\[
\sigma:=\inf\{t \, |\, (1_{\Gamma}\bullet Y^k)_t+(1_{\Gamma^c} \bullet Y^l)_t < Y^k_t \vee Y^l_t-\alpha\},
\]
where $\Gamma=\{r^k \geq r^l\}$. Then
\[
\widetilde{X}=1_{\Gamma \cap [0, \sigma]}\bullet X^k+1_{\Gamma^c \cap [0, \sigma]}\bullet X^l \in (1+\alpha)\mathcal{X}^{A^{k}\cup A^{l}}_1 \subseteq(1+\alpha) \bigcup_{n\geq 1}\mathcal{X}_1^n\subseteq (1+\alpha)\mathcal{X}_1
\]
\end{lemma}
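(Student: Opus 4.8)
The plan is to verify the two defining requirements for membership in $(1+\alpha)\mathcal{X}_1^{A^k\cup A^l}$: first, that $\widetilde{X}$ is obtained from $X^k,X^l\in\mathcal{X}_1^{A^k\cup A^l}$ via an admissible concatenation operation covered by Definition~\ref{setting}, and second, that $\widetilde{X}\geq-(1+\alpha)$ uniformly in $(\omega,t)$. The final inclusion into $(1+\alpha)\bigcup_{n\geq1}\mathcal{X}_1^n$ and then into $(1+\alpha)\mathcal{X}_1$ is then immediate from \eqref{eq:X1-n-bonds} and Definition~\ref{largeX1}.

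First I would observe that $H:=1_{\Gamma\cap[0,\sigma]}$ and $G:=1_{\Gamma^c\cap[0,\sigma]}$ are bounded, predictable, nonnegative and satisfy $HG=0$, since $\Gamma$ and $\Gamma^c$ are disjoint; here $\Gamma$ is predictable because $r^k,r^l$ are predictable (Radon--Nikodym densities with respect to the predictable increasing process $B$), and $[0,\sigma]$ is a predictable stochastic interval because $\sigma$ is the début of a predictable set, being the first time the left-continuous predictable process $(1_\Gamma\bullet Y^k)+(1_{\Gamma^c}\bullet Y^l)-(Y^k\vee Y^l)$ drops below $-\alpha$. Hence $\widetilde X=(H\bullet X^k)+(G\bullet X^l)$ has exactly the form appearing in the concatenation property of Definition~\ref{setting}, applied to the single small market $A^k\cup A^l$ (which lies in $\bigcup_{n\geq1}\mathcal{A}^n$ by the union-closure assumption in the setting). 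The only thing that remains, in order to invoke that property for $(1+\alpha)\mathcal{X}_1^{A^k\cup A^l}$ — equivalently, to invoke it for $\mathcal{X}_1^{A^k\cup A^l}$ after rescaling by $1/(1+\alpha)$ — is the lower bound $\widetilde X\geq-(1+\alpha)$.

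The main work, and the step I expect to be the real obstacle, is this lower bound. The idea is that on $[0,\sigma)$ the stopped process $\widetilde X$ stays within $\alpha$ of $Y^k\vee Y^l$, hence within $\alpha$ of $X^k\vee X^l\geq-1$ up to the contribution of the finite-variation parts $B^{i,C}$; one must control those too. Concretely, writing $\widetilde X_t=B^{\widetilde{},C}_t+\widetilde Y_t$ with $\widetilde Y=(H\bullet Y^k)+(G\bullet Y^l)$ and $B^{\widetilde{},C}=(H\bullet B^{k,C})+(G\bullet B^{l,C})=(1_\Gamma(r^k-r^l)1_{[0,\sigma]}\bullet B)+(1_{[0,\sigma]}r^l\bullet B)$, one checks on the set $\Gamma$ that the increment of $B^{\widetilde{},C}$ equals $r^k\,dB$, which is exactly the increment of $B^{k,C}$, and on $\Gamma^c$ it equals $r^l\,dB$, the increment of $B^{l,C}$; so the finite-variation part of $\widetilde X$ is ``pieced together'' from those of $X^k$ and $X^l$ on the respective sets. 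Combining this with the definition of $\sigma$, one gets for $t<\sigma$ the pointwise estimate $\widetilde X_t\geq (Y^k_t\vee Y^l_t)-\alpha+(\text{the pieced-together }B\text{-part})\geq (X^k_t\vee X^l_t)-\alpha\geq-1-\alpha$, using $X^k,X^l\geq-1$; at $t=\sigma$ one absorbs the jump into the $\alpha$ via the usual closed-stochastic-interval argument as in~\cite{kab:97,CT:14}. This reduces exactly to the estimate in the proof of the analogous lemma in~\cite{CT:14}, and I would carry it out verbatim in the one-market-$A^k\cup A^l$ setting; the weakened concatenation property of Remark~\ref{rem:concatenation} costs nothing here because everything happens inside the single small market $\mathcal{X}_1^{A^k\cup A^l}$, where the full concatenation property of Definition~\ref{setting} is available.
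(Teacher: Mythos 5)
Your proposal is correct and follows essentially the same route as the paper: the concatenated drift $1_{\Gamma}\bullet B^{k,C}+1_{\Gamma^c}\bullet B^{l,C}$ dominates $B^{k,C}\vee B^{l,C}$ because $\Gamma=\{r^k\geq r^l\}$ selects the larger density, which together with the definition of $\sigma$ yields $\widetilde{X}\geq X^k\vee X^l-\alpha\geq -1-\alpha$ on $[0,\sigma[$, the jump at $\sigma$ is treated exactly as you indicate, and the small-market concatenation property for $A^k\cup A^l$ (after rescaling by $1/(1+\alpha)$) completes the argument. One cosmetic slip: the process defining $\sigma$ is c\`adl\`ag rather than left-continuous, but $\sigma$ is still a stopping time by the d\'ebut theorem and $[0,\sigma]$ is predictable for any stopping time, so nothing in your plan is affected.
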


\begin{proof}
Notice that $1_{\Gamma} \bullet B^{k,C}+1_{\Gamma^c} \bullet B^{l,C} \geq B^{k,C} \vee B^{l,C}$. Thus on $[0,\sigma[$
\[
\widetilde{X}\geq B^{k,C} \vee B^{l,C}+Y^k \vee Y^l-\alpha\geq (B^{k,C}+Y^k)\vee(B^{l,C}+Y^l)-\alpha=X^k \vee X^l -\alpha.
\]
At time $\sigma$ the jump of $\Delta\widetilde{X}$ is given by $\Delta\widetilde{X}_{\sigma}=1_{\Gamma }\Delta X_{\sigma}^k+1_{\Gamma^c }\Delta X_{\sigma}^l$ and hence $\widetilde{X}_{\sigma}\geq -1-\alpha$, because the left limit is at least $X^k \vee X^l -\alpha$.
\end{proof}

\appendix

\section{Some technical results}\label{sec:app}

\begin{definition}
A positive c\`adl\`ag adapted process $ D $ is called supermartingale deflator for $ 1 + \mathcal{X}_1 $ if $D_0 \leq 1$ and $ D(1+X) $ is a supermartigale for all $ X \in \mathcal{X}_1$.
\end{definition}

\begin{theorem}\label{th:supermartingaledeflator}
Let $ \mathcal{X}_1  $ be a set of semimartingales satisfying Definition \ref{largeX1} and (NUPBR), then there exists a supermartingale deflator $ D $ for $ 1+\mathcal{X}_1$.
\end{theorem}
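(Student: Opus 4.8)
The plan is to reduce the statement to the Emery-dense subset $\bigcup_{n\geq1}\mathcal{X}_1^n$ and then to build $D$ in the form $D_t=Z^{\mathbb{Q}}_t/(1+\hat X_t)$, where $\hat X$ is a ``num\'eraire portfolio'' for the large market and $Z^{\mathbb{Q}}$ is the density process of an auxiliary equivalent probability $\mathbb{Q}$ under which the terminal wealths are $L^1$-bounded; the process $\hat X$ is in turn produced by patching together the classical num\'eraire portfolios of the small markets $\mathcal{X}_1^A$ along the directed family $\{\mathcal{X}_1^A\}_{A\in\bigcup_n\mathcal{A}^n}$.

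First I would show it suffices to find a positive c\`adl\`ag adapted $D$ with $D_0\leq1$ such that $D(1+X)$ is a supermartingale for every $X\in\bigcup_{n\geq1}\mathcal{X}_1^n$. Given such a $D$, applying this with $X=0$ (which lies in every $\mathcal{X}_1^n$) shows that $D=D(1+0)$ is itself a nonnegative supermartingale, whence $\sup_{t\leq1}D_t<\infty$ a.s.\ by the supermartingale maximal inequality. For an arbitrary $X\in\mathcal{X}_1$ pick $X^k\in\bigcup_{n\geq1}\mathcal{X}_1^n$ with $X^k\to X$ in the Emery topology; then $|X^k-X|^*_1\to0$ in probability, so $\sup_{t\leq1}|D_t(1+X^k_t)-D_t(1+X_t)|\leq(\sup_{t\leq1}D_t)\,|X^k-X|^*_1\to0$ in probability, i.e.\ $D(1+X^k)\to D(1+X)$ uniformly in probability. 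Passing to a subsequence along which this convergence holds a.s., the conditional Fatou lemma applied to the nonnegative supermartingales $D(1+X^k)$ yields $\mathbb{E}[D_t(1+X_t)\mid\mathcal{F}_s]\leq\liminf_k\mathbb{E}[D_t(1+X^k_t)\mid\mathcal{F}_s]\leq\liminf_k D_s(1+X^k_s)=D_s(1+X_s)$, so $D(1+X)$ is a supermartingale and $D$ is a deflator for $1+\mathcal{X}_1$.

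For the construction I would first note that $\bigcup_{n\geq1}\mathcal{X}_1^n$ is convex: any two of its elements lie in a common set $\mathcal{X}_1^{A^1\cup A^2}$, which is convex by Definition~\ref{setting}. Hence $\mathcal{X}_1$ is convex, and by (NUPBR) the set $\{1+X_1:X\in\mathcal{X}_1\}\subseteq L^0_{\geq0}$ is convex and bounded in $L^0$. A standard change-of-measure lemma for convex, $L^0$-bounded subsets of $L^0_{\geq0}$ then provides $\mathbb{Q}\sim\mathbb{P}$ with $d\mathbb{Q}/d\mathbb{P}\in L^\infty$ and $\sup_{X\in\mathcal{X}_1}\mathbb{E}_{\mathbb{Q}}[1+X_1]<\infty$; put $Z^{\mathbb{Q}}_t=\mathbb{E}[d\mathbb{Q}/d\mathbb{P}\mid\mathcal{F}_t]$. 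Each small market $\mathcal{X}_1^A$ carries the classical concatenation property of Definition~\ref{setting} and inherits (NUPBR) from the large market, so by the theory of the num\'eraire portfolio (see~\cite{KK:07} and its refinements) there exists $\hat X^A\in\mathcal{X}_1^A$ (or in its Emery-closure, hence in $\mathcal{X}_1$) with $1+\hat X^A>0$ such that $\frac{1+X}{1+\hat X^A}$ is a $\mathbb{Q}$-supermartingale for all $X\in\mathcal{X}_1^A$; moreover, for $A\subseteq A'$ one has $\hat X^A\in\mathcal{X}_1^{A'}$, so $\frac{1+\hat X^A}{1+\hat X^{A'}}$ is a $\mathbb{Q}$-supermartingale, i.e.\ the family $(1+\hat X^A)_A$ is ``increasing'' along the directed set. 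Using the $L^0$-boundedness of $\{1+X_1\}$ as the compactness input, together with a Koml\'os/maximality argument along this directed family (and, if $I$ is uncountable, an asymptotically maximal increasing sequence $A_1\subseteq A_2\subseteq\cdots$), I would obtain $\hat X\in\mathcal{X}_1$ with $1+\hat X>0$ such that $\frac{1+X}{1+\hat X}$ is a $\mathbb{Q}$-supermartingale for every $X\in\bigcup_{n\geq1}\mathcal{X}_1^n$. Setting $D_t:=Z^{\mathbb{Q}}_t/(1+\hat X_t)$ gives a positive c\`adl\`ag adapted process with $D_0=Z^{\mathbb{Q}}_0/(1+\hat X_0)=1\leq1$, and $D(1+X)=Z^{\mathbb{Q}}\cdot\frac{1+X}{1+\hat X}$ is a $\mathbb{P}$-supermartingale because multiplication by $Z^{\mathbb{Q}}$ sends nonnegative $\mathbb{Q}$-supermartingales to $\mathbb{P}$-supermartingales (Bayes' rule, $Z^{\mathbb{Q}}>0$). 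By the reduction above, $D$ is a supermartingale deflator for $1+\mathcal{X}_1$.

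The step I expect to be the main obstacle is the patching: the concatenation property is available only on the dense subset $\bigcup_{n\geq1}\mathcal{X}_1^n$ (Remark~\ref{rem:concatenation}), and when $I$ is uncountable the directed family $\{\mathcal{X}_1^A\}$ admits no countable cofinal chain, so producing a single $\hat X\in\mathcal{X}_1$ that simultaneously dominates every $X\in\bigcup_{n\geq1}\mathcal{X}_1^n$ has to be done with care — exploiting (NUPBR)-boundedness for compactness and the directedness of $\{\mathcal{X}_1^A\}$ via Lemma~\ref{lem:concatenation} to make rigorous the principle ``a larger small market has a better num\'eraire portfolio'' and to control the resulting limit. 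The reduction to the dense subset and the change of measure are routine.
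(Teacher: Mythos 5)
Your outer skeleton---construct the deflator on the Emery-dense set $\bigcup_{n\geq 1}\mathcal{X}_1^n$, where the genuine concatenation property of Lemma~\ref{lem:concatenation} is available, and then pass to the closure $\mathcal{X}_1$ by a conditional Fatou argument along a uniformly-in-probability convergent subsequence---is exactly the paper's strategy, and that reduction is fine. The gap is in the construction on the dense set. The paper does not build the deflator by hand: it checks that $\bigcup_{n\geq 1}\mathcal{X}_1^n$ is \emph{fork convex} (this is where Lemma~\ref{lem:concatenation} enters) and bounded in probability by (NUPBR), and then invokes \cite[Lemma 2.3]{kar-closure:13} through the proof of Theorem 4.5 of \cite{CT:14}, which yields a supermartingale deflator for abstract fork-convex, $L^0$-bounded wealth-process sets in one stroke---in particular with no need for a countable cofinal chain in the directed family $\{\mathcal{X}_1^A\}$. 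Your proposal replaces this citation by a hands-on patching of small-market num\'eraire portfolios, and the step you yourself flag as the main obstacle---producing a single $\hat X\in\mathcal{X}_1$ such that $(1+X)/(1+\hat X)$ is a supermartingale simultaneously for every $X\in\bigcup_{n\geq 1}\mathcal{X}_1^n$, when $I$ may be uncountable---is precisely the content of the cited result and is not carried out: ``a Koml\'os/maximality argument along this directed family'' is a plan, not a proof. The argument is therefore incomplete exactly where the difficulty sits.

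Two further points would need repair even in a completed version of your route. First, \cite{KK:07} concerns wealth processes that are stochastic integrals with respect to a fixed finite-dimensional semimartingale, whereas $\mathcal{X}_1^A$ is an abstract Kabanov-type set as in Definition~\ref{setting}; in this generality one does not get a num\'eraire portfolio $\hat X^A$ lying \emph{in} $\mathcal{X}_1^A$, only a deflator or an optimizer in an $L^0$/Fatou closure. Consequently your monotonicity claim---that for $A\subseteq A'$ one has $\hat X^A\in\mathcal{X}_1^{A'}$, so that $(1+\hat X^A)/(1+\hat X^{A'})$ is a supermartingale---does not follow, since the deflator property of $1/(1+\hat X^{A'})$ is only known against elements of $\mathcal{X}_1^{A'}$, not against such limit objects. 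Second, the change of measure to $\mathbb{Q}$ is harmless but superfluous: the num\'eraire/deflator property is a statement under $P$, and working under $\mathbb{Q}$ does not ease the patching. If you want a self-contained argument rather than the paper's citation, the economical path is to verify fork convexity of $\bigcup_{n\geq 1}\mathcal{X}_1^n$ directly from Lemma~\ref{lem:concatenation} and quote \cite[Lemma 2.3]{kar-closure:13}; your Fatou extension then concludes as in the paper.
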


\begin{proof}
As in the proof of Theorem 4.5 in~\cite{CT:14} the assertion follows from~\cite[Lemma 2.3]{kar-closure:13}. Indeed, following closely the proof of Theorem 4.5 in~\cite{CT:14}, and using the variant of the concatenation property of Lemma~\ref{lem:concatenation}, it is easy to see that \emph{fork convexity} holds for $\bigcup_{n\geq 1}\mathcal{X}_1^n$. Therefore, there exists a supermartingale deflator $D$ for $\bigcup_{n\geq 1}\mathcal{X}_1^n$. By Fatou's lemma $D$ is a supermartingale deflator for the Emery closure $\mathcal{X}_1$ as well.
\end{proof}

The following proposition corresponds to~\cite[Proposition 4.10]{CT:14} and its proof is completely analogous in the present setting.

\begin{proposition}\label{prop:NUPBR-PUT}
Let $\mathcal{X}_1$ satisfy (NUPBR) and let $ (X^n)_{n \geq 0} \in \mathcal{X}_1 $ be \emph{any} sequence of semimartingales. Then $(X^n)$ satisfies the (P-UT) property.
\end{proposition}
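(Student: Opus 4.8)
The goal is to show that (NUPBR) for $\mathcal{X}_1$ forces the (P-UT) property for any sequence $(X^n)\subset\mathcal{X}_1$. The plan is to run the exact argument of~\cite[Proposition 4.10]{CT:14}, whose backbone is the characterization of (P-UT) by Str\'{i}cker (cf.\ the criterion used in~\cite{mem:80}): a sequence $(X^n)$ of semimartingales is predictably uniformly tight if and only if the family of elementary stochastic integrals
\[
\{(K\bullet X^n)_1 \;:\; n\in\mathbb N,\ K\in b\mathcal E,\ \|K\|_\infty\leq 1\}
\]
is bounded in $L^0$. So first I would reduce the statement to verifying this $L^0$-boundedness.

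The key step is then to use the supermartingale deflator. By Theorem~\ref{th:supermartingaledeflator}, (NUPBR) yields a strictly positive c\`adl\`ag supermartingale deflator $D$ for $1+\mathcal{X}_1$, i.e.\ $D_0\le 1$ and $D(1+X)$ is a supermartingale for every $X\in\mathcal{X}_1$. Fix a simple predictable $K$ with $\|K\|_\infty\le1$ and $X^n\in\mathcal{X}_1$. Write $K=K^+-K^-$; since $K^\pm\ge 0$ and $X^n\ge -1$, the concatenation/fork-convexity structure (as used for $\bigcup_n\mathcal X_1^n$ and then passed to the Emery closure exactly as in the proof of Theorem~\ref{th:supermartingaledeflator}) shows that $(K^\pm\bullet X^n)$, suitably scaled, again lies in a fixed multiple of $\mathcal{X}_1$; in particular $1+(K^\pm\bullet X^n)$ is dominated by an element of $1+2\mathcal{X}_1$ and hence $D\cdot(1+(K^\pm\bullet X^n))$ (up to the harmless factor) is a nonnegative supermartingale started at $\le$ a fixed constant. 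A nonnegative supermartingale with bounded initial value stays bounded in $L^0$ uniformly in time (Markov/maximal inequality: $P[\sup_t Z_t\ge \lambda]\le Z_0/\lambda$), so $D_1(1+(K^\pm\bullet X^n)_1)$ is bounded in $L^0$ uniformly in $n$ and $K$. Since $D$ is fixed and strictly positive a.s., dividing by $D_1$ preserves $L^0$-boundedness, and therefore $(K\bullet X^n)_1 = (K^+\bullet X^n)_1-(K^-\bullet X^n)_1$ is bounded in $L^0$, uniformly over $n$ and over $K$ with $\|K\|_\infty\le 1$. This is exactly the criterion for (P-UT).

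The main obstacle is the bookkeeping that makes ``$(K^\pm\bullet X^n)$ lies in a scaled copy of $\mathcal{X}_1$'' precise: for general $X^n$ in the Emery closure we only have the weakened concatenation property of Lemma~\ref{lem:concatenation} (on the dense subset $\bigcup_n\mathcal X_1^n$), so one must first establish the statement for $X^n\in\bigcup_n\mathcal X_1^n$ --- where Lemma~\ref{lem:concatenation} and the admissibility computation of Lemma~\ref{lem:admissibility} apply directly --- and then pass to Emery limits, noting that $(K\bullet X^n)_1$ depends continuously on $X^n$ in the Emery topology (as in the computation $(K\bullet(H\bullet Y^k)) = (KH\bullet Y^k)$ used in the proof of Lemma~\ref{Y_in_X1}), and that $L^0$-boundedness of a family is inherited by a.s.\ limits of its members. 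Apart from this, every ingredient is either quoted (Theorem~\ref{th:supermartingaledeflator}, Str\'{i}cker's (P-UT) criterion) or is the supermartingale maximal inequality, so the proof is indeed ``completely analogous'' to~\cite[Proposition 4.10]{CT:14}, the only new point being the dense-subset-plus-Emery-limit reduction forced by Remark~\ref{rem:concatenation}.
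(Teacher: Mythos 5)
There is a genuine gap at the heart of your argument: the claim that for a simple predictable $K$ with $0\le K^{\pm}\le 1$ and $X^n\in\mathcal{X}_1$ the integrals $(K^{\pm}\bullet X^n)$ ``suitably scaled, again lie in a fixed multiple of $\mathcal{X}_1$'' is false, and neither the concatenation property nor fork convexity delivers it. The concatenation property (Definition~\ref{setting}, Lemma~\ref{lem:concatenation}) only asserts membership of $Z=(H\bullet X)+(G\bullet Y)$ in $\mathcal{X}^A_1$ \emph{under the hypothesis} $Z\ge -1$; it never produces admissibility. And admissibility is genuinely destroyed by time-localized participation: if $X\ge -1$ rises to a large level and then falls back to $-1$, and $K^{+}$ is switched on only during the fall, then $(K^{+}\bullet X)$ drops below any fixed level, so there is no $\lambda$ with $(K^{\pm}\bullet X^n)\in\lambda\mathcal{X}_1$ uniformly in $n$ and $K$. (This is exactly why (P-UT) is a nontrivial consequence of (NUPBR): if elementary integrals of $1$-admissible wealth processes were again admissible up to a fixed scaling, the statement would be almost immediate.) The paper's own devices for restoring admissibility -- the stopping time $\sigma$ in Lemma~\ref{lem:admissibility}, or the special structure of $H=\ind_D\ind_{]t,1]}$ in Lemma~\ref{Y_in_X1} -- rely on specific choices of the integrand or on stopping, and do not cover a general $K\in b\mathcal{E}$ with ${\|K\|}_\infty\le 1$. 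Your reduction of (P-UT) to $L^0$-boundedness of $\{(K\bullet X^n)_1\}$ and your dense-subset/Emery-limit bookkeeping are fine, but they are not where the difficulty lies; with the admissibility claim removed, the maximal inequality for $D(1+(K^{\pm}\bullet X^n))$ has nothing to apply to and the proof collapses.

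The route the paper actually invokes (the proof of \cite[Proposition 4.10]{CT:14}, to which it declares the present proof ``completely analogous'') avoids any statement about elementary integrals being admissible. One uses the supermartingale deflator $D$ of Theorem~\ref{th:supermartingaledeflator} to pass to $Z^n:=D(1+X^n)$, which are \emph{nonnegative supermartingales} with $E[Z^n_t]\le E[Z^n_0]\le 1$ uniformly in $n$ and $t$; a classical result (Stricker's criterion, see also M\'emin--S\l{}ominski~\cite{MS:91}) then gives (P-UT) for the sequence $(Z^n)$ directly -- the supermartingale structure controls \emph{all} elementary integrals via the Doob--Meyer decomposition, which is precisely the control your approach is missing. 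One then recovers $(1+X^n)=Z^n\cdot(1/D)$ and uses that (P-UT) is stable under multiplication by the fixed (strictly positive) semimartingale $1/D$, via integration by parts together with the tightness of $\sup_t Z^n_t$ coming from the maximal inequality. If you want to complete your write-up, you should replace the false admissibility step by this deflator--supermartingale--(P-UT) argument; the maximal inequality you quote is then used only to control $\sup_t Z^n_t$, not to bound elementary integrals of the $X^n$ themselves.
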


The following proposition is a reformulation of~\cite[Proposition 1.10]{MS:91} and corresponds to~\cite[Proposition 5.2]{CT:14}.

\begin{proposition}\label{prop:Memin}
Let $ (X^n)_{n \geq 0} $ be a sequence of semimartingales with $X^n_0=0$, which converges uniformly in probability to $X$.  Assume furthermore the (P-UT) property for this sequence and consider decompositions of form~\eqref{eq:decomp} for $(X^n)$ and $X$. Then there exists some $C >0$ such that $M^{n,C} \to M^C$ and $\check{X}^{n,C} \to \check{X}^{C}$ in the Emery topology and $B^{n,C}\to B^C$  uniformly in probability.
\end{proposition}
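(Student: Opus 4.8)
The plan is to obtain the statement from the Mémin--Słomiński stability theorem for canonical decompositions under predictable uniform tightness, \cite[Proposition 1.10]{MS:91}, which is essentially the general-semimartingale version of the assertion; the only input that has to be arranged by hand is the choice of the truncation level $C$.

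First I would fix $C$ ``generically''. The set of jump levels charged by the limit,
\[
N=\{c>0:\ P[\exists\,t\le 1,\ |\Delta X_t|=c]>0\},
\]
is at most countable: writing $\{(t,\omega):\Delta X_t(\omega)\neq 0\}$ as a countable union of graphs of stopping times $T_k$ and using that each law of $|\Delta X_{T_k}|$ has at most countably many atoms, $N$ is a countable union of countable sets. Pick any $C\in(0,\infty)\setminus N$; then, a.s.\ on $[0,1]$, $X$ has only finitely many jumps of modulus $\ge C$ and none of modulus exactly $C$. This is exactly the ``continuity of the truncation level'' needed so that the hard truncation $1_{\{|\cdot|>C\}}$ appearing in \eqref{eq:decomp} is compatible with passing to the limit: since $X^n\to X$ uniformly in probability and $X$ charges no jump onto the level $C$, with probability tending to one no jump of $X^n$ is ``borderline'' near $C$, so the large jumps of $X^n$ match those of $X$ in time and size, whence $\check X^{n,C}\to\check X^{C}$ and in fact in the Emery topology. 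In particular $X^n-\check X^{n,C}\to X-\check X^{C}$ uniformly in probability, these processes have jumps bounded by $C$ (hence are special semimartingales with canonical decompositions $B^{n,C}+M^{n,C}$ and $B^{C}+M^{C}$), and $(X^n-\check X^{n,C})$ again satisfies (P-UT) (a difference of (P-UT) sequences, $(\check X^{n,C})$ being (P-UT) by its Emery convergence).

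With $C$ so chosen, \cite[Proposition 1.10]{MS:91} applies to the (P-UT) sequence $(X^n-\check X^{n,C})$, which converges uniformly in probability: its predictable finite variation parts converge uniformly in probability, $B^{n,C}\to B^{C}$, and its local martingale parts converge in the Emery topology, $M^{n,C}\to M^{C}$. Together with $\check X^{n,C}\to\check X^{C}$ in the Emery topology this proves the proposition. The main obstacle is the first step: upgrading the large-jump convergence to the \emph{Emery} topology for a generic $C$, and, inside \cite[Proposition 1.10]{MS:91}, establishing that the martingale parts (which carry the oscillations) converge in the finer Emery topology even though the finite variation parts may converge only uniformly in probability. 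Everything else is routine, and it is precisely the possible failure of Emery convergence of $B^{n,C}$ that is repaired, using (NUPBR) and maximality, in Proposition~\ref{prop:kabanovmax}.
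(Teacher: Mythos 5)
Your proposal is correct and is essentially the paper's own route: the paper gives no independent argument for Proposition~\ref{prop:Memin}, presenting it precisely as a reformulation of \cite[Proposition 1.10]{MS:91} (in the form of \cite[Proposition 5.2]{CT:14}), which is exactly the reduction you carry out. Your added details --- the generic choice of the truncation level $C$ outside the countable set of charged jump sizes, the matching of large jumps under ucp convergence, and the preservation of (P-UT) for the truncated sequence --- are the standard ingredients implicit in that citation and are handled correctly.
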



\begin{thebibliography}{10}

\bibitem{CT:14}
C.~Cuchiero and J.~Teichmann.
\newblock A convergence result in the {E}mery topology and a variant of the
  proof of the fundamental theorem of asset pricing.
\newblock Preprint, arXiv:1406.4301, 2014.

\bibitem{DGP:05}
M.~De~Donno, P.~Guasoni, and M.~Pratelli.
\newblock Super-replication and utility maximization in large financial
  markets.
\newblock {\em Stochastic Process. Appl.}, 115(12):2006--2022, 2005.

\bibitem{DP:06}
M.~De~Donno and M.~Pratelli.
\newblock Stochastic integration with respect to a sequence of semimartingales.
\newblock In {\em In memoriam {P}aul-{A}ndr\'e {M}eyer: {S}\'eminaire de
  {P}robabilit\'es {XXXIX}}, volume 1874 of {\em Lecture Notes in Math.}, pages
  119--135. Springer, Berlin, 2006.

\bibitem{DP:07}
M.~De~Donno and M.~Pratelli.
\newblock On a lemma by {A}nsel and {S}tricker.
\newblock In {\em S\'eminaire de {P}robabilit\'es {XL}}, volume 1899 of {\em
  Lecture Notes in Math.}, pages 411--414. Springer, Berlin, 2007.

\bibitem{DS:94}
F.~Delbaen and W.~Schachermayer.
\newblock A general version of the fundamental theorem of asset pricing.
\newblock {\em Math. Ann.}, 300(3):463--520, 1994.

\bibitem{eme:79}
M.~Emery.
\newblock Une topologie sur l'espace des semimartingales.
\newblock In {\em S\'eminaire de {P}robabilit\'es, {XIII} ({U}niv.
  {S}trasbourg, {S}trasbourg, 1977/78)}, volume 721 of {\em Lecture Notes in
  Math.}, pages 260--280. Springer, Berlin, 1979.

\bibitem{I:87}
J.~Ingersoll.
\newblock {\em Theory of Financial Decision Making}.
\newblock Rowman \& Littlefield Studies in financial economics. Rowman \&
  Littlefield, 1987.

\bibitem{kab:97}
Y.~M. Kabanov.
\newblock On the {FTAP} of {K}reps-{D}elbaen-{S}chachermayer.
\newblock In {\em Statistics and control of stochastic processes ({M}oscow,
  1995/1996)}, pages 191--203. World Sci. Publ., River Edge, NJ, 1997.

\bibitem{KK:94}
Y.~M. Kabanov and D.~O. Kramkov.
\newblock Large financial markets: asymptotic arbitrage and contiguity.
\newblock {\em Teor. Veroyatnost. i Primenen.}, 39(1):222--229, 1994.

\bibitem{KK:98}
Y.~M. Kabanov and D.~O. Kramkov.
\newblock Asymptotic arbitrage in large financial markets.
\newblock {\em Finance Stoch.}, 2(2):143--172, 1998.

\bibitem{KK:07}
I.~Karatzas and C.~Kardaras.
\newblock The num\'eraire portfolio in semimartingale financial models.
\newblock {\em Finance Stoch.}, 11(4):447--493, 2007.

\bibitem{kar-closure:13}
C.~Kardaras.
\newblock On the closure in the {E}mery topology of semimartingale
  wealth-process sets.
\newblock {\em Ann. Appl. Probab.}, 23(4):1355--1376, 2013.

\bibitem{K:00}
I.~Klein.
\newblock A fundamental theorem of asset pricing for large financial markets.
\newblock {\em Math. Finance}, 10(4):443--458, 2000.

\bibitem{KS:96}
I.~Klein and W.~Schachermayer.
\newblock Asymptotic arbitrage in non-complete large financial markets.
\newblock {\em Teor. Veroyatnost. i Primenen.}, 41(4):927--934, 1996.

\bibitem{KST:13}
I.~Klein, T.~Schmidt, and J.~Teichmann.
\newblock When roll-overs do not qualify as num\'eraire: bond markets beyond
  short rate paradigms.
\newblock Preprint (available at \texttt{http://arxiv.org/abs/1310.0032}),
  2013.

\bibitem{K:81}
D.~M. Kreps.
\newblock Arbitrage and equilibrium in economies with infinitely many
  commodities.
\newblock {\em J. Math. Econom.}, 8(1):15--35, 1981.

\bibitem{mem:80}
J.~M{\'e}min.
\newblock {Espaces de semi martingales et changements de probabilit\'e.}
\newblock {\em Z. Wahrscheinlichkeitstheor. Verw. Geb.}, 52:9--39, 1980.

\bibitem{MS:91}
J.~M{\'e}min and L.~S\l{}ominski.
\newblock Condition {UT} et stabilit{\'e} en loi des solutions d'{\'e}quations
  diff{\'e}rentielles stochastiques.
\newblock {\em S{\'e}minaire de probabilit{\'e}s de Strasbourg}, 25:162--177,
  1991.

\bibitem{SW:99}
H.~H. Schaefer and M.~P. Wolff.
\newblock {\em Topological vector spaces}, volume~3 of {\em Graduate Texts in
  Mathematics}.
\newblock Springer-Verlag, New York, second edition, 1999.

\bibitem{Y:80}
J.~A. Yan.
\newblock Caract\'erisation d'une classe d'ensembles convexes de {$L^{1}$} ou
  {$H^{1}$}.
\newblock In {\em Seminar on {P}robability, {XIV} ({P}aris, 1978/1979)
  ({F}rench)}, volume 784 of {\em Lecture Notes in Math.}, pages 220--222.
  Springer, Berlin, 1980.

\end{thebibliography}

\end{document}